\documentclass[12pt]{article}

\usepackage{geometry}
\usepackage{setspace}
\usepackage{needspace}
\usepackage{todonotes}
\usepackage{soul,xcolor}

\usepackage{amssymb,amsmath,amsfonts,latexsym,amsthm,bbm}
\usepackage{mathptmx} 
\usepackage{graphicx}
\usepackage{color}
\usepackage[utf8]{inputenc}
\usepackage[english]{babel}
\usepackage{float}
\usepackage{tikz}
\usepackage{natbib}
\usepackage[colorlinks=true,linkcolor=black,anchorcolor=black,citecolor=blue,urlcolor=blue]{hyperref}
\usepackage{footnotebackref}
\usepackage{microtype}
\usepackage{csquotes}

\newtheorem{theorem}{Theorem}
\newtheorem{proposition}{Proposition}
\newtheorem{propositionprime}{Proposition}

\newtheorem{corollary}{Corollary}
\newtheorem{lemma}{Lemma}
\newtheorem{definition}{Definition}

\newtheorem{claim}{Claim}

\newtheorem{example}{Example}

\newcommand{\E}{\mathbb{E}}
\newcommand{\MPC}{\text{MPC}}
\newcommand{\DET}{\text{DET}}
\newcommand{\R}{\mathbb{R}}
\newcommand{\Prob}{\mathbb{P}}
\newcommand{\1}{\mathbf{1}}

\DeclareMathOperator{\supp}{supp}

\usepackage{caption}
\usepackage{subcaption}
\usepackage{booktabs}
\usepackage{multirow}
\usepackage{rotating}
\usepackage{tabularx}

\usepackage{pgfplots}
\pgfplotsset{compat=newest}
\usetikzlibrary{plotmarks}
\usetikzlibrary{arrows.meta,decorations.pathreplacing,patterns}
\usepackage{grffile}

\usepackage[title,titletoc,toc]{appendix}

\makeatletter
\newcommand\urlfootnote@[1]{\footnote{\url@{#1}}}
\DeclareRobustCommand{\urlfootnote}{\hyper@normalise\urlfootnote@}
\makeatother

\begin{document}
	
	\title{Deterministic Implementation of Mean-Preserving Contractions\thanks{I gratefully acknowledge financial support from the National Natural Science Foundation of China (Nos.\ 72403185 and 72433003). I used ChatGPT (GPT-6.1 Sol) to improve the clarity of writing and assist with checks of the draft's internal consistency. All remaining errors are mine.}}
	\author{Junze Sun\thanks{School of Economics and Management, Wuhan University. Email: \href{mailto:sunjunze@gmail.com}{sunjunze@gmail.com}.}}
	\date{This version: \today. Check \href{https://www.dropbox.com/scl/fi/2ob4wfhhj4h6o5uuj1dg0/Draft_MPC.pdf?rlkey=zk7jybfod108j3vw721x1t48g&st=6uwkz4n1&dl=0}{here} for the latest version. \\ \vspace*{1ex} Preliminary, comments are welcome.}
	\maketitle
	
	\begin{abstract}
		\noindent We study whether randomization is needed to generate a prescribed distribution of posterior means for a one-dimensional state. Drawing on backward Monge martingale transport, we show that every mean-preserving contraction of an atomless prior with finite first moment admits a deterministic implementation. Consequently, restricting attention to deterministic signals leaves the designer’s value unchanged whenever objectives and constraints depend only on the posterior-mean distribution. We decompose implementation into independent problems on non-separating intervals and describe a constructive procedure based on barcode transport. We also characterize when lower or upper contour sets can take a nested-interval form, and provide applications to monopolistic pricing with buyer-optimal learning and robust persuasion with ambiguous receiver preferences. 
		
		\medskip
		\noindent \textbf{JEL Codes}: C60; D82; D83.\\
		\noindent \textbf{Keywords}: Information design; deterministic signals; mean-preserving contraction; backward Monge martingale transport.
	\end{abstract}
	\setcounter{page}{0}
	\thispagestyle{empty}
	\newpage

	\section{Introduction}
	Many economically important design problems can be formulated as
	\begin{equation}\label{Eq: General_Problem}
		\max_{G}{V(G)} \qquad \text{subject to} \qquad G\in\mathcal{A}\subseteq\MPC(F)
	\end{equation}
	where $\MPC(F)$ denotes the set of mean-preserving contractions of a distribution $F$ on the real line. In information design, this formulation has a natural interpretation: $F$ is the prior distribution of a one-dimensional state, and $G$ is the distribution of posterior means generated by a statistical experiment. Bayes plausibility is equivalent to the requirement that $G\in\MPC(F)$. In standard linear persuasion problems, the designer's objective is linear in $G$ and $\mathcal A=\MPC(F)$ \citep{kamenica2011bayesian,gentzkow2016rothschild,Dworczak2019,Kolotilin2022censorship}. The same formulation also accommodates nonlinear objectives and additional restrictions on feasible distributions. Examples include monopolistic pricing with buyer-optimal learning, where the seller's optimal price depends on $G$ \citep{Roesler2017buyer}, and robust persuasion, where the designer evaluates $G$ against the worst-case distribution of receiver preferences \citep{SapiroGheiler2024}.
	
	This reduced-form approach allows one to identify a desirable posterior-mean distribution without specifying the experiment that generates it. It leaves open, however, a concrete implementation question: given a prior $F$ and a target distribution $G\in\MPC(F)$, how can one construct a signal that implements $G$ from $F$? A general experiment allows randomization conditional on the realized state. Such randomization may be difficult to carry out or to make credible in practice. A deterministic experiment instead assigns each state a single message and can thus be described by a partition of the state space. We ask whether this restriction reduces the set of feasible posterior-mean distributions and, if it does not, how to construct an implementing partition.
	
	We show that, whenever the prior $F$ is atomless and has a finite first moment, every mean-preserving contraction of $F$ has a deterministic implementation. Thus, deterministic signals and unrestricted stochastic experiments generate exactly the same set of posterior-mean distributions. This feasible-set equivalence leaves the designer's value unchanged for the entire class of problems in \eqref{Eq: General_Problem}. It requires neither linearity of $V$ nor any additional restriction on $\mathcal A\subseteq\MPC(F)$, as long as the objective and constraints depend on the experiment only through the posterior-mean distribution. Whenever an optimal distribution exists under unrestricted experiments, that same distribution can be implemented deterministically.
	
	The feasible-set equality follows directly from the backward-Monge theorem of \cite{Nutz2024martingale}. The classical theorem of \cite{Strassen1965existence} guarantees that every mean-preserving contraction can be generated by some experiment. The refinement in \citet[Theorem~2.1]{Nutz2024martingale} ensures that, when the prior is atomless, the induced posterior mean can be chosen as a function of the state alone. We use their theorem to establish the feasible-set equivalence and build on their constructive proof to describe the implementing signals. We then characterize when these signals can have a simple nested-interval structure and use the construction in two economic applications.
	
	Our construction proceeds in two steps. In the first step, we follow \cite{gentzkow2016rothschild} and \cite{Sun2026competition} by comparing the integrated cumulative distribution functions of the prior and target. Bayes plausibility requires the integrated distribution of the target to lie weakly below that of the prior. As shown by \cite{Sun2026competition}, any cutoff at which this constraint binds is a \textit{separating cutoff}: every implementing experiment must reveal whether the realized state lies above or below it. The regions where the constraint is slack form a finite or countable collection of \textit{non-separating intervals}. On each such interval, the prior and target have the same probability mass and conditional mean. We show that any deterministic implementation must fully reveal the state on the set of separating cutoffs and implement the target independently within each non-separating interval. This reduces the global implementation problem to a collection of local ones.
	
	In the second step, we construct a deterministic signal within each non-separating interval. We use the \textit{barcode transport} method of \cite{Nutz2024martingale}, which repeatedly matches pieces of the target to suitable parts of the remaining prior. Each pair has the same probability mass and conditional mean, and the prior pieces do not overlap except on sets of probability zero. Within each pair, the \textit{curtain coupling} of \cite{BeiglbockJuillet2016} provides an assignment of posterior means to states. The pieces are chosen so that almost every state receives a unique assignment. Combining these assignments yields the desired deterministic signal. We illustrate the procedure with $K$-pooling and normal distributions, showing how the construction works for both discrete and continuous target distributions. 
	
	We then characterize when a simpler implementation is available. On each non-separating interval, a lower contour set collects the states assigned posterior means no greater than a given threshold. An upper contour set collects those assigned means no less than the threshold. We call an implementation \emph{contour nested-interval} if, on every non-separating interval, either all lower contour sets or all upper contour sets are intervals of states. The direction may differ across non-separating intervals. 
	For each threshold and contour direction, matching the probability mass and first moment of the corresponding part of the target determines a candidate interval of states. A contour nested-interval implementation is feasible exactly when these intervals expand for lower contours or shrink for upper contours as the threshold rises. For a given contour direction, a feasible implementation is unique up to sets of prior probability zero and can be obtained from the corresponding curtain coupling. We also give necessary conditions based on the target's support and sufficient conditions based on the prior and target densities and cumulative distributions. These conditions help determine whether a simple contour structure is available and which direction to use.
	
	Finally, we apply our results to the two information design problems mentioned above: monopolistic pricing with buyer-optimal learning \citep{Roesler2017buyer} and robust persuasion with ambiguous receiver preferences \citep{SapiroGheiler2024}. In both settings, the designer's objective is nonlinear in the posterior-mean distribution. In the cases we study, the optimal targets have a continuum of posterior means within a non-separating interval and are therefore not bi-pooling. Existing implementation results for bi-pooling distributions \citep{Kleiner2021extreme,Arieli2023optimal} thus do not settle how to implement these targets. Our results show both that deterministic implementation is possible and that the implementing signals can have a simple nested-interval structure. 
	For buyer-optimal learning under a uniform prior, we explicitly construct an implementation of the optimal truncated equal-revenue distribution. Its two non-separating intervals require different contour directions: lower contours on the first and upper contours on the second. For robust persuasion, we consider non-uniform concave priors with a continuous, positive density and a positive mean receiver cutoff no greater than the prior mean. We show that the robustly optimal uniform target admits an upper contour nested-interval implementation. 
	
	\paragraph{Related literature}
	This paper is closely related to \cite{Arieli2025constructive}, who address the same implementation question by providing a constructive version of Blackwell's theorem. They show that every mean-preserving contraction of an atomless prior can be generated by a downward-uniform signal. Conditional on each state, their signal is drawn uniformly from an interval whose upper bound depends on that state. Our paper complements their approach by constructing implementations that use deterministic signals only. 
	
	Our feasible-set equivalence also clarifies the role of restrictions on the structure of a deterministic signal. A general deterministic signal can pool disconnected sets of states, and the posterior mean it induces need not vary monotonically with the state. Restricting attention to monotone partitions can therefore be costly. \cite{Kolotilin2024monotone} show that such a restriction can strictly reduce the designer's payoff, and \cite{Chen2025explainable} further quantify the resulting value losses. Our result shows that, with an atomless prior, allowing general partitions restores the value attainable under unrestricted experiments.
	
	More broadly, our analysis relates to the literature on mean-based persuasion and majorization. Most papers in this literature take the designer's objective as primitive and characterize the optimal posterior-mean distributions. \cite{Kleiner2021extreme} and \cite{Arieli2023optimal} characterize the extreme points of the mean-preserving contraction set as bi-pooling distributions. \cite{Arieli2023optimal} also show that every bi-pooling distribution admits a deterministic implementation with a nested-interval structure. These results provide deterministic optimal signals with a simple structure for standard linear persuasion problems.\footnote{For an example with a nonlinear objective, \cite{Chen2025information} study information design in allocation with costly verification, where a principal chooses an allocation mechanism in response to the designed information. In their single-agent setting, an optimal information structure can be chosen to pool an upper interval and reveal lower states. This is a special case of bi-pooling and admits a simple deterministic implementation.} 
	Other papers establish the optimality of nested-interval structures in richer persuasion environments. For example, \cite{Guo2019interval} obtain nested acceptance intervals for a privately informed receiver, and \cite{Kolotilin2025persuasion} study pairwise signals, single-dipped and single-peaked disclosure, and conditions under which all pooled pairs are nested. We instead take the target distribution as primitive and ask how to implement it deterministically. Our contour nested-interval characterization is stated directly in terms of the prior and target and applies beyond the bi-pooling class.
	
	Finally, our paper relates to the literature on equivalence between stochastic and deterministic mechanisms. \cite{Chen2019equivalence} use mutual purification to preserve interim expected allocation probabilities and utilities, as well as ex-ante social surplus, in a mechanism design environment with finitely many alternatives and independent atomless types. \cite{Zeng2023derandomization} establish conditions under which a stochastic persuasion mechanism with finite message and action spaces has an effectively equivalent deterministic counterpart with the same set of equilibria. Our analysis uses backward-Monge martingale transport to preserve a prescribed posterior-mean distribution exactly, including targets with a continuum of posterior means. Preserving this distribution yields value equivalence whenever the objective and constraints depend on the experiment only through that distribution. 
	
	The remainder of the paper is organized as follows. Section~\ref{Sec: Feasible_Set_Equivalence} introduces the framework and establishes feasible-set equivalence between deterministic and stochastic experiments. Section~\ref{Sec: Construction} presents the deterministic construction and illustrates it with examples. Section~\ref{Sec: Contour_nested_interval} characterizes contour nested-interval implementations and provides conditions for their feasibility. Section~\ref{Sec: Applications} applies the results to buyer-optimal learning and robust persuasion with ambiguous receiver preferences. Section~\ref{Sec} concludes, and Appendix~\ref{App: Proofs} collects the omitted proofs.

	\section{Feasible-set equivalence}\label{Sec: Feasible_Set_Equivalence}
	Let the \textit{state} $X$ be a real-valued random variable that follows an atomless distribution with finite first moment and cumulative distribution function (cdf) $F$ on $\R$. 
	
	A \emph{stochastic experiment} consists of a standard Borel message space $S$ and a stochastic mapping $\kappa: \R\mapsto\Delta(S)$ that maps each state to a distribution of messages. 
	Each message then induces a posterior mean of state, denoted by $\E[X|M]$, through Bayes' rule. 
	Without loss of generality, we impose $S=\R$ and $M=\E[X|M]$ so the message directly reflects the posterior mean state induced by Bayes' rule.
	Let $G$ denote the resulting marginal distribution of message $M$. 
	We can therefore represent an unrestricted experiment by a joint distribution $\gamma$ of the random vector $(M,X)$ that satisfies $M\sim G$, $X\sim F$, and $\E[X|M]=M$. 
	The joint distribution $\gamma$ is called a \textit{martingale coupling} of $G$ and $F$. Thus, given prior $F$, a posterior-mean distribution $G$ can be feasibly induced by some stochastic experiment if and only if there exists a martingale coupling of $G$ and $F$. 
	
	A \emph{deterministic experiment} assigns each state a single message, with no randomization conditional on the state. Again let the message space be $S=\R$ and relabel each message $M$ by the posterior mean state it induces. A deterministic signal can then be represented by a Borel mapping $T:\R\mapsto\R$ that maps each state $X$ to a message $T(X)$, with $\E[X|T(X)]=T(X)$. It can also be equivalently described by a partition of the state space, whose cells are the nonempty level sets $T^{-1}(\{m\})$ indexed by $m\in\R$. Clearly, every deterministic experiment is also a stochastic experiment, but not vice versa. Let
	\begin{equation*}
		\DET(F):=\{H:\ \exists\ \text{a Borel}\ T\ s.t.\ X\sim F,\ T(X)\sim H,\ \text{and } \E[X|T(X)]=T(X)\}
	\end{equation*}
	denote the set of posterior-mean distributions that can be induced by some deterministic experiment. 
	Given prior $F$, we say that a posterior-mean distribution $G$ is \textit{deterministically implementable} if $G\in\DET(F)$, and a deterministic experiment $T$ \textit{implements $G$ from $F$} if it satisfies $T(X)\sim G$ and $\E[X|T(X)]=T(X)$ for $X\sim F$.  
	The aim of the paper is to characterize the set of all deterministically implementable posterior-mean distributions, and for every such distribution provide an explicit construction of $T$ that implements it from the given prior. 
	
	\paragraph{Feasible-set equivalence.}
	Now we characterize the feasible sets of posterior-mean distributions under stochastic and deterministic experiments, and show that they coincide when the prior $F$ is atomless. 
	
	Let $\Delta(\R)$ denote the set of cdfs of random variables on $\R$ with finite first moment. 
	For any $G,H\in\Delta(\R)$, we say that $G$ is a \textit{mean-preserving spread} of $H$, written $H\preceq_{MPS}G$ if $\E_H[\varphi(\cdot)]\leq\E_G[\varphi(\cdot)]$ for every convex function $\varphi:\R\to\R$; in this case we also say that $H$ is a \textit{mean-preserving contraction} of $G$. 
	For any prior $F\in\Delta(\R)$, let
	\begin{equation}\label{Eq: MPC}
		\MPC(F):=\{H\in\Delta(\R):\ H\preceq_{MPS}F\}
	\end{equation}
	denote the set of mean-preserving contractions of prior $F$. 	
	
	Recall from the earlier discussion that a posterior-mean distribution $G$ can be implemented by a stochastic signal under prior $F$ if and only if there exists a martingale coupling of $G$ and $F$. 
	By \citet[Theorem~8]{Strassen1965existence}, such a coupling exists if and only if $G\preceq_{MPS}F$. 
	Therefore, $\MPC(F)$ is exactly the set of posterior-mean distributions that are feasible under unrestricted experiments. 
	This immediately implies
	\begin{equation}
		\DET(F)\subseteq\MPC(F)
	\end{equation}
	for any prior $F$, that is, every posterior-mean distribution that can be induced by a deterministic experiment must also be a mean-preserving contraction of $F$. 
	This is because the set of feasible experiments is larger with stochastic experiments. So, any outcome that can be implemented with deterministic experiments can also be done with stochastic ones.
	
	The more interesting question is the converse one: can every $G\in\MPC(F)$ be implemented by some deterministic experiment? 
	Strassen's theorem guarantees that for any such $G$ there exists a joint distribution $\gamma$ of random vector $(M,X)$ such that $M\sim G$, $X\sim F$ and $\E[X|M]=M$. 
	However, deterministic implementation in addition requires $M$ to be a measurable function of state $X$; or equivalently, $M=T(X)$ for some Borel mapping $T$. 
	A martingale coupling $\gamma$ of $G$ and $F$ that also satisfies this additional property is called \textit{backward Monge} \citep{Nutz2024martingale}. 
	The following result, due to \cite{Nutz2024martingale}, shows that if $F$ is atomless then for any $G\in\MPC(F)$ there exists a backward Monge martingale coupling of $G$ and $F$. 
	
	\begin{lemma}
		\label{Lemma: NWZ_backward_Monge}
		Let both $F$ and $G$ be cdfs of random variables on $\R$ with finite first moment, and let $F$ be atomless. 
		If $G\preceq_{MPS}F$, then there exists a Borel mapping $T:\R\to\R$ such that $X\sim F$, $T(X)\sim G$, and $\E[X|T(X)]=T(X)$.\footnote{This lemma is an atomless specialization of the backward-Monge existence theorem of \citet[Theorem~2.1]{Nutz2024martingale}, stated using our terminology.}  
	\end{lemma}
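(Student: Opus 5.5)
The lemma is a direct specialization of \citet[Theorem~2.1]{Nutz2024martingale}, so the shortest route is to check that its hypotheses are met and translate its conclusion. Still, I would organize a self-contained argument along the lines the paper later develops. The first step is the reduction to irreducible pieces. Let $I_F(x)=\int_{-\infty}^x F(t)\,dt$ and define $I_G$ in the same way. The condition $G\preceq_{MPS}F$ is equivalent to $I_G\le I_F$ together with equal means. The open set $\{I_G<I_F\}$ splits into countably many disjoint open intervals $(a_k,b_k)$. At each endpoint the constraint binds, and every martingale coupling must keep the states and messages on the same side of that endpoint. On the complement, where $I_G=I_F$, the two cdfs agree and $G$ coincides with $F$ as measures. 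There I set $T(x)=x$, which is a valid deterministic piece. On each $(a_k,b_k)$, the restrictions $F_k$ and $G_k$ have equal mass and equal barycenter, and $G_k$ is again an MPC of $F_k$. Pasting the pieces together gives a Borel $T$, because there are countably many intervals and $F$ has no atoms at the endpoints. It therefore suffices to treat an irreducible pair.

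On an irreducible pair I would follow the barcode transport of Nutz et al. The building block is a \emph{shadow} or curtain-type lemma. Take a piece $\mu$ of the target supported in a small window, say $G$ restricted to a short interval $J$. The aim is to find a sub-measure $\nu\le F_k$ with the same mass and barycenter as $\mu$ and with $\mu\preceq_{MPS}\nu$. The mass of $\nu$ should be taken from the \enquote{innermost} available prior mass, for example via the left-curtain or shadow construction of \cite{BeiglbockJuillet2016}. I would then partition the support of $G_k$ into countably many such windows, ordered so that successive shadows are taken from the residual prior. The essential requirement is that the chosen prior pieces $\nu_j$ are mutually singular, so that almost every state is claimed by exactly one target piece. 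Within each pair $(\mu_j,\nu_j)$ I refine recursively: I split $\mu_j$ into halves and take shadows of the halves inside $\nu_j$. Letting the window widths shrink to zero gives a nested sequence of partitions of the state space. The induced map $T_n$, which sends each state to the barycenter of its current target piece, is then a martingale in $n$, in the sense that it is a conditional expectation of finer ones.

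The limit is handled as follows. For each $n$, the map $T_n(X)$ is a deterministic signal whose cells have prior pieces matched to target pieces of diameter at most $2^{-n}$. Conditional on the cell, $\E[X\mid T_n(X)]$ equals the target barycenter, so $(T_n(X))_n$ is a uniformly integrable martingale converging a.s. and in $L^1$ to some $T(X)$. Since the law of $T_n(X)$ differs from $G$ only by moving mass within windows of size at most $2^{-n}$, the law of $T(X)$ is $G$. The martingale property passes to the limit: for bounded continuous $\varphi$, the identity $\E[(X-T_n(X))\varphi(T_n(X))]=0$ persists by dominated convergence. This gives $\E[X\mid T(X)]=T(X)$.

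The main obstacle is the disjointness step: showing that the shadow pieces can be chosen mutually singular with respect to $F$ while each still dominates its target piece in convex order. This is exactly where atomlessness is needed. With an atom, two target pieces might both require the same point mass, and splitting that mass would force randomization. I would first try the left-curtain order, taking target pieces from left to right and each shadow from the leftmost admissible residual mass. If that fails to yield singular pieces, I would fall back on citing \citet[Theorem~2.1]{Nutz2024martingale} directly for this step.
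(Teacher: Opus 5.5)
Your opening remark matches the paper exactly. The lemma is stated as the atomless case of \citet[Theorem~2.1]{Nutz2024martingale} with no further argument. Citing it is a complete proof: take $\mu,\nu$ to be the laws of $G,F$, so that $G\preceq_{MPS}F$ means $\mu\le_{cx}\nu$, and read the backward-Monge coupling $(T(X),X)$ as $\E[X|T(X)]=T(X)$.

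Your self-contained route, however, has a genuine gap, and it sits at the step that carries all the content of that theorem. Shadows of small target windows are in general not set restrictions of the prior, and this obstruction has nothing to do with atoms of $F$. Suppose a window $J$ is processed where the target density $g$ lies below the prior density $f$, and the residual prior $\tilde\nu$ is still intact on $J$. Then $\mu|_J\le\tilde\nu$, hence $S^{\tilde\nu}(\mu|_J)=\mu|_J=(g/f)\,\nu|_J$. Every state in $J$ is only fractionally claimed, and shrinking $J$ does not change this. The left-curtain order you plan to try first fails for exactly this reason in Example~\ref{Example: Normal}. For $s<-c$ one has $\mu|_{(-\infty,s]}\le\nu$, so the left-curtain coupling assigns a prior state $x<-c$ the label $x$ only with conditional probability $g(x)/f(x)<1$; it is therefore not backward Monge. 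The same problem recurs in your recursive halving. At every level it needs a Borel split $B_j=B_j^L\sqcup B_j^R$ with $\mu_j^L\le_{cx}\nu|_{B_j^L}$ and $\mu_j^R\le_{cx}\nu|_{B_j^R}$, and taking shadows of the halves does not supply one.

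The missing idea is the barcode ordering of \cite{Nutz2024martingale}. At each round one extracts only the part of the residual target that density-dominates the residual prior, i.e.\ where $d\tilde\mu_n/d(\tilde\mu_n+\tilde\nu_n)\ge\tfrac12$. For such a bar, with an atomless prior bar, their Proposition~2.2 ensures that the left-curtain coupling, read backward, gives almost every state a unique label. The resulting prior bars are mutually singular and exhaust $\nu$. In Example~\ref{Example: Normal}, this is why $C_2$ is processed only after $D_1\supseteq C_2$ has been removed from the prior.

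Without such an argument, the existence of the splits your recursion needs is itself a consequence of the backward-Monge theorem: apply it to $(\mu_j,\nu_j)$ and take preimages. So the construction is circular. Once you cite \citet[Theorem~2.1]{Nutz2024martingale} ``for this step'', it already delivers $T$ outright, and the decomposition and the martingale limit become superfluous. Those two steps are sound as far as they go, since $T_n(X)=\E[X|\mathcal F_n]$ for the $\sigma$-field of the level-$n$ partition and the limit is $\E[X|\mathcal F_\infty]$. They are just not where the difficulty lies.
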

	
	This delivers our feasible-set equivalence result: 
	\begin{theorem}[Feasible-set equivalence]\label{Theorem: Feasible_Set_Equivalence}
		If the prior $F$ is atomless and has a finite first moment, then $\DET(F)=\MPC(F)$. 
		That is, for every target distribution $G\in\MPC(F)$, there exists a deterministic experiment $T$ that implements $G$ from $F$. 
	\end{theorem}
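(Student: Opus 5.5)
The proof is a double inclusion, and both directions follow from results already stated in the paper.

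For $\DET(F)\subseteq\MPC(F)$, take $H\in\DET(F)$ with implementing Borel map $T$. The joint law of $(T(X),X)$ has marginals $H$ and $F$ and satisfies $\E[X|T(X)]=T(X)$, so it is a martingale coupling of $H$ and $F$. For any convex $\varphi$, conditional Jensen gives
\begin{equation*}
\E[\varphi(T(X))]=\E\bigl[\varphi(\E[X|T(X)])\bigr]\le\E\bigl[\E[\varphi(X)|T(X)]\bigr]=\E[\varphi(X)].
\end{equation*}
This requires $\varphi(X)$ to be quasi-integrable. That holds because a convex function is bounded below by an affine function, and $X$ has a finite first moment. Hence $H\preceq_{MPS}F$. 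Equivalently, one can invoke the easy direction of Strassen's theorem, as the text preceding Lemma~\ref{Lemma: NWZ_backward_Monge} does.

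For $\MPC(F)\subseteq\DET(F)$, take $G\in\MPC(F)$. By definition $G$ has a finite first moment and $G\preceq_{MPS}F$. By hypothesis $F$ is atomless with a finite first moment. These are exactly the hypotheses of Lemma~\ref{Lemma: NWZ_backward_Monge}, which supplies a Borel $T$ with $T(X)\sim G$ and $\E[X|T(X)]=T(X)$. Thus $T$ implements $G$ from $F$, and $G\in\DET(F)$.

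Combining the two inclusions gives the equality. The only substantive step is the backward-Monge existence in Lemma~\ref{Lemma: NWZ_backward_Monge}, which we take from \citet{Nutz2024martingale}. Within this argument, the one point to check is that the lemma's hypotheses match the definition of $\MPC(F)$, in particular that membership in $\Delta(\R)$ carries the finite-first-moment requirement.
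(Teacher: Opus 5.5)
Your proposal is correct and follows the paper's proof. For $\DET(F)\subseteq\MPC(F)$, you use that a deterministic experiment is a special martingale coupling, spelling out via conditional Jensen the step the paper leaves to Strassen's theorem; the reverse inclusion is a direct application of Lemma~\ref{Lemma: NWZ_backward_Monge}.
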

	\begin{proof}[Proof of Theorem~\ref{Theorem: Feasible_Set_Equivalence}]
		The inclusion $\DET(F)\subseteq\MPC(F)$ follows from the fact that $\MPC(F)$ is the feasible set of posterior-mean distributions for all stochastic experiments, which contain deterministic experiments as special cases. 
		To show $\MPC(F)\subseteq\DET(F)$, consider any $G\in\MPC(F)$. 
		Since $F$ is atomless and $G\preceq_{MPS}F$, Lemma~\ref{Lemma: NWZ_backward_Monge} implies that there exists a Borel mapping $T$ with $T(X)\sim G$ and $\E[X|T(X)]=T(X)$ for $X\sim F$. Using $T(X)$ as the message therefore generates posterior mean $T(X)$ with cdf $G$. Therefore, $G\in\DET(F)$. 
	\end{proof}
	
	\textbf{Implication of Theorem~\ref{Theorem: Feasible_Set_Equivalence}.} 
	As mentioned in the Introduction, in many economically relevant applications the designer's reduced-form problem is 
	\begin{equation*}
		\max_{G}{V(G)} \qquad \text{subject to} \qquad G\in\mathcal{A}\subseteq\MPC(F)
	\end{equation*}
	where $\mathcal{A}$ is the feasible set of distributions that are allowed. 
	Our feasible-set equivalence result implies that for any $G\in\mathcal{A}$ there exists a deterministic signal that can implement it. Hence, the value of this problem is the same whether the designer can use unrestricted stochastic signals or only deterministic signals. 
	This applies to any objective and constraints that depend on the experiment only through the posterior-mean distribution. 
	
	\textbf{Remark}. 
	The atomlessness assumption of prior distribution $F$ is crucial for the general feasible-set equivalence theorem. To see this, consider $\mu_F=\tfrac{1}{2}\delta_0+\tfrac{1}{2}\delta_1$ and $\mu_G=\tfrac{1}{2}\delta_{1/4}+\tfrac{1}{2}\delta_{3/4}$, where $\delta_a$ denotes unit mass at $a$. Both distributions have mean $1/2$, and $G\in\MPC(F)$. 
	A deterministic signal, however, can either pool the two states $0$ and $1$ into a single message or separate them by different messages. 
	In the first case, the posterior-mean law is degenerate at $1/2$; in the second, it is $\mu_F$. 
	Neither equals $\mu_G$, so $G\notin\DET(F)$.

	\section{How to deterministically implement a target distribution?}\label{Sec: Construction}
	
	Consider an atomless prior $F$ and fix a target distribution $G\in\MPC(F)$. 
	While Theorem~\ref{Theorem: Feasible_Set_Equivalence} guarantees the existence of a deterministic experiment that implements $G$ from $F$, it is silent about how to explicitly construct such an experiment. 
	In this section we present a constructive way to deduce a deterministic mapping $T$ that implements $G$ from $F$. 
	
	The construction contains two steps. 
	In the first step, we follow \cite{Sun2026competition} by decomposing $\R$ into a set of \textit{separating cutoffs} and a collection of \textit{non-separating intervals}, using the majorization constraints governing the target $G$ and prior $F$ \citep{gentzkow2016rothschild}. 
	On the set of separating cutoffs, $G$ and $F$ match exactly, and any deterministic experiment $T$ that implements $G$ from $F$ must be fully revealing on this set. 
	On each non-separating interval, the target distributed $G$ is dominated by prior $F$ in convex when restricted to that interval. 
	Any $T$ that implements $G$ from $F$ must also do the same independently within each such interval.
	Then, in the second step, we zoom in to each such non-separating interval and exploit the ideas of \textit{left-curtain} and \textit{barcode transport} from \cite{BeiglbockJuillet2016} and \cite{Nutz2024martingale} to construct a deterministic mapping that implements $G$ from $F$ when restricted to this interval only. 
	These pieces can then be glued together to obtain a deterministic mapping $T$ that implements $G$ from $F$.  
	
	\subsection{Decomposition into separating cutoffs and non-separating intervals}\label{Subsec: Decomposition}
	Following \cite{gentzkow2016rothschild}, for any cdf $H$ with finite first moment on $\mathbb{R}$ we define its \textit{integrated cdf} by 
	\begin{equation*}
		I_H(z):=\int_{-\infty}^{z}H(y)\,dy \quad \text{for all } z\in\R. 
	\end{equation*}
	For cdfs $F$ and $G$ with the same finite mean, $G\in\MPC(F)$ if and only if the following majorization constraint holds: 
	\begin{equation}
		I_F(z)-I_G(z)\geq0\ \text{for every }z\in\R\ \text{and} \lim\limits_{z\rightarrow\pm\infty}{(I_F(z)-I_G(z))}=0. 
	\end{equation}
	Thus, for any fixed target $G\in\MPC(F)$, we can define 
	\begin{equation}\label{Eq: Decomposed_Sets}
		\mathbb{S}_G:=\{z\in\R:\ I_G(z)=I_F(z)\}
		\qquad \text{and} \qquad
		\mathbb{NS}_G:=\{z\in\R:\ I_G(z)<I_F(z)\}.
	\end{equation}
	Since $\mathbb{S}_G$ is closed, its complement $\mathbb{NS}_G$ must be open so that it is either empty or a finite or countable union of disjoint intervals; that is, we can write $\mathbb{NS}_G=\bigcup_{j\in J}(c_j,d_j)$ with $J$ a finite or countable index set.\footnote{A non-separating interval can be unbounded on either or both ends if the support of prior $F$ is unbounded on $\R$. Moreover, $\mathbb{NS}_G$ is empty if and only if $G=F$. }
	
	\cite{Sun2026competition} shows that both $\mathbb{S}_G$ and every interval $(c_j,d_j)$ in $\mathbb{NS}_G$ have economically appealing interpretations. To understand them, let $H_z$ denote the posterior-mean distribution induced by a binary cutoff signal that only reveals whether the realize state is above or below cutoff $z$, given prior $F$. \citet[Lemma~2]{Sun2026competition} show that $z\in\mathbb{S}_G$ if and only if $H_z\preceq_{MPS}G$, that is, any (stochastic) experiment that implements $G$ from $F$ must reveal whether the realized state lies above or below cutoff $z$. Every $z\in\mathbb{S}_G$ can be thus interpreted as a \textit{separating cutoff}. By contrast, the intervals in $\mathbb{NS}_G$ are regions in which such clean separation does not hold. Hence, each $(c_j,d_j)$ can be interpreted as a \textit{non-separating interval}. \cite{Sun2026competition} also shows that, on each $(c_j,d_j)$, the prior $F$ and target $G$ have the same mass and mean, and $G$ is a mean-preserving contraction of $F$ when restricted to this interval. That is, $G|_{(c_j,d_j)}\preceq_{MPS}F|_{(c_j,d_j)}$, where $G|_{(c_j,d_j)}$ and $F|_{(c_j,d_j)}$ denote distributions $G$ and $F$ truncated on $(c_j,d_j)$.\footnote{Formally, for any interval $(c,d)$ with $G(c)<G(d^{-})$, $G(x)|_{(c,d)}=\frac{G(x)-G(c)}{G(d^{-})-G(c)}$ for all $x\in(c,d)$, $G(x)|_{(c,d)}=0$ for $x\leq c$ and $G(x)|_{(c,d)}=1$ for $x\geq d$. The construction of $F|_{(c,d)}$ is similar.}
	
	Proposition~\ref{Prop: Decomposition} clarifies the implications of these sets for any deterministic experiment $T$ that implements $G$ from $F$, whose existence is guaranteed by Theorem~\ref{Theorem: Feasible_Set_Equivalence} since $F$ is atomless.

	\begin{proposition}\label{Prop: Decomposition}
		Suppose that $F$ is atomless and $G\in\MPC(F)$ with $\mathbb{NS}_G=\cup_{j\in J}(c_j,d_j)$. 
		Let $T$ be a deterministic mapping that implements $G$ from $F$. 
		Then the following properties hold: 
		\begin{enumerate}
			\item $T(x)=x$ for a.e. $x\in\mathbb{S}_G$; that is, $T$ is fully revealing on the set of separating cutoffs. 
			\item On each $(c_j,d_j)$, $G|_{(c_j,d_j)}\preceq_{MPS}F|_{(c_j,d_j)}$ and $T$ implements $G|_{(c_j,d_j)}$ from $F|_{(c_j,d_j)}$. 
		\end{enumerate}
	\end{proposition}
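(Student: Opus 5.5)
The plan is to use the martingale coupling $\gamma$ of $(M,X)=(T(X),X)$ together with the characterization of separating cutoffs quoted from \citet[Lemma~2]{Sun2026competition}. The key fact is this: for $z\in\mathbb{S}_G$, every implementing experiment must reveal whether $X\le z$ or $X>z$. In terms of $\gamma$, this means that almost surely the conditional law of $X$ given $M$ is supported on one side of $z$. I would prove this directly from convexity. Take $\varphi_z(y)=(y-z)^+$, so that $\E[\varphi_z(X)]-\E[\varphi_z(M)]=I_F(z)-I_G(z)=0$ when $z\in\mathbb{S}_G$. Since $M=\E[X\mid M]$, conditional Jensen gives $\E[(X-z)^+\mid M]\ge (M-z)^+$ almost surely. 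Equality in expectation therefore forces equality almost surely, and hence the conditional law of $X$ given $M$ does not charge both $(-\infty,z)$ and $(z,\infty)$. We may ignore the boundary point $\{X=z\}$ because $F$ is atomless. Applying this to a countable dense subset $D$ of $\mathbb{S}_G$ gives a single null set off which the property holds simultaneously for all $z\in D$.

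\textbf{Part 1.} Fix $m$ outside the null set, and let $C_m=T^{-1}(\{m\})$ be the corresponding cell, with conditional law $\nu_m$ (so $\nu_m$ is the law of $X$ given $T(X)=m$). The plan is to show that $\nu_m$ cannot charge $\mathbb{S}_G$ unless it is a point mass at $m$. Suppose $\nu_m$ is not degenerate. Then its convex hull of support is an interval $[a,b]$ with $a<b$, and by the previous paragraph no $z\in D$ lies in $(a,b)$. Because $\mathbb{S}_G$ is closed and $D$ is dense in it, no point of $\mathbb{S}_G$ lies in $(a,b)$ either, up to the countably many isolated points, which have $F$-measure zero. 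So $\nu_m$ charges $\mathbb{S}_G$ at most at the endpoints $a$ and $b$, and $F$ is atomless. Integrating over $m$, the set $\{x\in\mathbb{S}_G:\ T(x)\ne x\}$ therefore has $F$-measure zero. For degenerate cells the martingale condition gives $T(x)=x$ directly. I expect this step to be the main obstacle. The difficulty is handling endpoints and accumulation points of $\mathbb{S}_G$ carefully, because $\nu_m$ is only determined up to null sets and the statement holds only $F$-a.e.

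\textbf{Part 2.} The same separation, applied at the endpoints $c_j,d_j\in\mathbb{S}_G$ (when they are finite), shows that almost every cell lies entirely within $[c_j,d_j]$ or entirely outside it. Combined with Part 1 and the atomlessness of $F$, it follows that $X\in(c_j,d_j)$ if and only if $T(X)\in(c_j,d_j)$, up to null sets. There is one subtlety: a degenerate cell located at a point of $\mathbb{S}_G$ maps to itself, so it never lands in $(c_j,d_j)$. Hence, conditional on $X\in(c_j,d_j)$, the message $T(X)$ has law $G|_{(c_j,d_j)}$ and $X$ has law $F|_{(c_j,d_j)}$. The martingale property $\E[X\mid T(X)]=T(X)$ is preserved under this conditioning, because the event $\{X\in(c_j,d_j)\}$ is $\sigma(T(X))$-measurable up to null sets. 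Thus $T$ implements $G|_{(c_j,d_j)}$ from $F|_{(c_j,d_j)}$. The relation $G|_{(c_j,d_j)}\preceq_{MPS}F|_{(c_j,d_j)}$ then follows from Strassen's theorem, or directly by conditional Jensen. This also yields equal mass and equal mean on each interval.
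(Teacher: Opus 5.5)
Your separation step is sound and matches the paper's Claim~1. Equality in conditional Jensen for $(y-z)^+$ says the same thing as the vanishing of $\E[(z-X)\1_{\{X\le z\}}\1_{\{M>z\}}+(X-z)\1_{\{X>z\}}\1_{\{M\le z\}}]$, which is $I_F(z)-I_G(z)$.

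Your Part~2 is correct and more complete than the paper's proof. The paper only verifies $G|_{(c_j,d_j)}\preceq_{MPS}F|_{(c_j,d_j)}$ through integrated cdfs and leaves implicit that $T$ itself implements the restricted target. Your identity $\{X\in(c_j,d_j)\}=\{T(X)\in(c_j,d_j)\}$ a.s.\ supplies exactly that missing piece.

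The gap is in Part~1, at ``$\nu_m$ charges $\mathbb{S}_G$ at most at the endpoints $a$ and $b$, and $F$ is atomless.'' Atomlessness of $F$ does not pass to the conditional laws. When $G(\{m\})=0$, the cell $C_m$ is $F$-null and $\nu_m$ comes from a disintegration, which for a deterministic signal is typically purely atomic. For instance, for a two-branch map like those in the applications, the cell of $s$ is $\{a(s),b(s)\}$ and $\nu_s$ is a two-point law. Because $a_m,b_m$ move with $m$, ``integrating over $m$'' yields an $F$-null set only if all these endpoints lie in one fixed $F$-null set, and you have not shown this.

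The missing observation is short. If $a_m\in\mathbb{S}_G$ and $(a_m,b_m)\cap\mathbb{S}_G=\varnothing$, then $(a_m,b_m)$ lies in a single component $(c_j,d_j)$ of $\mathbb{NS}_G$, so $a_m=c_j$; likewise $b_m\in\mathbb{S}_G$ forces $b_m=d_j$. Let $E:=\{c_j,d_j:\ j\in J\}$, a countable set. Then $\nu_m(\mathbb{S}_G\setminus\{m\})\le\nu_m(E)$ for $G$-a.e.\ $m$, hence $F(\{x\in\mathbb{S}_G:\ T(x)\ne x\})\le\int\nu_m(E)\,dG(m)=F(E)=0$.

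Alternatively, you can bypass conditional laws, along the lines of the paper's terse argument. Your Jensen equality gives $\1_{\{X\le z\}}=\1_{\{T(X)\le z\}}$ a.s.\ for each $z\in D$, hence off a single null set for all $z\in D$ at once. At every $x\in\mathbb{S}_G$ approached by $D$ strictly from both sides, this squeezes $T(x)=x$. The points of $\mathbb{S}_G$ not so approached are again component endpoints, a countable set.

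Finally, your caveat about isolated points is unnecessary: any dense subset of $\mathbb{S}_G$ contains every isolated point of $\mathbb{S}_G$.
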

	\begin{proof}[Proof of Proposition~\ref{Prop: Decomposition}]
		See Appendix~\ref{App. Prop_Decomposition}. 
	\end{proof}
	
	Proposition~\ref{Prop: Decomposition} thus simplifies the construction of a deterministic experiment $T$ implementing $G$ from $F$ by decomposing the global problem into independent implementation problems on each of the non-separating intervals of $G$. 
	The set of separating cutoffs must be fully revealed by any such $T$.

	\subsection{Barcode transport within each non-separating interval}\label{Subsec: Barcode_Transport}
	In this subsection we introduce a specific way, based on \cite{BeiglbockJuillet2016} and \cite{Nutz2024martingale}, to explicitly construct a deterministic mapping that implements $G|_{(c_j,d_j)}$ from $F|_{(c_j,d_j)}$ on each non-separating interval $(c_j,d_j)$ of $G$. 
	These pieces can then be merged together to yield the ``grand'' $T$ that implements $G$ from an atomless $F$. 
	
	\paragraph{Convex order, extended convex order, and shadow}
	We begin by introducing some notation and terminology. 
	Let $\mathcal{M}$ denote the set of finite Borel measures on $\R$ with finite first moment, endowed with the usual topology of weak convergence.\footnote{The reason we work with $\mathcal{M}$ instead of the set of Borel probability measures on $\R$ is that we will need to consider transport plans between measures in $\mathcal{M}$ that have the same mass but not necessarily equal to one.}  
	Let $\mu,\nu\in\mathcal{M}$. We say that $\nu$ dominates $\mu$ in \textit{convex order}, denoted by $\mu\leq_{cx}\nu$, if $\int_{\R}{\varphi(x)d\mu(x)}\leq\int_{\R}{\varphi(x)d\nu(x)}$ for every convex function $\varphi:\R\mapsto\R$. 
	Note that when $\mu$ and $\nu$ are probability measures (i.e., each has total mass of one) $\leq_{cx}$ is equivalent to $\preceq_{MPS}$ with respect to their cdfs.
	We say, following \cite{BeiglbockJuillet2016}, that $\nu$ dominates $\mu$ in \textit{extended convex order}, denoted by $\mu\leq_{ecx}\nu$, if $\int_{\R}{\varphi(x)d\mu(x)}\leq\int_{\R}{\varphi(x)d\nu(x)}$ for every non-negative convex function $\varphi:\mathbb{R}\mapsto\mathbb{R}_{\geq0}$. 
	When $\mu$ and $\nu$ have the same total mass, $\leq_{ecx}$ coincides with $\leq_{cx}$. But $\mu\leq_{ecx}\nu$ also holds if $\mu\leq\nu$, that is, $\mu(A)\leq\nu(A)$ for every Borel subset $A$ of $\R$. 
	Given $\mu,\nu\in\mathcal{M}$ with $\mu\leq_{ecx}\nu$, \cite{BeiglbockJuillet2016} define the \emph{shadow} of $\mu$ in $\nu$ as
	\begin{equation}\label{Eq: Shadow}
		S^\nu(\mu):=\min\left\{\eta:\mu\leq_{cx}\eta\leq\nu\right\},
	\end{equation}
	where the minimum is taken in convex order. 
	Intuitively, the shadow is the least dispersed part of $\nu$ capable of absorbing $\mu$. 
	In particular, if $\nu$ is atomless and $\mu=p\delta_m$ is an atom of mass $p$ and barycenter $m$, then the shadow of $\mu$ on $\nu$ is an interval restriction $\nu|_I$, where the interval $I$ can be characterized by the mass and first moment equalities $\nu(I)=p$ and $\int_{I}{xd\nu(x)}=pm$. 
	Another useful fact about shadow is that $S^{\nu}(\mu)=\nu$ for $\mu\leq_{cx}\nu$.

	\paragraph{Barcode transport}
	Fix an atomless prior $F$, a feasible target distribution $G$, and one of its non-separating intervals $(c_j,d_j)$. Recall that $G$ and $F$ have the same mass and conditional mean on this interval, and $G|_{(c_j,d_j)}\preceq_{MPS}F|_{(c_j,d_j)}$ (cf. Proposition~\ref{Prop: Decomposition}). 
	The \textit{barcode transport} method of \cite{Nutz2024martingale} provides a construction of a deterministic experiment implementing $G|_{(c_j,d_j)}$ from $F|_{(c_j,d_j)}$. 
	Let $\mu$ and $\nu$ denote their respective probability measures.\footnote{They can be deduced via $\mu((-\infty,x])=G(x)|_{(c_j,d_j)}$ and $\nu((-\infty,x])=F(x)|_{(c_j,d_j)}$ for all $x\in\R$.} 
	Then $\mu\leq_{cx}\nu$, and $\nu$ is atomless (i.e., $\nu(\{x\})=0$ for all $x\in\R$). 
	
	The basic idea of barcode transport is to decompose $\nu$ and $\mu$ into countably many paired pieces --- the \textit{bars} of the barcode --- that can each be coupled by a backward-Monge martingale transport, and then combine them together. 
	The construction consists of two steps. 
	In the first step, we recursively construct the bars.
	We start by identifying
	\begin{equation*}
		C_1 = \left\{x:\ \frac{d\mu}{d(\mu+\nu)}(x)\geq\frac{1}{2}\right\}
	\end{equation*}
	where the derivative here is the Radon-Nikodym density relative to $\mu+\nu$. 
	Intuitively, these are points where either $\mu$ has a weakly higher density than $\nu$ or where $\mu$ contains an atom. 
	Note that $C_1$ is not empty because $\mu$ is a mean-preserving contraction of $\nu$ so that $\mu$ must contain regions where it is more densely distributed than $\nu$. 
	Construct $\mu_1=\mu|_{C_1}$ and its shadow in $\nu$, denoted by $\nu_1=S^{\nu}(\mu_1)$.\footnote{For any Borel measure $\mu$ on $\R$ and Borel set $A\subseteq\R$, we define $\mu|_A:=\mu\cdot\1_{A}$.} 
	This yields the first pair of bars $(\mu_1,\nu_1)$. 
	More generally, for $n\geq2$, define 
	\begin{equation*}
		\tilde{\mu}_n=\mu-\sum_{i=1}^{n-1}{\mu_i} \qquad \text{and} \qquad \tilde{\nu}_n=\nu-\sum_{i=1}^{n-1}{\nu_i}\ .
	\end{equation*}
	They are the residual measures of $\mu$ and $\nu$ after removing the bars generated in the previous $n-1$ rounds. 
	We then analogously identify 
	\begin{equation*}
		C_n=\left\{x:\ \frac{d\tilde{\mu}_n}{d(\tilde{\mu}_n+\tilde{\nu}_n)}(x)\geq\frac{1}{2}\right\}\ , 
	\end{equation*}
	the set of points where $\tilde{\mu}_n$ is more densely distributed than $\tilde{\nu}_n$, and construct $\mu_n=\tilde{\mu}_n|_{C_n}$ and its shadow $\nu_n=S^{\tilde{\nu}_n}(\mu_n)$, yielding the $n$-th pair of bars $(\mu_n,\nu_n)$. 
	Repeating this procedure produces a finite or countable sequence $\{(\mu_n,\nu_n)\}_n$. 
	Each such pair $(\mu_n,\nu_n)$ satisfies $\mu_n\leq_{cx}\nu_n$ and the density dominance condition $d\mu_n/d(\mu_n+d\nu_n)\geq1/2$ $\mu_n$-a.e. 
	\cite{Nutz2024martingale} show that the target bars obtained in this way are mutually singular, so are the prior bars. 
	Moreover, they together exhaust the original measures $\mu$ and $\nu$ (that is, $\mu=\sum_{n}\mu_n$ and $\nu=\sum_{n}\nu_n$). 
	
	In the second step, we construct a deterministic mapping for each pair of bars $(\mu_n,\nu_n)$ using the \textit{left-curtain} transport of \cite{BeiglbockJuillet2016}. Given $\mu_n\leq_{cx}\nu_n$, this is the unique martingale coupling that transports $\mu_n|_{(-\infty,s]}$ to its shadow $S^{\nu_n}(\mu_n|_{(-\infty,s]})$ for every $s\in\R$.
	This describes the matching from posterior means to states: mass associated with a posterior mean is distributed across states while preserving that mean. Unless the two marginals coincide, this forward allocation necessarily involves mass splitting. 
	For implementation, we read the same matching backward, assigning a posterior-mean label to each realized state. 
	This assignment need not be deterministic for arbitrary marginals. On each barcode bar, however, $\nu_n$ is atomless and
	\begin{equation*}
		\frac{d\mu_n}{d(\mu_n+\nu_n)}\geq\frac12
		\quad\mu_n\text{-a.e.}
	\end{equation*}
	\citet[Proposition~2.2]{Nutz2024martingale} then ensures that almost every state receives a unique label. Consequently, there exists a Borel mapping $T_n:\R\to\R$ satisfying
	\begin{equation*}
		(T_n)_\#\nu_n=\mu_n \qquad \text{and} \qquad \E[X|T_n(X)]=T_n(X).
	\end{equation*}
	Here $(T_n)_\#\nu_n$ is the pushforward measure defined by $(T_n)_\#\nu_n(A)=\nu_n(T_n^{-1}(A))$ for all Borel sets $A\subseteq\R$, and the conditional expectation is taken under the normalized prior bar $\nu_n/\nu_n(\R)$. Thus $T_n$ implements $\mu_n$ from $\nu_n$. 
	Processing target means from high to low instead gives the analogous \textit{right-curtain} construction.
	
	Finally, because the collection of prior bars $\{\nu_n\}_n$ are mutual singular and $\nu=\sum_{n}\nu_n$, the non-separating interval $(c_j,d_j)$ can be partitioned into disjoint Borel sets $B_n$'s with $\nu_n=\nu|_{B_n}$ for each $n$. On each such component $B_n$, define $T(x)=T_n(x)$ for $x\in B_n$. Since the bars exhaust both measures $\mu$ and $\nu$, the resulting mapping thus has target distribution $\mu$ and satisfies $\mathbb{E}[X|T(X)]=T(X)$ on $(c_j,d_j)$. 
	Constructing these mappings on every non-separating interval and setting $T(x)=x$ on $\mathbb{S}_G$ yields a deterministic experiment that implements target $G$ from prior $F$. 
	
	\subsection{Two examples}\label{Subsec: Examples_Construction}
	We illustrate the construction of deterministic implementations with two examples. 
	
	\begin{example}[\textbf{$K$-pooling distributions}]\label{Example: K-pooling}
		\normalfont
		For an integer $K\geq1$, we call $G\in\MPC(F)$ a \emph{$K$-pooling distribution} if its restriction to every non-separating interval has at most $K$ atoms and no other mass. Outside these intervals, the state is fully revealed. The cases $K=1$ and $K=2$ correspond, respectively, to monotone partitional distributions and bi-pooling distributions, which are studied in the literature \citep{Dworczak2019,Kleiner2021extreme,Arieli2023optimal}. 
		Note that a $K$-pooling distribution need not have finite support globally; it may have countably many non-separating intervals and a fully revealing part.
		
		Fix one component $(c_j,d_j)$, and let $\mu$ and $\nu$ denote the probability measures for $G|_{(c_j,d_j)}$ and $F|_{(c_j,d_j)}$, respectively. 
		Suppose that the target measure $\mu$ has finite support with $K_j\leq K$ atoms: 
		\begin{equation*}
			\mu=\sum_{k=1}^{K_j}p_k\delta_{x_k},
			\qquad
			c_j<x_1<\cdots<x_{K_j}<d_j,
			\qquad
			p_k>0,\quad \sum_{k=1}^{K_j}p_k=1. 
		\end{equation*}
		Since $\nu$ is atomless, we have
		\begin{equation*}
			\frac{d\mu}{d(\mu+\nu)}(x) = 
			\begin{cases}
				1, &\text{if } x\in\{x_1,\cdots,x_{K_j}\}, \\
				0, &\text{otherwise.}
			\end{cases}
		\end{equation*}
		Therefore, the set of points where $\mu$ is distributed more densely than $\nu$ is precisely $C_1=\{x_1,\cdots,x_{K_j}\}$, the atoms of $\mu$ on $(c_j,d_j)$. 
		Then $\mu_1=\mu|_{C_1}=\mu$ and its shadow is $\nu_1=S^\nu(\mu_1)=S^\nu(\mu)=\nu$ since $\mu\leq_{cx}\nu$. 
		The first step of barcode transport thus consists of only one round and it produces a single pair of bars $(\mu,\nu)$. 
		
		The second step constructs a deterministic mapping within this single bar. Consider the left-curtain transport first.
		As the target mean level $s$ increases, $\mu|_{(-\infty,s]}$ starts at zero, gains the atom $p_k\delta_{x_k}$ at each $x_k$, and reaches $\mu$ at $x_{K_j}$.
		These $K_j$ jumps correspond to $K_j$ rounds of construction.
		All interval and cell identities below are understood within $(c_j,d_j)$, up to $\nu$-null sets.
		
		In the first round, choose the interval $I_1^L=[a_1^L,b_1^L]$ such that the restriction of $\nu$ to it is the shadow of $p_1\delta_{x_1}$:
		\begin{equation*}
			S^\nu(p_1\delta_{x_1})=\nu|_{I_1^L},
			\qquad
			\nu(I_1^L)=p_1,
			\qquad \text{and} \qquad
			\int_{I_1^L}x\,d\nu(x)=p_1x_1.
		\end{equation*}
		Set $D_1^L=I_1^L$ and assign $T_L(x)=x_1$ for $x\in D_1^L$. 
		Its mass is $p_1$ and its conditional mean is $x_1$.
		Removing it leaves the residual prior measure $\rho_1=\nu-\nu|_{D_1^L}$. 
		
		If $K_j\geq2$, the second round matches $p_2\delta_{x_2}$ to its shadow in $\rho_1$. 
		Choose an interval $I_2^L=[a_2^L,b_2^L]$ satisfying 
		\begin{equation*}
			\rho_1(I_2^L)=p_2 \qquad \text{and} \qquad \int_{I_2^L}x\,d\rho_1(x)=p_2x_2\ . 
		\end{equation*}
		Let $D_2^L=I_2^L\setminus D_1^L$, then 
		\begin{equation*}
			S^{\rho_1}(p_2\delta_{x_2})=\rho_1|_{I_2^L}=\nu|_{D_2^L}. 
		\end{equation*} 
		Assign $T_L(x)=x_2$ on $D_2^L$.
		The distinction between $I_2^L$ and $D_2^L$ matters: the former is an interval used to restrict the remaining prior, while the latter excludes states already assigned in the first round.
		
		More generally, starting from $\rho_0=\nu$, repeat this procedure recursively for $k=1,\ldots,K_j$:
		\begin{equation*}
			S^{\rho_{k-1}}(p_k\delta_{x_k})=\rho_{k-1}|_{I_k^L}=\nu|_{D_k^L},
			\qquad
			D_k^L=I_k^L\setminus\bigcup_{i=1}^{k-1}D_i^L,
			\qquad
			\rho_k=\rho_{k-1}-\nu|_{D_k^L}.
		\end{equation*}
		At each round $k$, $I_k^L$ is chosen to preserve the atom's mass $p_k$ and mean $x_k$ under the residual prior $\rho_{k-1}$.
		Shadow associativity \citep[][Theorem~4.8]{BeiglbockJuillet2016} gives
		\begin{equation*}
			\nu|_{\bigcup_{i=1}^{k}D_i^L} = S^\nu\!\left(\sum_{i=1}^{k}p_i\delta_{x_i}\right).
		\end{equation*}
		Thus, the construction satisfies the cumulative-shadow characterization of left-curtain transport. 
		At $k=K_j$, the right-hand side is $S^\nu(\mu)=\nu$, so the disjoint cells exhaust the prior. Assigning $T_L(x)=x_k$ on each $D_k^L$ then implements $\mu$ from $\nu$; any remaining null states can be assigned arbitrarily. 
		Similarly, the right-curtain transport processes the atoms from high to low and constructs a Borel mapping $T_R$ implementing $\mu$ from $\nu$, together with its partition cells $D_k^R$. 
		
		Repeating this construction on every non-separating interval and setting $T(x)=x$ on $\mathbb S_G$ implements the entire $K$-pooling distribution. The number of posterior-mean cells is at most $K$ on each component, although their locations and probabilities can differ across components.
		
		When $K=1$, every non-separating interval has a one-point local target $\mu=\delta_{x_1}$, where $x_1$ is the prior's conditional mean on that interval. 
		The entire component must therefore be pooled at $x_1$: $T_L(x)=T_R(x)=x_1$ for all $x\in(c_j,d_j)$. 
		This local implementation is unique up to $\nu$-null sets. 
		Together with full revelation on $\mathbb S_G$, it gives a unique monotone partitional experiment under both curtain orders. 
		
		When $K=2$, the target is a \textit{bi-pooling} distribution. A component with $K_j=1$ is handled as above. On a component with $K_j=2$, the two curtain orders may give different mappings. 
		In this case, either left- or right-curtain transport construction assigns an interval to the first posterior mean it processes and assigns the remaining states to the other posterior mean. 
		Consequently, within this component, the lower contour sets of $T_L$ are nested intervals, whereas the upper contour sets of $T_R$ are nested intervals. 
		This recovers the familiar nested structure of bi-pooling implementations established by \citet[][Lemma~4]{Arieli2023optimal}.
		For example, let $\nu$ be uniform on $(0,1)$ and the target measure be $\mu=\frac12\delta_{1/3}+\frac12\delta_{2/3}$. 
		The left-curtain construction first pools $D_1^L=(1/12,7/12)$ at $x_1=1/3$, and pools its complement at $x_2=2/3$.
		The right-curtain construction instead first pools states in $D_2^R=(5/12,11/12)$ at $x_2=2/3$, and pools its complement at $x_1=1/3$. 
		Figure~\ref{Fig: Bi-pooling} compares these two implementations. Although the mappings differ, both have the corresponding nested interval contours.
		
		\begingroup
		\tikzset{
			curtain blue/.style={
				draw=black,line width=0.4pt,
				pattern=north west lines,pattern color=blue!65!black
			},
			curtain orange/.style={
				draw=black,line width=0.4pt,
				pattern=north east lines,pattern color=orange!85!black
			},
			curtain green/.style={
				draw=black,line width=0.4pt,
				pattern=north west lines,pattern color=green!45!black
			}
		}
		\begin{figure}[htbp]
			\centering
			\caption{Barcode transport implementations of a bi-pooling distribution with $K=2$}
			\begin{subfigure}[t]{0.47\textwidth}
				\centering
				\subcaption{via left-curtain transport}
				\begin{tikzpicture}[x=5.15cm,y=0.68cm]
					\draw[->,line width=0.55pt]
					(-0.02,6.05)--(1.04,6.05)
					node[right,font=\scriptsize] {$m$};
					\node[left,font=\scriptsize]
					at (-0.02,6.05) {$\mu$};
					
					\draw[curtain blue]
					(0.318,6.05) rectangle (0.362,6.70);
					\draw[curtain orange]
					(0.678,6.05) rectangle (0.722,6.70);
					
					\node[above,font=\scriptsize,text=blue!65!black]
					at (0.34,6.70) {$x_1$};
					\node[above,font=\scriptsize,text=orange!85!black]
					at (0.70,6.70) {$x_2$};
					
					\draw[curtain orange]
					(0,4.15) rectangle (0.0833,4.43);
					\draw[curtain blue]
					(0.0833,4.15) rectangle (0.5833,4.43);
					\draw[curtain orange]
					(0.5833,4.15) rectangle (1,4.43);
					
					\draw[->,line width=0.55pt]
					(-0.02,4.15)--(1.04,4.15)
					node[right,font=\scriptsize] {$x$};
					\node[left,font=\scriptsize]
					at (-0.02,4.15) {$\nu$};
					\node[above,font=\scriptsize,text=blue!65!black]
					at (0.333,4.34) {$D_1^L$};
					\node[above,font=\scriptsize,text=orange!85!black]
					at (0.80,4.34) {$D_2^L$};
					
					\draw[->,line width=0.55pt]
					(-0.035,0.25)--(1.04,0.25)
					node[right,font=\scriptsize] {$x$};
					\draw[->,line width=0.55pt]
					(-0.035,0.25)--(-0.035,2.85);
					\node[anchor=west,font=\scriptsize]
					at (0.04,2.62) {$T_L(x)$};
					
					\draw[blue!65!black,line width=1.2pt,line cap=butt]
					(0.0833,1.15)--(0.5833,1.15);
					\draw[orange!85!black,line width=1.2pt,line cap=butt]
					(0,2.10)--(0.0833,2.10);
					\draw[orange!85!black,line width=1.2pt,line cap=butt]
					(0.5833,2.10)--(1,2.10);
					
					\node[left,font=\scriptsize]
					at (-0.045,1.15) {$x_1$};
					\node[left,font=\scriptsize]
					at (-0.045,2.10) {$x_2$};
					\node[below,font=\scriptsize]
					at (0,0.25) {$c_j$};
					\node[below,font=\scriptsize]
					at (1,0.25) {$d_j$};
				\end{tikzpicture}
			\end{subfigure}\hfill
			\begin{subfigure}[t]{0.47\textwidth}
				\centering
				\subcaption{via right-curtain transport}
				\begin{tikzpicture}[x=5.15cm,y=0.68cm]
					\draw[->,line width=0.55pt]
					(-0.02,6.05)--(1.04,6.05)
					node[right,font=\scriptsize] {$m$};
					\node[left,font=\scriptsize]
					at (-0.02,6.05) {$\mu$};
					
					\draw[curtain blue]
					(0.318,6.05) rectangle (0.362,6.70);
					\draw[curtain orange]
					(0.678,6.05) rectangle (0.722,6.70);
					
					\node[above,font=\scriptsize,text=blue!65!black]
					at (0.34,6.70) {$x_1$};
					\node[above,font=\scriptsize,text=orange!85!black]
					at (0.70,6.70) {$x_2$};
					
					\draw[curtain blue]
					(0,4.15) rectangle (0.4167,4.43);
					\draw[curtain orange]
					(0.4167,4.15) rectangle (0.9167,4.43);
					\draw[curtain blue]
					(0.9167,4.15) rectangle (1,4.43);
					
					\draw[->,line width=0.55pt]
					(-0.02,4.15)--(1.04,4.15)
					node[right,font=\scriptsize] {$x$};
					\node[left,font=\scriptsize]
					at (-0.02,4.15) {$\nu$};
					\node[above,font=\scriptsize,text=blue!65!black]
					at (0.20,4.34) {$D_1^R$};
					\node[above,font=\scriptsize,text=orange!85!black]
					at (0.667,4.34) {$D_2^R$};
					
					\draw[->,line width=0.55pt]
					(-0.035,0.25)--(1.04,0.25)
					node[right,font=\scriptsize] {$x$};
					\draw[->,line width=0.55pt]
					(-0.035,0.25)--(-0.035,2.85);
					\node[anchor=west,font=\scriptsize]
					at (0.04,2.62) {$T_R(x)$};
					
					\draw[blue!65!black,line width=1.2pt,line cap=butt]
					(0,1.15)--(0.4167,1.15);
					\draw[blue!65!black,line width=1.2pt,line cap=butt]
					(0.9167,1.15)--(1,1.15);
					\draw[orange!85!black,line width=1.2pt,line cap=butt]
					(0.4167,2.10)--(0.9167,2.10);
					
					\node[left,font=\scriptsize]
					at (-0.045,1.15) {$x_1$};
					\node[left,font=\scriptsize]
					at (-0.045,2.10) {$x_2$};
					\node[below,font=\scriptsize]
					at (0,0.25) {$c_j$};
					\node[below,font=\scriptsize]
					at (1,0.25) {$d_j$};
				\end{tikzpicture}
			\end{subfigure}
			\label{Fig: Bi-pooling}
		\end{figure}
		\endgroup
		
		When $K\geq3$, this nested-interval structure is no longer guaranteed. 
		To illustrate this, again let $\nu$ be uniform on $(0,1)$ and consider a three-point target measure $\mu=\frac15\delta_{1/5}+\frac15\delta_{1/2}+\frac35\delta_{3/5}$. 
		The left-curtain transport first pools an interval at $x_1=1/5$, then pools another interval from the remaining prior at $x_2=1/2$, and finally assigns the residual to $x_3=3/5$. 
		The resulting cells are
		\begin{equation*}
			D_1^L=(1/10,3/10), \quad D_2^L=(2/5,3/5), \quad \text{and} \quad D_3^L=(0,1/10)\cup(3/10,2/5)\cup(3/5,1)
		\end{equation*}
		The first two cells are separated, so the lower contour set $\{x:\ T_L(x)\leq x_2\}=D_1^L\cup D_2^L$ is not an interval (cf. Figure~\ref{Fig: Three-point-distribution}a). 
		The right-curtain transport processes the same posterior means in the opposite order. 
		It first pools an interval at $x_3=3/5$, constructs the cell for $x_2=1/2$ from the remaining prior, and assigns the residual to $x_1=1/5$ (cf. Figure~\ref{Fig: Three-point-distribution}b). This gives 
		\begin{equation*}
			D_3^R=(3/10,9/10), \quad D_2^R=(7/40,3/10)\cup(9/10,39/40), \quad \text{and} \quad D_1^R=(0,7/40)\cup(39/40,1).
		\end{equation*}
		Here the second cell extends the first on both sides, so the upper contour sets remain nested intervals: $\{x:\ T_R(x)\geq x_3\}=(3/10,9/10)$ and $\{x:\ T_R(x)\geq x_2\}=(7/40,39/40)$. 
		Figure~\ref{Fig: Three-point-distribution} illustrates this difference. Both mappings implement the same target, but the nested interval structure now depends on the direction of construction.
		\qed
		
		\begingroup
		\tikzset{
			curtain blue/.style={
				draw=black,line width=0.4pt,
				pattern=north west lines,pattern color=blue!65!black
			},
			curtain orange/.style={
				draw=black,line width=0.4pt,
				pattern=north east lines,pattern color=orange!85!black
			},
			curtain green/.style={
				draw=black,line width=0.4pt,
				pattern=north west lines,pattern color=green!45!black
			}
		}
		
		\begin{figure}[htbp]
			\centering
			\caption{Barcode transport implementations of a three-point distribution with $K=3$}
			\begin{subfigure}[t]{0.47\textwidth}
				\centering
				\subcaption{via left-curtain transport}
				\begin{tikzpicture}[x=5.15cm,y=0.68cm]
					\draw[->,line width=0.55pt]
					(-0.02,6.05)--(1.04,6.05)
					node[right,font=\scriptsize] {$m$};
					\node[left,font=\scriptsize]
					at (-0.02,6.05) {$\mu$};
					
					\draw[curtain blue]
					(0.178,6.05) rectangle (0.222,6.50);
					\draw[curtain orange]
					(0.478,6.05) rectangle (0.522,6.50);
					\draw[curtain green]
					(0.628,6.05) rectangle (0.672,6.82);
					
					\node[above,font=\scriptsize,text=blue!65!black]
					at (0.20,6.50) {$x_1$};
					\node[above,font=\scriptsize,text=orange!85!black]
					at (0.50,6.50) {$x_2$};
					\node[above,font=\scriptsize,text=green!45!black]
					at (0.65,6.82) {$x_3$};
					
					\draw[curtain green]
					(0,4.15) rectangle (0.10,4.43);
					\draw[curtain blue]
					(0.10,4.15) rectangle (0.30,4.43);
					\draw[curtain green]
					(0.30,4.15) rectangle (0.40,4.43);
					\draw[curtain orange]
					(0.40,4.15) rectangle (0.60,4.43);
					\draw[curtain green]
					(0.60,4.15) rectangle (1,4.43);
					
					\draw[->,line width=0.55pt]
					(-0.02,4.15)--(1.04,4.15)
					node[right,font=\scriptsize] {$x$};
					\node[left,font=\scriptsize]
					at (-0.02,4.15) {$\nu$};
					\node[above,font=\scriptsize,text=blue!65!black]
					at (0.20,4.34) {$D_1^L$};
					\node[above,font=\scriptsize,text=orange!85!black]
					at (0.50,4.34) {$D_2^L$};
					\node[above,font=\scriptsize,text=green!45!black]
					at (0.80,4.34) {$D_3^L$};
					
					\draw[->,line width=0.55pt]
					(-0.035,0.25)--(1.04,0.25)
					node[right,font=\scriptsize] {$x$};
					\draw[->,line width=0.55pt]
					(-0.035,0.25)--(-0.035,2.85);
					\node[anchor=west,font=\scriptsize]
					at (0.04,2.62) {$T_L(x)$};
					
					\draw[blue!65!black,line width=1.2pt,line cap=butt]
					(0.10,0.80)--(0.30,0.80);
					\draw[orange!85!black,line width=1.2pt,line cap=butt]
					(0.40,1.50)--(0.60,1.50);
					\draw[green!45!black,line width=1.2pt,line cap=butt]
					(0,2.20)--(0.10,2.20);
					\draw[green!45!black,line width=1.2pt,line cap=butt]
					(0.30,2.20)--(0.40,2.20);
					\draw[green!45!black,line width=1.2pt,line cap=butt]
					(0.60,2.20)--(1,2.20);
					
					\node[left,font=\scriptsize]
					at (-0.045,0.80) {$x_1$};
					\node[left,font=\scriptsize]
					at (-0.045,1.50) {$x_2$};
					\node[left,font=\scriptsize]
					at (-0.045,2.20) {$x_3$};
					\node[below,font=\scriptsize]
					at (0,0.25) {$c_j$};
					\node[below,font=\scriptsize]
					at (1,0.25) {$d_j$};
				\end{tikzpicture}
			\end{subfigure}\hfill
			\begin{subfigure}[t]{0.47\textwidth}
				\centering
				\subcaption{via right-curtain transport}
				\begin{tikzpicture}[x=5.15cm,y=0.68cm]
					\draw[->,line width=0.55pt]
					(-0.02,6.05)--(1.04,6.05)
					node[right,font=\scriptsize] {$m$};
					\node[left,font=\scriptsize]
					at (-0.02,6.05) {$\mu$};
					
					\draw[curtain blue]
					(0.178,6.05) rectangle (0.222,6.50);
					\draw[curtain orange]
					(0.478,6.05) rectangle (0.522,6.50);
					\draw[curtain green]
					(0.628,6.05) rectangle (0.672,6.82);
					
					\node[above,font=\scriptsize,text=blue!65!black]
					at (0.20,6.50) {$x_1$};
					\node[above,font=\scriptsize,text=orange!85!black]
					at (0.50,6.50) {$x_2$};
					\node[above,font=\scriptsize,text=green!45!black]
					at (0.65,6.82) {$x_3$};
					
					\draw[curtain blue]
					(0,4.15) rectangle (0.175,4.43);
					\draw[curtain orange]
					(0.175,4.15) rectangle (0.30,4.43);
					\draw[curtain green]
					(0.30,4.15) rectangle (0.90,4.43);
					\draw[curtain orange]
					(0.90,4.15) rectangle (0.975,4.43);
					\draw[curtain blue]
					(0.975,4.15) rectangle (1,4.43);
					
					\draw[->,line width=0.55pt]
					(-0.02,4.15)--(1.04,4.15)
					node[right,font=\scriptsize] {$x$};
					\node[left,font=\scriptsize]
					at (-0.02,4.15) {$\nu$};
					\node[above,font=\scriptsize,text=blue!65!black]
					at (0.08,4.34) {$D_1^R$};
					\node[above,font=\scriptsize,text=orange!85!black]
					at (0.238,4.34) {$D_2^R$};
					\node[above,font=\scriptsize,text=green!45!black]
					at (0.60,4.34) {$D_3^R$};
					
					\draw[->,line width=0.55pt]
					(-0.035,0.25)--(1.04,0.25)
					node[right,font=\scriptsize] {$x$};
					\draw[->,line width=0.55pt]
					(-0.035,0.25)--(-0.035,2.85);
					\node[anchor=west,font=\scriptsize]
					at (0.04,2.62) {$T_R(x)$};
					
					\draw[blue!65!black,line width=1.2pt,line cap=butt]
					(0,0.80)--(0.175,0.80);
					\draw[blue!65!black,line width=1.2pt,line cap=butt]
					(0.975,0.80)--(1,0.80);
					\draw[orange!85!black,line width=1.2pt,line cap=butt]
					(0.175,1.50)--(0.30,1.50);
					\draw[orange!85!black,line width=1.2pt,line cap=butt]
					(0.90,1.50)--(0.975,1.50);
					\draw[green!45!black,line width=1.2pt,line cap=butt]
					(0.30,2.20)--(0.90,2.20);
					
					\node[left,font=\scriptsize]
					at (-0.045,0.80) {$x_1$};
					\node[left,font=\scriptsize]
					at (-0.045,1.50) {$x_2$};
					\node[left,font=\scriptsize]
					at (-0.045,2.20) {$x_3$};
					\node[below,font=\scriptsize]
					at (0,0.25) {$c_j$};
					\node[below,font=\scriptsize]
					at (1,0.25) {$d_j$};
				\end{tikzpicture}
			\end{subfigure}
			\label{Fig: Three-point-distribution}
		\end{figure}
		\endgroup
	\end{example}
	
	The next example shows how the barcode transport works when the target distribution has a continuum of posterior means with a non-separating interval. 
	
	\begin{example}[\textbf{Normal distributions}]\label{Example: Normal}
		\normalfont
		Let $G=N(0,\sigma_G^2)$ and $F=N(0,\sigma_F^2)$, with $r:=\sigma_F/\sigma_G>1$, and denote their probability measures by $\mu,\nu$ and their densities by $g,f$. 
		The wider normal is a mean-preserving spread of the target, with $I_G(z)<I_F(z)$ for every finite $z$. 
		Thus the entire real line is one non-separating interval, and the target has a positive continuous density throughout it.
		
		To implement the barcode transport, we first construct the bars. 
		For these normal distributions, the set of points where $\mu$ density-dominates $\nu$ is
		\begin{equation*}
			C_1 = \left\{x:\ \frac{d\mu}{d(\mu+\nu)}(x)\geq\frac{1}{2}\right\} = \{x:g(x)\geq f(x)\}=[-c,c], 
			\quad \text{where} \quad
			c=\sigma_G\sqrt{\frac{2\log r}{1-r^{-2}}}\ . 
		\end{equation*}
		Let $\mu_1=\mu|_{C_1}$. 
		Its shadow on $\nu$ is $\nu_1=\nu|_{D_1}$ with $D_1=[-rc,rc]$. 
		The first pair of bars is thus $(\mu_1,\nu_1)$.
		
		After removing this pair, the remaining target measure $\tilde{\mu}_2=\mu-\mu_1=\mu|_{\{x:|x|>c\}}$ has positive mass only outside $[-c,c]$, whereas the remaining prior measure $\tilde{\nu}_2=\nu-\nu_1=\nu|_{\{x:|x|>rc\}}$ has positive mass only outside $[-rc,rc]$. 
		Hence, the next density-dominant region is
		\begin{align*}
			C_2 &= \left\{x:\ \frac{d\tilde{\mu}_2}{d(\tilde{\mu}_2+\tilde{\nu}_2)}(x)\geq\frac{1}{2}\right\} \\
			&= \{x:\ g(x)\1_{\{|x|>c\}}\geq f(x)\1_{\{|x|>rc\}}\} \\
			&= \{x:c<|x|\leq rc\} = [-rc,-c)\cup(c,rc] 
		\end{align*}
		Let $\mu_2=\tilde{\mu}_2|_{C_2}$. Its shadow on $\tilde{\nu}_2$ is 
		\begin{equation*}
			\nu_2 = S^{\tilde{\nu}_2}(\mu_2) = \nu|_{D_2}, \quad \text{where} \quad D_2 = \{x:rc<|x|\leq r^2c\} = [-r^2c,-rc)\cup(rc,r^2c]. 
		\end{equation*}
		The second pair is then $(\mu_2,\nu_2)$. 
		Figure~\ref{fig:normal_barcode_step0} illustrates the construction of the first two pairs of barcode bars for $\sigma_G=1$ and $r=2$.
		
		More generally, for $n\geq2$, let $\mu_n=\mu|_{C_n}$ and $\nu_n=\nu|_{D_n}$, where
		\begin{equation*}
			C_n = \{x:\ r^{n-2}c<|x|\leq r^{n-1}c\}
			\qquad \text{and} \qquad 
			D_n = rC_n=\{x:\ r^{n-1}c<|x|\leq r^nc\}.
		\end{equation*}
		Then $(\mu_n,\nu_n)$ forms the $n$-th pair of bars. 
		Both have equal mass and zero barycenter, and $\nu_n$ is a symmetric outward dilation of $\mu_n$. 
		Moreover, $D_n$ takes the innermost remaining prior states with that mass, so its restriction is the shadow of $\mu|_{C_n}$. 
		Since $\cup_{n=1}^{\infty}{C_n}=\cup_{n=1}^{\infty}{D_n}=\R$, we have $\mu=\sum_{n=1}^{\infty}{\mu_n}$ and $\nu=\sum_{n=1}^{\infty}{\nu_n}$ so the countable collections of bars $\{(\mu_n,\nu_n)\}_n$ indeed exhaust the target and prior measures $\mu$ and $\nu$.  
		
		\begin{figure}[htbp]
			\centering
			\caption{The first two pairs of barcode bars and their implementations}
			\begin{subfigure}[t]{0.45\textwidth}
				\centering
				\subcaption{The first pair of bars.}
				\begin{tikzpicture}
					\begin{axis}[
						width=0.80\linewidth,
						height=0.2\textheight,
						scale only axis,
						axis lines=left,
						axis line style={line width=0.8pt,-{Stealth[length=2mm]}},
						xmin=-3.15,xmax=3.15,
						ymin=-0.005,ymax=0.43,
						xtick={-2.719112,-1.359556,0,1.359556,2.719112},
						xticklabels={$-2c$,$-c$,$0$,$c$,$2c$},
						ytick={},
						label style={font=\small},
						tick label style={font=\footnotesize},
						legend style={at={(0.03,0.98)},anchor=north west,
							legend cell align=left,fill=none,draw=none},
						clip=false,
						samples=180
						]
						\addplot[
						draw=none,fill=black!25,fill opacity=0.35,
						domain=-2.719112:2.719112,forget plot
						] {1/(2*sqrt(2*pi))*exp(-x^2/8)} \closedcycle;
						\addplot[
						draw=none,fill=red!35,fill opacity=0.35,
						domain=-1.359556:1.359556,forget plot
						] {1/sqrt(2*pi)*exp(-x^2/2)} \closedcycle;
						\addplot[black,thick,domain=-3.15:3.15]
						{1/(2*sqrt(2*pi))*exp(-x^2/8)};
						\addlegendentry{\small $f$}
						\addplot[red,thick,domain=-3.15:3.15]
						{1/sqrt(2*pi)*exp(-x^2/2)};
						\addlegendentry{\small \textcolor{red}{$g$}}
						\node[font=\small,text=red!70!black] at (axis cs:0,0.075) {$C_1$};
						\node[font=\small] at (axis cs:2.15,0.075) {$D_1$};
					\end{axis}
				\end{tikzpicture}
			\end{subfigure}\hfill
			\begin{subfigure}[t]{0.45\textwidth}
				\centering
				\subcaption{The second pair of bars.}
				\begin{tikzpicture}
					\begin{axis}[
						width=0.80\linewidth,
						height=0.20\textheight,
						scale only axis,
						axis lines=left,
						axis line style={line width=0.8pt,-{Stealth[length=2mm]}},
						xmin=-6.05,xmax=6.05,
						ymin=-0.003,ymax=0.19,
						xtick={-5.438224,-2.719112,-1.359556,0,1.359556,2.719112,5.438224},
						xticklabels={$-4c$,$-2c$,$-c$,$0$,$c$,$2c$,$4c$},
						ytick={},
						label style={font=\small},
						tick label style={font=\scriptsize},
						legend style={at={(0.03,0.98)},anchor=north west,
							legend cell align=left,fill=none,draw=none},
						clip=false,
						samples=120
						]
						\addplot[
						draw=none,fill=black!25,fill opacity=0.35,
						domain=-5.438224:-2.719112,forget plot
						] {1/(2*sqrt(2*pi))*exp(-x^2/8)} \closedcycle;
						\addplot[
						draw=none,fill=black!25,fill opacity=0.35,
						domain=2.719112:5.438224,forget plot
						] {1/(2*sqrt(2*pi))*exp(-x^2/8)} \closedcycle;
						\addplot[
						draw=none,fill=red!35,fill opacity=0.35,
						domain=-2.719112:-1.359556,forget plot
						] {1/sqrt(2*pi)*exp(-x^2/2)} \closedcycle;
						\addplot[
						draw=none,fill=red!35,fill opacity=0.35,
						domain=1.359556:2.719112,forget plot
						] {1/sqrt(2*pi)*exp(-x^2/2)} \closedcycle;
						
						\addplot[black,thick,domain=-6.05:-2.719112]
						{1/(2*sqrt(2*pi))*exp(-x^2/8)};
						\addlegendentry{\small $f$}
						\addplot[black,thick,domain=2.719112:6.05,forget plot]
						{1/(2*sqrt(2*pi))*exp(-x^2/8)};
						\addplot[red,thick,domain=-6.05:-1.359556]
						{1/sqrt(2*pi)*exp(-x^2/2)};
						\addlegendentry{\small \textcolor{red}{$g$}}
						\addplot[red,thick,domain=1.359556:6.05,forget plot]
						{1/sqrt(2*pi)*exp(-x^2/2)};
						\node[font=\small,text=red!70!black] at (axis cs:1.9,0.025) {$C_2$};
						\node[font=\small] at (axis cs:3.5,0.025) {$D_2$};
					\end{axis}
				\end{tikzpicture}
			\end{subfigure}
		\end{figure}
		
		The second barcode step constructs $T_n$ from the left-curtain transport within each pair $(\mu_n,\nu_n)$. For a target mean level $s$ between the endpoints of $C_n$, the target's lower contour has shadow
		\begin{equation*}
			S^{\nu_n}\!\left(\mu_n|_{(-\infty,s]}\right)
			=\nu_n|_{[a_n(s),b_n(s)]},
		\end{equation*}
		where the boundaries are determined by 
		\begin{align}
			\int_{D_n\cap[a_n(s),b_n(s)]}f(x)\,dx &= \int_{C_n\cap(-\infty,s]}g(m)\,dm, \label{Eq: Normal_mass_equality} \\
			\int_{D_n\cap[a_n(s),b_n(s)]}x f(x)\,dx &= \int_{C_n\cap(-\infty,s]}m g(m)\,dm. \label{Eq: Normal_mean_equality}
		\end{align}
		In the first pair $(\mu_1,\nu_1)$, the lower contour set $\{x\in D_1:\ T_1(x)\leq s\}$ is an interval $[a_1(s),b_1(s)]$, which expands from the singleton $\{-c\}$ to $D_1=[-rc,rc]$ as $s$ rises in $C_1=[-c,c]$. 
		More generally, in the $n$-th pair with $n\geq2$, the lower contour set $\{x\in D_n:\ T_n(x)\leq s\}$ is a union of two intervals
		\begin{equation*}
			[a_n(s),-r^{n-1}c]\ \cup\ [r^{n-1}c,b_n(s)],
		\end{equation*}
		which expands from the inner edges of $D_n$ to its outer edges as $s$ increases on $C_n$. 
		It stays fixed as $s$ crosses the gap between the two pieces of $C_n$. Thus $a_n$ decreases and $b_n$ increases on each part of the target support. 
		Figure~\ref{Fig: Normal_map} illustrates this for the first two pairs of barcode bars. 
		
		\begin{figure}[htbp]
			\centering
			\caption{A barcode implementation using left-curtain transport within the first two bars} 
			\begin{tikzpicture}
				\begin{axis}[
					width=0.84\textwidth,
					height=0.24\textheight,
					scale only axis,
					axis lines=middle,
					axis line style={line width=0.8pt,-{Stealth[length=2mm]}},
					xmin=-4.35,xmax=4.35,
					ymin=-2.35,ymax=2.35,
					xtick={-4,-2,0,2,4},
					xticklabels={$-r^2c$,$-rc$,$0$,$rc$,$r^2c$},
					ytick={-2,-1,0,1,2},
					yticklabels={$-rc$,$-c$,$0$,$c$,$rc$},
					xlabel={$x$},
					xlabel style={at={(axis description cs:1.015,0.455)},anchor=south,rotate=0},
					ylabel={$T(x)$},
					ylabel style={at={(axis description cs:0.5,1.0)},anchor=south,rotate=0},
					label style={font=\footnotesize},
					tick label style={font=\footnotesize},
					xticklabel style={anchor=south,yshift=2pt}, 
					clip=false
					]
					\draw[black!25,densely dotted,line width=0.6pt]
					(axis cs:-2,-2.30)--(axis cs:-2,2.30);
					\draw[black!25,densely dotted,line width=0.6pt]
					(axis cs:2,-2.30)--(axis cs:2,2.30);
					
					\addplot[blue!70!black,very thick,smooth] coordinates {
						(-2.000000,1.000) (-1.932250,0.875)
						(-1.854931,0.750) (-1.769953,0.625)
						(-1.679742,0.500) (-1.587002,0.375)
						(-1.494502,0.250) (-1.404918,0.125)
						(-1.320701,0.000) (-1.243987,-0.125)
						(-1.176518,-0.250) (-1.119572,-0.375)
						(-1.073912,-0.500) (-1.039742,-0.625)
						(-1.016703,-0.750) (-1.003905,-0.875)
						(-1.000000,-1.000)
					};
					\addplot[blue!70!black,very thick,smooth] coordinates {
						(-1.000000,-1.000) (-0.867923,-0.875)
						(-0.722353,-0.750) (-0.564443,-0.625)
						(-0.395485,-0.500) (-0.216832,-0.375)
						(-0.029846,-0.250) (0.164135,-0.125)
						(0.363808,0.000) (0.567900,0.125)
						(0.775138,0.250) (0.984214,0.375)
						(1.193717,0.500) (1.402070,0.625)
						(1.607432,0.750) (1.807615,0.875)
						(2.000000,1.000)
					};
					
					\addplot[orange!85!black,very thick,smooth] coordinates {
						(-4.000000,2.000) (-3.926273,1.875)
						(-3.829400,1.750) (-3.710403,1.625)
						(-3.572604,1.500) (-3.420570,1.375)
						(-3.259044,1.250) (-3.092300,1.125)
						(-2.923919,1.000)
					};
					\addplot[orange!85!black,very thick,smooth] coordinates {
						(-2.923919,-1.000) (-2.590418,-1.125)
						(-2.386978,-1.250) (-2.251129,-1.375)
						(-2.157744,-1.500) (-2.093299,-1.625)
						(-2.049218,-1.750) (-2.019553,-1.875)
						(-2.000000,-2.000)
					};
					\addplot[orange!85!black,very thick,smooth] coordinates {
						(2.000000,-2.000) (2.000379,-1.875)
						(2.002087,-1.750) (2.006437,-1.625)
						(2.015604,-1.500) (2.033036,-1.375)
						(2.064093,-1.250) (2.117169,-1.125)
						(2.206332,-1.000)
					};
					\addplot[orange!85!black,very thick,smooth] coordinates {
						(2.206332,1.000) (2.440328,1.125)
						(2.674536,1.250) (2.908133,1.375)
						(3.139953,1.500) (3.368264,1.625)
						(3.590483,1.750) (3.802816,1.875)
						(4.000000,2.000)
					};
					
					\addplot[red!65!black,dashed,line width=0.8pt] coordinates {
						(-1.679742,0.5) (1.193717,0.5)
					};
					\addplot[red!65!black,only marks,mark=*,mark size=1.5pt]
					coordinates {(-1.679742,0.5) (1.193717,0.5)};
					\node[font=\footnotesize,text=red!65!black,anchor=south east]
					at (axis cs:-0.60,0.54) {$s_1$};
					
					\addplot[red!65!black,dashed,line width=0.8pt] coordinates {
						(-3.572604,1.5) (3.139953,1.5)
					};
					\addplot[red!65!black,only marks,mark=*,mark size=1.5pt]
					coordinates {(-3.572604,1.5) (3.139953,1.5)};
					\node[font=\footnotesize,text=red!65!black,anchor=south east]
					at (axis cs:-0.60,1.54) {$s_2$};
					
					\draw[red!65!black,densely dotted,line width=0.6pt]
					(axis cs:-1.679742,0.5)--(axis cs:-1.679742,0)
					(axis cs: 1.193717,0.5)--(axis cs: 1.193717,0)
					(axis cs:-3.572604,1.5)--(axis cs:-3.572604,0)
					(axis cs: 3.139953,1.5)--(axis cs: 3.139953,0);
					
					\addplot[red!65!black,only marks,mark=*,mark size=1.2pt]
					coordinates {
						(-1.679742,0) (1.193717,0)
						(-3.572604,0) (3.139953,0)
					};
					
					\node[font=\footnotesize,text=red!65!black,
					anchor=north,yshift=-3pt]
					at (axis cs:-1.679742,0) {$a_1(s_1)$};
					
					\node[font=\footnotesize,text=red!65!black,
					anchor=north,yshift=-3pt]
					at (axis cs:1.193717,0) {$b_1(s_1)$};
					
					\node[font=\footnotesize,text=red!65!black,
					anchor=north,yshift=-3pt]
					at (axis cs:-3.572604,0) {$a_2(s_2)$};
					
					\node[font=\footnotesize,text=red!65!black,
					anchor=north,yshift=-3pt]
					at (axis cs:3.139953,0) {$b_2(s_2)$};
				\end{axis}
			\end{tikzpicture}
			\label{Fig: Normal_map}
			\vspace*{0.75ex}
			\begin{minipage}{1.0\textwidth}
				\footnotesize 
				Note: Blue branches show $T_1$ on $D_1=[-rc,rc]$, and orange branches show $T_2$ on $D_2=[-r^2c,rc)\cup(rc,r^2c]$. 
			\end{minipage}
		\end{figure}
		
		Each state is assigned the posterior mean at which it enters the expanding contour:
		\begin{equation*}
			T_n(x)=\inf\{s\in C_n:x\in[a_n(s),b_n(s)]\} \qquad \text{for all} \quad x\in D_n.
		\end{equation*}
		The two moment constraints \eqref{Eq: Normal_mass_equality} and \eqref{Eq: Normal_mean_equality} ensure that this mapping pushes $\nu_n$ forward to $\mu_n$ and satisfies $\E[X|T_n(X)]=T_n(X)$. 
		Pasting the maps over $D_n$ yields $T(X)\sim G$ and $\E[X|T(X)]=T(X)$. Therefore, $T$ implements $G$ from $F$.  
		\qed 
	\end{example}
	
	These examples highlight two distinct uses of the construction. A $K$-pooling target requires one bar and at most $K$ posterior-mean cells on each non-separating interval. The normal target requires countably many bars and a continuum of posterior means within each bar. The normal contours are nested within individual bars, but for $n\geq2$ they are unions of two state intervals; pasting the bars does not ensure interval contour sets on the whole non-separating component. The next section characterizes when implementation can instead be organized through lower or upper contour sets that are ordinary nested intervals on each component.
	
	\section{Contour nested-interval implementation}\label{Sec: Contour_nested_interval}
	The barcode construction guarantees deterministic implementation, but its contour sets need not be intervals. In this section we characterize when a contour nested-interval implementation is feasible and provide conditions on the prior and target distributions that help determine which contour direction to use.
	Throughout, we fix an atomless prior $F$ with finite first moment and a target $G\in\MPC(F)$, and write $\mathbb{NS}_G=\cup_{j\in J}(c_j,d_j)$ for its non-separating intervals.
	
	\begin{definition}\label{Definition: Contour_nested_interval}
		Let $T$ implement $G$ from $F$. On a non-separating interval $(c_j,d_j)$, $T$ is
		\begin{itemize}
			\item a \emph{\textbf{lower contour nested-interval implementation}} if, for every $s\in(c_j,d_j)$, the lower contour set $\{x\in(c_j,d_j):T(x)\leq s\}$ is an interval;
			\item an \emph{\textbf{upper contour nested-interval implementation}} if, for every $s\in(c_j,d_j)$, the upper contour set $\{x\in(c_j,d_j):T(x)\geq s\}$ is an interval.
		\end{itemize}
		We call $T$ a \emph{\textbf{contour nested-interval experiment}} if it has one of these two structures on every non-separating interval. The direction may differ across intervals.\footnote{Throughout this section, contour sets and interval restrictions are identified up to prior-null sets. Since $F$ is atomless, interval endpoint conventions are immaterial. Empty sets and the entire component are also counted as intervals.}
	\end{definition}
	
	Fix a non-separating interval $(c_j,d_j)$. As in Section~\ref{Subsec: Barcode_Transport}, let $\mu$ and $\nu$ denote the probability measures of $G|_{(c_j,d_j)}$ and $F|_{(c_j,d_j)}$, respectively. By Proposition~\ref{Prop: Decomposition}, $\mu\leq_{cx}\nu$, and the prior and target have the same mass
	\begin{equation*}
		p_j:=F(d_j)-F(c_j)=G(d_j)-G(c_j)>0
	\end{equation*}
	on this interval. We first introduce a useful auxiliary result.
	
	\begin{lemma}\label{Lemma: Moment_matched_interval}
		For every submeasure $0\leq\alpha\leq\mu$, there is an interval $I\subseteq[c_j,d_j]$ such that $\nu|_I$ has the same mass and first moment as $\alpha$.
		This interval restriction is unique up to $\nu$-null sets. Moreover, among all submeasures of $\nu$ with these two moments, $\nu|_I$ is minimal in convex order.
	\end{lemma}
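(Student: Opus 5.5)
Write $q:=\alpha(\R)$. If $q=0$, take $I=\emptyset$ and there is nothing to prove, so assume $q>0$ and let $m:=\frac1q\int x\,d\alpha(x)$ be the barycenter of $\alpha$. The plan is to reduce everything to the atom $q\delta_m$ and to treat existence, uniqueness and minimality separately. For existence, parametrize the candidate intervals by a quantile index, locate $m$ within the range of their barycenters, and apply the intermediate value theorem. Uniqueness will follow from strict monotonicity of that barycenter. Minimality will come from a sign-change argument against affine functions.

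\emph{Existence.} Since $F$ is atomless, so is $\nu$, and $N(x):=\nu((-\infty,x])$ is continuous. For $u\in[0,1-q]$ let
\begin{equation*}
I(u):=[N^{-1}(u),N^{-1}(u+q)]\cap[c_j,d_j],
\end{equation*}
using generalized inverses, so that $\nu(I(u))=q$ for every such $u$. The map $\beta(u):=\frac1q\int_{I(u)}x\,d\nu$ is continuous in $u$ by atomlessness and the finite first moment. It therefore suffices to show $\beta(0)\le m\le\beta(1-q)$.

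For this bracket, first note that $q\delta_m\le_{cx}\alpha$ by Jensen. Next, $\alpha\le_{ecx}\nu$: for nonnegative convex $\varphi$ we have $\int\varphi\,d\alpha\le\int\varphi\,d\mu\le\int\varphi\,d\nu$, using $\alpha\le\mu$ and $\mu\le_{cx}\nu$. Hence $q\delta_m\le_{ecx}\nu$, and the shadow $S^\nu(q\delta_m)$ exists. It is a submeasure $\eta\le\nu$ of mass $q$ and mean $m$.

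Among all submeasures of $\nu$ with mass $q$, the leftmost one $\nu|_{I(0)}$ has the smallest mean and the rightmost one $\nu|_{I(1-q)}$ the largest; this is the elementary rearrangement fact. So $\beta(0)\le m\le\beta(1-q)$, and the intermediate value theorem gives some $u^*$ with $\beta(u^*)=m$. Alternatively, one can cite the remark after \eqref{Eq: Shadow} that the shadow of an atom in an atomless measure is an interval restriction.

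\emph{Uniqueness.} If $u<u'$ give intervals differing on a set of positive $\nu$-measure, then passing from $I(u)$ to $I(u')$ removes mass $\nu(I(u)\setminus I(u'))>0$ lying weakly left of $N^{-1}(u')$. It replaces this by equal mass lying weakly right of $N^{-1}(u+q)\ge N^{-1}(u')$, with strict separation on a set of positive mass. Hence $\beta(u)<\beta(u')$. Thus $\beta$ is strictly increasing modulo $\nu$-null changes, and the matching interval is unique up to $\nu$-null sets.

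\emph{Minimality.} Let $\eta\le\nu$ have mass $q$ and mean $m$, write $I=[a,b]$ for the matched interval, and set $\sigma:=\eta-\nu|_I$. Because $\eta\le\nu$, $\sigma\le0$ on $I$, and $\sigma=\eta\ge0$ off $I$. Moreover $\sigma$ has zero total mass and zero first moment.

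For convex $\varphi$, let $\ell$ be the affine function agreeing with $\varphi$ at $a$ and $b$; if an endpoint is infinite, use a supporting line or take a limit. Then $\varphi-\ell\le0$ on $I$ and $\varphi-\ell\ge0$ outside $I$, so
\begin{equation*}
\int\varphi\,d\sigma=\int(\varphi-\ell)\,d\sigma\ge0 .
\end{equation*}
Here $\int\ell\,d\sigma=0$ by the two moment conditions. This gives $\nu|_I\le_{cx}\eta$.

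I expect the main technical care to lie in the bracketing step and in the unbounded-endpoint cases, where integrability and the limiting choice of $\ell$ must be handled using the finite first moment of $F$.
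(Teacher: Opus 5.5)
Your existence and uniqueness steps follow the paper's skeleton. You parametrize mass-$q$ intervals by quantile windows, observe that $\beta(u)=\frac1q\int_u^{u+q}N^{-1}(y)\,dy$ is continuous and strictly increasing, and bracket $m$ between the two extreme windows.

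\textbf{The bracket.} Here you take a different and heavier route. The paper tests $\alpha\le\mu\le_{cx}\nu$ directly against $(N^{-1}(q)-x)_+$ and $(x-N^{-1}(1-q))_+$. This gives $\int_0^q N^{-1}\le qm\le\int_{1-q}^1 N^{-1}$ in two lines with no outside input. You instead produce some $\eta\le\nu$ with mass $q$ and mean $m$ and then apply the rearrangement inequality. Your $\eta$ comes from the shadow, whose existence for $\le_{ecx}$ pairs is a nontrivial Beiglb\"ock--Juillet result. Disintegrating a Strassen coupling of $\mu,\nu$ against $d\alpha/d\mu$ would give such an $\eta$ more cheaply. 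Your alternative of citing the remark after the shadow definition would be circular, since that remark is essentially this lemma.

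\textbf{Uniqueness.} You should say explicitly that every interval of $\nu$-mass $q$ coincides $\nu$-a.e.\ with some $I(u)$. Otherwise strict monotonicity of $\beta$ only rules out competitors inside your family.

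\textbf{The gap: minimality when $I$ is unbounded.} The supporting line you suggest does not work. Take $I=(-\infty,b]$. The sign argument needs $\varphi-\ell\le0$ on $I$, where $\sigma\le0$, and no affine $\ell$ dominates a strictly convex $\varphi$ on a half-line. A supporting line gives $\varphi-\ell\ge0$ on $I$, which is the wrong sign.

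Taking chords at $a_n\to-\infty$ leaves an error term $\int_{(-\infty,a_n)}(\varphi-\ell_n)\,d\sigma\le0$. You can remove it by dominated convergence, dominating by $\varphi$ minus its supporting line at $b$, but only when $\int\varphi\,d\nu|_I<\infty$. The case $\int\varphi\,d\nu|_I=+\infty$ would need a separate argument.

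\textbf{The missing observation.} The paper uses the fact that this case is degenerate. An endpoint $a=-\infty$ forces $u^*=0$, so $\nu|_I$ is the leftmost mass-$q$ piece of $\nu$ and attains the minimal mean. It is moreover the only submeasure of $\nu$ that does so. Concretely, $\int(x-b)\,d\sigma=0$ by the two moment conditions. The measure $(x-b)\,d\sigma$ is nonnegative, because both factors are $\le0$ on $(-\infty,b)$ and $\ge0$ on $(b,\infty)$. Hence $\sigma$ vanishes off $\{b\}$, and $\eta(\{b\})\le\nu(\{b\})=0$. So $\eta=\nu|_I$, and minimality is trivial. The case $b=+\infty$ is symmetric, and if both endpoints are infinite then $q=1$ and $\eta=\nu$.
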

	\begin{proof}[Proof of Lemma~\ref{Lemma: Moment_matched_interval}]
		See Appendix~\ref{App. Proof_Lemma_Interval}.
	\end{proof}
	
	For lower contour nested-interval implementation, define $\alpha_s:=\mu|_{(c_j,s]}$ for $s\in(c_j,d_j)$. Lemma~\ref{Lemma: Moment_matched_interval} identifies a unique prior interval restriction $\beta_s:=\nu|_{I_j(s)}$ with the same mass and first moment. For a nonempty interval $I_j(s)=[a_j(s),b_j(s)]$, its boundaries satisfy
	\begin{align}
		F(b_j(s))-F(a_j(s))&=G(s)-G(c_j), \label{Eq: Lower_contour_mass}\\
		\int_{a_j(s)}^{b_j(s)}x\,dF(x)&=\int_{(c_j,s]}m\,dG(m). \label{Eq: Lower_contour_barycenter}
	\end{align}
	Proposition~\ref{Prop: Nested_Contours} gives necessary and sufficient conditions for these intervals to be the lower contour sets of an implementing signal.
	
	\begin{proposition}\label{Prop: Nested_Contours}
		Let $(c_j,d_j)$ be a non-separating interval of $G$ and, for each $s\in(c_j,d_j)$, let $I_j(s)$ be the interval determined by \eqref{Eq: Lower_contour_mass} and \eqref{Eq: Lower_contour_barycenter}.
		Then the following statements are equivalent: 
		\begin{enumerate}
			\item[(a)] There exists a lower contour nested-interval implementation on $(c_j,d_j)$.
			\item[(b)] The intervals $I_j(s)$ expand in $s$: $I_j(s')\subseteq I_j(s'')$ for $s'<s''$, up to $\nu$-null sets.
			\item[(c)] $\alpha_s\leq_{cx}\beta_s$ for every $s\in(c_j,d_j)$.
			\item[(d)] $S^\nu(\alpha_s)=\beta_s$ for every $s\in(c_j,d_j)$.
		\end{enumerate}
	\end{proposition}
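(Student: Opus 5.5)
I would prove the equivalences in the cycle (a)$\Rightarrow$(b)$\Rightarrow$(c)$\Rightarrow$(d)$\Rightarrow$(a), working throughout on the fixed component $(c_j,d_j)$ with the probability measures $\mu\leq_{cx}\nu$. The uniqueness clause of Lemma~\ref{Lemma: Moment_matched_interval} will do most of the identification work.

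For (a)$\Rightarrow$(b), let $T$ be a lower contour nested-interval implementation and put $L(s):=\{x\in(c_j,d_j):T(x)\leq s\}$. By hypothesis $L(s)$ is an interval. Because $T$ pushes $\nu$ forward to $\mu$, we have $\nu(L(s))=\mu((c_j,s])$. Because $\E[X|T(X)]=T(X)$, integrating over the event $\{T(X)\leq s\}$ gives $\int_{L(s)}x\,d\nu=\int_{(c_j,s]}m\,d\mu$. So $\nu|_{L(s)}$ is an interval restriction with the same mass and first moment as $\alpha_s$. By the uniqueness in Lemma~\ref{Lemma: Moment_matched_interval}, $L(s)=I_j(s)$ up to $\nu$-null sets. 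Since $L(s)$ is monotone in $s$ by construction, (b) follows.

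For (b)$\Rightarrow$(c), I would go through (a) or (d) rather than prove (c) directly. From the nested family $I_j(s)$, define $T(x):=\inf\{s: x\in I_j(s)\}$, modified on null sets. The main difficulty is to check that this $T$ is a martingale map. Its lower contours are the $I_j(s)$, up to handling right-continuity in $s$: $I_j(s)$ is right-continuous because the moments of $\alpha_s$ are. So $T_\#\nu$ has cdf $\mu((c_j,s])$, which gives $T_\#\nu=\mu$.

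For the barycenter condition, I would note that for $s'<s''$ the difference $\nu|_{I_j(s'')\setminus I_j(s')}$ has the same mass and mean as $\mu|_{(s',s'']}$. Hence $\E[X\1\{T(X)\in(s',s'']\}]=\int_{(s',s'']}m\,d\mu$. Since intervals $(s',s'']$ generate the Borel sets, this gives $\E[X\,h(T(X))]=\E[T(X)h(T(X))]$ for bounded Borel $h$, that is, $\E[X|T(X)]=T(X)$. This establishes (b)$\Rightarrow$(a).

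For (a)$\Rightarrow$(c), the restriction of the martingale coupling $(T(X),X)$ to $\{T\leq s\}$ is a martingale coupling of $\alpha_s$ and $\nu|_{L(s)}=\beta_s$. Conditional Jensen then yields $\alpha_s\leq_{cx}\beta_s$. Thus (a) and (b) are equivalent, and both imply (c).

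For (c)$\Rightarrow$(d), I would use the shadow definition. If $\alpha_s\leq_{cx}\beta_s\leq\nu$, then $\beta_s$ is a feasible $\eta$ in \eqref{Eq: Shadow}, so $S^\nu(\alpha_s)\leq_{cx}\beta_s$. Conversely, $S^\nu(\alpha_s)$ is a submeasure of $\nu$ with the same mass and first moment as $\alpha_s$. By the minimality part of Lemma~\ref{Lemma: Moment_matched_interval}, $\beta_s\leq_{cx}S^\nu(\alpha_s)$. Two measures that are each $\leq_{cx}$ the other are equal, so $S^\nu(\alpha_s)=\beta_s$.

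For (d)$\Rightarrow$(b), I would invoke the monotonicity of shadows, \citet[Theorem~4.8]{BeiglbockJuillet2016}: for $s'<s''$, $\alpha_{s'}\leq\alpha_{s''}$ implies $S^\nu(\alpha_{s'})\leq S^\nu(\alpha_{s''})$. This gives $\nu|_{I_j(s')}\leq\nu|_{I_j(s'')}$, hence $I_j(s')\subseteq I_j(s'')$ up to null sets. With (b)$\Rightarrow$(a) established above, this closes the cycle.

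The main obstacle is the construction in (b)$\Rightarrow$(a), specifically the measure-theoretic details. One has to handle right-continuity of $s\mapsto I_j(s)$ at atoms of $\mu$, where $I_j$ jumps and $T$ is constant on the added region. One also has to handle the boundary behavior as $s\downarrow c_j$ and $s\uparrow d_j$, where $I_j(s)$ must shrink to a null set and exhaust $\nu$, respectively. Alternatively, that direction can be obtained from (d) via the left-curtain coupling, using \citet[Proposition~2.2]{Nutz2024martingale}.
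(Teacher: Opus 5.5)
Your proposal is correct and follows essentially the paper's route: uniqueness in Lemma~\ref{Lemma: Moment_matched_interval} for (a)$\Rightarrow$(b), the right-continuous infimum map with a $\pi$--$\lambda$ extension for (b)$\Rightarrow$(a), conditional Jensen for (c), the minimality part of Lemma~\ref{Lemma: Moment_matched_interval} for (c)$\Rightarrow$(d), and shadow associativity for (d)$\Rightarrow$(b), where your monotonicity statement is an immediate corollary of associativity. The only differences are bookkeeping: you apply Jensen to an arbitrary implementation in (a) rather than to the constructed $T_L$, and you close (c)$\Rightarrow$(d) by antisymmetry of $\leq_{cx}$ rather than by identifying $\beta_s$ directly as the least feasible element.
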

	\begin{proof}[Proof of Proposition~\ref{Prop: Nested_Contours}]
		See Appendix~\ref{App. Proof_Prop_Nested_Contours}.
	\end{proof}
	
	Proposition~\ref{Prop: Nested_Contours} connects the feasibility of lower contour nested-interval implementation to the geometry of its contour sets. For each target lower contour $\alpha_s$, Lemma~\ref{Lemma: Moment_matched_interval} identifies a unique interval restriction $\beta_s$ with the required mass and first moment. Any lower contour nested-interval implementation must therefore use these intervals. Since $\{T(X)\in(c_j,s]\}$ expands with $s$, $I_j(s)$ must expand as well. The equivalence between (a) and (b) shows that this geometric consistency is also sufficient to assemble the intervals into an implementing signal.
	
	Part (c) gives an information-feasibility interpretation of this requirement. Matching mass and first moment ensures that an interval reproduces the probability and average posterior mean of the target contour set. Generating the entire distribution of posterior means within that contour set additionally requires its state distribution to be sufficiently dispersed, as expressed by $\alpha_s\leq_{cx}\beta_s$. The equivalence shows that feasibility within every interval also ensures compatibility across thresholds. Part (d) explains why: by Lemma~\ref{Lemma: Moment_matched_interval}, $\beta_s$ is the least dispersed submeasure of $\nu$ with the prescribed mass and first moment, so whenever it dominates $\alpha_s$ in convex order, it must be its shadow. Shadows of increasing target contour sets accumulate without undoing earlier allocations, yielding the expansion in (b). Moreover, matching every lower target contour to its shadow characterizes the left-curtain coupling, connecting the geometric criterion to a specific transport construction. 
	
	Consequently, whenever a lower contour nested-interval implementation is feasible, it is unique up to $\nu$-null sets and is obtained by reading the left-curtain transport backward. With nested interval representatives, every such implementation agrees almost surely with
	\begin{equation}\label{Eq: Lower_Contour_nested_interval}
		T_L(x)=\inf\{s\in(c_j,d_j):x\in I_j(s)\} 
		\quad \text{for $\nu$-a.e. } x\in(c_j,d_j).
	\end{equation}
	Thus, each state is assigned the first target mean level at which it enters the expanding family of intervals. On nonempty contour sets, the endpoints can be chosen with $a_j$ non-increasing and $b_j$ non-decreasing. 
	
	For upper contour nested-interval implementation, instead define $\alpha_s:=\mu|_{[s,d_j)}$. Again let $\beta_s:=\nu|_{I_j(s)}$ with $I_j(s)=[a_j(s),b_j(s)]$ be the unique interval restriction with the same mass and first moment. For this to hold, the interval boundaries $a_j(s)$ and $b_j(s)$ must satisfy 
	\begin{align}
		F(b_j(s))-F(a_j(s))&=G(d_j)-G(s^-), \label{Eq: Upper_contour_mass}\\
		\int_{a_j(s)}^{b_j(s)}x\,dF(x)&=\int_{[s,d_j)}m\,dG(m). \label{Eq: Upper_contour_barycenter}
	\end{align}
	The left limit $G(s^-)$ includes any target atom at $s$ in the upper contour set. Reflecting the state and posterior mean gives the upper contour counterpart to Proposition~\ref{Prop: Nested_Contours}.\footnote{We omit the proof of Proposition~\ref{Prop: Nested_Contours_upper} because it is similar to the proof of Proposition~\ref{Prop: Nested_Contours}.}
	
	\begin{propositionprime}\label{Prop: Nested_Contours_upper}
		Let $(c_j,d_j)$ be a non-separating interval of $G$ and, for each $s\in(c_j,d_j)$, let $I_j(s)$ be the interval determined by \eqref{Eq: Upper_contour_mass} and \eqref{Eq: Upper_contour_barycenter}.
		Then the following statements are equivalent:
		\begin{enumerate}
			\item[(a)] There exists an upper contour nested-interval implementation on $(c_j,d_j)$.
			\item[(b)] The intervals $I_j(s)$ shrink in $s$: $I_j(s'')\subseteq I_j(s')$ for $s'<s''$, up to $\nu$-null sets.
			\item[(c)] $\alpha_s\leq_{cx}\beta_s$ for every $s\in(c_j,d_j)$.
			\item[(d)] $S^\nu(\alpha_s)=\beta_s$ for every $s\in(c_j,d_j)$.
		\end{enumerate}
	\end{propositionprime}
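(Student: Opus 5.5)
The plan is to deduce Proposition~\ref{Prop: Nested_Contours_upper} from Proposition~\ref{Prop: Nested_Contours} by reflection rather than redoing the argument. Let $R(x)=-x$, and set $\hat\mu:=R_\#\mu$ and $\hat\nu:=R_\#\nu$. These are the target and prior restricted to the non-separating interval $(-d_j,-c_j)$ of the reflected problem. Convex order is invariant under $R$, because $\varphi$ is convex if and only if $\varphi\circ R$ is convex. Hence $\hat\mu\leq_{cx}\hat\nu$ and $\hat\nu$ is atomless.

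The key bookkeeping identity is that $R$ maps upper contour target pieces to lower ones: $R_\#(\mu|_{[s,d_j)})=\hat\mu|_{(-d_j,-s]}=:\hat\alpha_{-s}$. Mass is preserved and the first moment changes sign, so the moment-matched interval from Lemma~\ref{Lemma: Moment_matched_interval} for $\hat\alpha_{-s}$ is $\hat I(-s)=R(I_j(s))$, with $\hat\beta_{-s}=R_\#\beta_s$. The uniqueness up to null sets in that lemma makes this correspondence well defined. In particular, the boundaries defined by \eqref{Eq: Upper_contour_mass}--\eqref{Eq: Upper_contour_barycenter} are exactly the reflections of those defined by \eqref{Eq: Lower_contour_mass}--\eqref{Eq: Lower_contour_barycenter} for $(\hat\mu,\hat\nu)$. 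Here one checks that $G(s^-)$ corresponds to the closed endpoint $-s$ in $(-d_j,-s]$.

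Each of (a)--(d) then translates term by term.

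\textbf{(a)} If $T$ implements $\mu$ from $\nu$, define $\hat T:=R\circ T\circ R$. Then $\hat T$ implements $\hat\mu$ from $\hat\nu$, since $\E[-X\mid -T(X)]=-T(X)$. Moreover $\{\hat x:\hat T(\hat x)\leq -s\}=R(\{x:T(x)\geq s\})$, so upper contour intervals for $T$ are exactly lower contour intervals for $\hat T$.

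\textbf{(b)} Reversing the parameter order, $s'<s''$ becomes $-s''<-s'$. So "$I_j$ shrinks in $s$" is equivalent to "$\hat I$ expands in $-s$".

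\textbf{(c)} This follows from the invariance of $\leq_{cx}$ under $R$.

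\textbf{(d)} We need $S^{\hat\nu}(R_\#\alpha)=R_\#S^{\nu}(\alpha)$. This is shown by pushing the defining set $\{\eta:\alpha\leq_{cx}\eta\leq\nu\}$ through $R$, which preserves both $\leq$ and $\leq_{cx}$, and hence preserves the convex-order minimum.

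Applying Proposition~\ref{Prop: Nested_Contours} to $(\hat\mu,\hat\nu)$ on $(-d_j,-c_j)$ then gives the equivalences.

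The main obstacle is bookkeeping rather than substance, in two places. First, the parameter range $s\in(c_j,d_j)$ must map onto $-s\in(-d_j,-c_j)$. Second, target atoms at endpoints need care: an atom of $\mu$ at $s$ lies in the upper contour $[s,d_j)$ and in the reflected lower contour $(-d_j,-s]$, so the closed/open conventions match. Everything holds only up to $\nu$-null sets, and $R$ preserves null sets, so the identifications remain valid.

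If a direct proof were preferred instead, I would mirror the appendix argument. For (a)$\Rightarrow$(b), use monotonicity of upper contour sets in $s$. For (b)$\Rightarrow$(a), define $T_R(x)=\sup\{s:x\in I_j(s)\}$ and verify the pushforward and martingale property from the two moment equations. For (c)$\Leftrightarrow$(d), use the minimality in Lemma~\ref{Lemma: Moment_matched_interval}. To close the cycle, invoke the right-curtain characterization.
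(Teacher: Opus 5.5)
Your reflection argument is correct and follows the route the paper indicates. The paper states that reflecting the state and posterior mean turns Proposition~\ref{Prop: Nested_Contours} into Proposition~\ref{Prop: Nested_Contours_upper} and omits the details. Your bookkeeping supplies those details: the correspondence of contour sets under $x\mapsto -x$, the identification $\hat I(-s)=R(I_j(s))$ via the uniqueness in Lemma~\ref{Lemma: Moment_matched_interval}, and the invariance of $\leq_{cx}$ and of shadows under reflection. The reflected mass is $G(d_j^-)-G(s^-)$, which equals the right-hand side of \eqref{Eq: Upper_contour_mass} because the separating cutoff $d_j$ carries no target atom.
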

	
	When these conditions hold, reading the right-curtain transport backward gives the unique upper contour nested-interval implementation, up to $\nu$-null sets. With nested interval representatives, it is given almost surely by
	\begin{equation}\label{Eq: Upper_Contour_nested_interval}
		T_R(x)=\sup\{s\in(c_j,d_j):x\in I_j(s)\} 
		\quad \text{for $\nu$-a.e. } x\in(c_j,d_j).
	\end{equation}
	On nonempty contour sets, $a_j$ can be chosen non-decreasing and $b_j$ non-increasing. As before, assign a fixed value in the component on the remaining null set.
	
	These characterizations give a simple procedure to check feasibility of contour nested-interval implementations within a non-separating interval of the target distribution: first solve the two moment-preserving equations, and then check whether the intervals expand for lower contour sets or shrink for upper contour sets. 
	By Proposition~\ref{Prop: Decomposition}, a global contour nested-interval experiment exists exactly when at least one of the two characterizations holds on every non-separating interval. The local maps can then be combined with full revelation on $\mathbb S_G$. We next provide further conditions that help determine which contour direction can work. 
	
	\begin{proposition}\label{Prop: Contour_support_gap}
		Suppose $T$ implements $G$ from $F$ and let $(c_j,d_j)$ be a non-separating interval of $G$. 
		Define
		\begin{equation*}
			\ell_j:=\inf\supp\left(G|_{(c_j,d_j)}\right)
			\qquad\text{and}\qquad
			r_j:=\sup\supp\left(G|_{(c_j,d_j)}\right).
		\end{equation*}
		Then the following properties holds:
		\begin{enumerate}
			\item If $T$ is a lower contour nested-interval implementation on $(c_j,d_j)$, then $\ell_j>c_j$.
			\item If $T$ is an upper contour nested-interval implementation on $(c_j,d_j)$, then $r_j<d_j$.
		\end{enumerate}
	\end{proposition}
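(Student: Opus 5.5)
The plan is to argue by contradiction. Suppose $T$ is a lower contour nested-interval implementation on $(c_j,d_j)$ but $\ell_j=c_j$. I will show that $T$ then reveals a binary cutoff at some interior point $b\in(c_j,d_j)$. That would make $b$ a separating cutoff, contradicting $b\in\mathbb{NS}_G$. The upper contour statement follows by reflecting states and posterior means ($x\mapsto -x$), which swaps lower and upper contours and swaps $c_j$ with $-d_j$, so I only treat part 1.

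\textbf{Step 1 (an interior threshold).} Choose $s\in(c_j,d_j)$ with $0<G(s)-G(c_j)<p_j$. Such an $s$ exists for the following reasons. Since $\ell_j=c_j$, $\mu$ puts positive mass on $(c_j,s]$ for every $s>c_j$. Also, $\mu$ is not concentrated arbitrarily close to $c_j$, because its barycenter equals the barycenter of $\nu$, which lies strictly inside $(c_j,d_j)$. So any small enough $s>c_j$ works. By the lower contour property, $L_s:=\{x\in(c_j,d_j):T(x)\le s\}$ is an interval, up to $\nu$-null sets. Its $\nu$-mass is $G(s)-G(c_j)$, and its $\nu$-barycenter equals the barycenter of $\alpha_s=\mu|_{(c_j,s]}$, because $\E[X\mid T(X)]=T(X)$. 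By the uniqueness in Lemma~\ref{Lemma: Moment_matched_interval}, $L_s=I_j(s)=[a_j(s),b_j(s)]$.

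\textbf{Step 2 (the key step: $a_j(s)=c_j$).} This is the step I expect to be the main obstacle. For $s'<s$, nestedness of the contour sets gives $I_j(s')\subseteq I_j(s)$. Because $\ell_j=c_j$, the barycenter of $\alpha_{s'}$ is at most $s'$. Hence $I_j(s')$ is a positive-mass interval whose barycenter is at most $s'$, so it must contain states below $s'$, and therefore $a_j(s')<s'$. Letting $s'\downarrow c_j$ through values with $G(s')>G(c_j)$, I get $a_j(s)\le a_j(s')\to c_j$. Thus $a_j(s)=c_j$, up to $\nu$-null sets. The delicate points are handling the endpoint conventions and the null-set identifications carefully, and, if $c_j=-\infty$, reading ``$a_j(s)=c_j$'' as $I_j(s)$ being a left-unbounded half-line.

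\textbf{Step 3 (contradiction).} By Step 2, $\{T(X)\le s\}=\{X\le b\}$ a.s. within the component, with $b:=b_j(s)$. Since $\nu(\{X\le b\})=G(s)-G(c_j)\in(0,p_j)$, we have $b\in(c_j,d_j)$. On $\mathbb S_G$ we have $T(x)=x$ (Proposition~\ref{Prop: Decomposition}), so including the states outside $(c_j,d_j)$ preserves this. Consequently the cutoff signal $\1\{X\le b\}$ is a garbling of $T(X)$, and hence $H_b\preceq_{MPS}G$. By \citet[Lemma~2]{Sun2026competition}, this gives $b\in\mathbb S_G$, that is, $I_F(b)=I_G(b)$. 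As a direct check of the same fact, one can compute $I_F(b)-I_G(b)$ from the mass and mean equalities and verify that it vanishes. Either way, this contradicts $b\in(c_j,d_j)\subseteq\mathbb{NS}_G$. Therefore $\ell_j>c_j$.
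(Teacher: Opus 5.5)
Your proof is correct and follows essentially the same route as the paper. You pick an interior threshold $s$ and show that the lower contour set must be anchored at $c_j$; your nested-barycenter argument in Step 2 is the same idea as the paper's observation that $T(X)\ge a_s$ on $E_s$ while $\Prob(T(X)\le t)>0$ for every $t>c_j$. You then show that the right endpoint $b$ of that set is a binding cutoff, contradicting $I_F(b)>I_G(b)$.

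The only difference is in Step 3. The paper does not appeal to Lemma~2 of Sun (2026). It verifies the contradiction locally: using $T(X)\le b$ on $\{T(X)\le s\}$ and $T(X)\ge b$ off it, it obtains $\int(b-x)_+\,d\nu-\int(b-m)_+\,d\mu=\E[(T(X)-X)\1_{\{T(X)\le s\}}]=0$.

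If you keep the global garbling route, one justification needs fixing. Extending $\{X\le b\}=\{T(X)\le s\}$ beyond the component requires that $T$ maps every non-separating interval into itself almost surely, which is part~2 of Proposition~\ref{Prop: Decomposition}. The fact $T(x)=x$ on $\mathbb S_G$ alone does not cover states in the other non-separating intervals.
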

	\begin{proof}[Proof of Proposition~\ref{Prop: Contour_support_gap}]
		See Appendix~\ref{App. Proof_Contour_support_gap}.
	\end{proof}
	
	By Proposition~\ref{Prop: Contour_support_gap}, a target distribution whose support reaches the lower bound of a non-separating interval rules out the feasibility of lower contour nested-interval implementation. 
	Similarly, a target distribution whose support reaches the upper bound of a non-separating interval rules out the feasibility of upper contour nested-interval implementation. 
	This yields the following corollary. 
	
	\begin{corollary}\label{Corollary: Impossible_Contour_interval}
		Let $T$ implement $G$ from $F$ and let $(c_j,d_j)$ be a non-separating interval. If $\ell_j=c_j$ and $r_j=d_j$, then $T$ is neither a lower nor an upper contour nested-interval implementation on this component.
	\end{corollary}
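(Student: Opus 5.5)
The corollary is a direct contrapositive of Proposition~\ref{Prop: Contour_support_gap}, so the plan is to argue by contradiction in two symmetric cases. Fix an implementing map $T$ and a non-separating interval $(c_j,d_j)$ with $\ell_j=c_j$ and $r_j=d_j$.

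First, suppose that $T$ were a lower contour nested-interval implementation on $(c_j,d_j)$. Part~1 of Proposition~\ref{Prop: Contour_support_gap} then gives $\ell_j>c_j$. This contradicts $\ell_j=c_j$.

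Second, suppose that $T$ were an upper contour nested-interval implementation on $(c_j,d_j)$. Part~2 of the same proposition gives $r_j<d_j$. This contradicts $r_j=d_j$.

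Since both structures are excluded, $T$ has neither contour nested-interval form on this component, which proves the claim. One detail to record is that Definition~\ref{Definition: Contour_nested_interval} and Proposition~\ref{Prop: Contour_support_gap} use the same convention that contour sets are identified up to prior-null sets, so the proposition applies verbatim. There is no real obstacle here, since all the substantive work lies in Proposition~\ref{Prop: Contour_support_gap}.

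A natural follow-up remark is that, by Proposition~\ref{Prop: Decomposition}, a global contour nested-interval experiment then fails to exist whenever at least one component has target support reaching both endpoints.
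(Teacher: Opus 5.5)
Your proposal is correct and matches the paper, which states the corollary without separate proof as the immediate contrapositive of both parts of Proposition~\ref{Prop: Contour_support_gap}. Your closing remark about global nonexistence follows directly from the definition of a contour nested-interval experiment, so you do not need Proposition~\ref{Prop: Decomposition} for it, but it is harmless.
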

	
	Example~\ref{Example: Normal} in Section~\ref{Subsec: Examples_Construction} illustrates this obstruction. 
	There, both the prior and target are normal distributions that have full support on their unique non-separating interval $(-\infty,\infty)$, so neither contour direction is feasible on the whole component. The barcode implementation in Figure~\ref{Fig: Normal_map} remains deterministic, but nesting within individual bars does not give interval contour sets on the whole real line.
	
	Finally, we give sufficient conditions directly in terms of the prior and target densities and cdfs. For Proposition~\ref{Prop: Contour_density_cdf} only, fix a bounded non-separating interval $(c_j,d_j)$ of $G$. Suppose that $F$ has a continuous and strictly positive density $f$ on this interval and that the local target has connected support $[\ell_j,r_j]\subseteq[c_j,d_j]$, with $\ell_j<r_j$. Suppose also that its probability measure can be written as
	\begin{equation*}
		\mu=\frac{1}{p_j}\left(
		g(m)\1_{(\ell_j,r_j)}(m)\,dm
		+\sum_{k=1}^{K}q_k\delta_{x_k}
		\right),
	\end{equation*}
	where $g$ is continuous and non-negative on $(\ell_j,r_j)$, $K$ is finite and may be zero, and $q_k>0$ at distinct points $x_k\in[\ell_j,r_j]\cap(c_j,d_j)$. Here $p_j$ is the common mass of the prior and target on the component. Thus $f$, $g$, and the atom masses $q_k$ refer to the original distributions, before conditioning on $(c_j,d_j)$. These regularity conditions are not required for the preceding characterizations. In the conditions below, a cutoff at a support endpoint makes one of the two inequalities vacuous.
	
	\begin{proposition}\label{Prop: Contour_density_cdf}
		Under the above assumptions, the following properties hold:
		\begin{enumerate}
			\item If there exists $\tau_j\in[\ell_j,r_j]$ such that
			\begin{equation}\label{Eq: Contour_lower_test}
				\begin{cases}
					f(s)\leq g(s),&\ell_j<s<\tau_j,\\
					F(s)\leq G(s^-),&\tau_j<s<r_j,
				\end{cases}
			\end{equation}
			then the intervals determined by \eqref{Eq: Lower_contour_mass} and \eqref{Eq: Lower_contour_barycenter} expand with $s$. The backward left-curtain map $T_L$ gives the unique lower contour nested-interval implementation, up to $\nu$-null sets.
			\item If there exists $\tau_j\in[\ell_j,r_j]$ such that
			\begin{equation}\label{Eq: Contour_upper_test}
				\begin{cases}
					F(s)\geq G(s),&\ell_j<s<\tau_j,\\
					f(s)\leq g(s),&\tau_j<s<r_j,
				\end{cases}
			\end{equation}
			then the intervals determined by \eqref{Eq: Upper_contour_mass} and \eqref{Eq: Upper_contour_barycenter} shrink with $s$. The backward right-curtain map $T_R$ gives the unique upper contour nested-interval implementation, up to $\nu$-null sets.
		\end{enumerate}
	\end{proposition}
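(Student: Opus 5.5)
By Proposition~\ref{Prop: Nested_Contours}, it suffices to prove part~1 by verifying condition (c), namely $\alpha_s\leq_{cx}\beta_s$ for every $s$. The equivalences (c)$\Leftrightarrow$(b)$\Leftrightarrow$(d) then give the expansion of $I_j(s)$. Condition (d) then identifies $T_L$ as the backward left-curtain map, and Proposition~\ref{Prop: Nested_Contours} with the subsequent discussion gives uniqueness up to $\nu$-null sets. Part~2 follows by the reflection $x\mapsto -x$. Reflection maps lower contours to upper contours, swaps $F(s)\geq G(s)$ with the reflected version of $F(s)\leq G(s^-)$ (the left limit appears because of the atom convention in \eqref{Eq: Upper_contour_mass}), and preserves the density comparison. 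Part~2 then reduces to part~1 via Proposition~\ref{Prop: Nested_Contours_upper}.

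Since $\alpha_s$ and $\beta_s=\nu|_{[a(s),b(s)]}$ have equal mass and barycenter, $\alpha_s\leq_{cx}\beta_s$ is equivalent to the potential-function inequality $\int (x-k)^+d\alpha_s\le\int (x-k)^+d\beta_s$ for all $k$. Equivalently, on the integrated-cdf level, the difference of the integrated tail functions must be non-negative. I would split the argument at $\tau_j$.

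\emph{Region $s\le\tau_j$.} Here $f\le g$ on $(\ell_j,s)$. Since $\nu$ has the less concentrated density there, the interval $I_j(s)$ that matches mass $\mu((\ell_j,s])$ and the barycenter must straddle $(\ell_j,s]$ or lie outside it. A cleaner route is a single-crossing argument. The measures $\beta_s$ and $\alpha_s$ have equal mass and mean, $\alpha_s$ is supported in $[\ell_j,s]$, and $\beta_s$ is supported on an interval $[a,b]$. I would show that the difference of densities (treating the atoms of $\alpha_s$ as infinite density) changes sign as $+,-,+$ along the line. This yields the convex order by the Karlin–Novikoff cut criterion. To get the $+,-,+$ pattern, I would argue that $a\le \ell_j$-side mass must exceed, which I would show via $f\le g$. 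Concretely, if $[a,b]\subseteq[\ell_j,s]$ the mass condition $\int_a^b f=\mu((\ell_j,s])\ge\int_{\ell_j}^s g\ge \int_a^b f$ together with the barycenter condition would force equality, so $[a,b]$ must cover a neighbourhood of $[\ell_j,s]$ on at least one side and be wider. Outside $[\ell_j,s]$ the difference is $+f$, and inside it is $f-g\le0$ (or an atom). This gives exactly the two sign changes.

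\emph{Region $s>\tau_j$.} Here $F(s)\le G(s^-)$ gives a cumulative comparison instead of a pointwise one. I would use the expansion already established up to $\tau_j$ and show the intervals keep expanding. Differentiating \eqref{Eq: Lower_contour_mass}–\eqref{Eq: Lower_contour_barycenter} in $s$ (with jumps at atoms handled separately) gives $f(b)b'-f(a)a'=g(s)$ and $f(b)b\,b'-f(a)a\,a'=s\,g(s)$. Solving, $a'\le0\le b'$ holds iff $a(s)\le s\le b(s)$. So the task reduces to showing $s\le b(s)$, i.e.\ the new target mass is placed at a level covered by the interval. The condition $F(s)\le G(s^-)$ is exactly what prevents $b(s)<s$. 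If $b(s)<s$, then $I_j(s)\subseteq(c_j,s)$ would carry mass $G(s)-G(c_j)$. This would require $F(s)-F(c_j)\ge G(s)-G(c_j)$, contradicting the hypothesis in the strict case; boundary cases would be handled by continuity. Atoms are handled by the same inequality applied to the jump in the barycenter equation.

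\textbf{Main obstacle.} I expect the hardest step to be rigorously establishing $a(s)\le s\le b(s)$ for all $s$ (both regions) including at atoms and at $\tau_j$. In particular, the pointwise-density regime and the cdf regime must be glued: the expansion proved on $(\ell_j,\tau_j]$ must be shown to persist across $\tau_j$. I would first try to prove the monotonicity criterion directly, since it covers both regions uniformly and makes the convex-order argument above only a fallback.
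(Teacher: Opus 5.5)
Your plan is sound, and for $s\le\tau_j$ it takes a genuinely different route from the paper. The paper never checks (c). It proves the straddle $a_j(s)\le s\le b_j(s)$ in both regimes; for $s\le\tau_j$ it does so by contradiction, comparing moments of $x-\ell_j$ and using connectedness of the target support. It then gets expansion from the implicit-function derivatives and adds each atom gradually.

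Your cut argument proves (c) directly, and it is simpler than you present it. For \emph{any} interval $[a,b]$, the signed measure $\beta_s-\alpha_s$ is non-negative off $[\ell_j,s]$. On $(\ell_j,s)$ it has density $f\1_{[a,b]}-g\le 0$, with the atoms of $\alpha_s$ entering negatively. The chord argument from the proof of Lemma~\ref{Lemma: Moment_matched_interval}, using the chord of $\varphi$ through $\ell_j$ and $s$ together with equal mass and mean, then yields $\alpha_s\leq_{cx}\beta_s$ with atoms included. Your discussion of where $[a,b]$ must sit is therefore unnecessary.

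Since (c)$\Rightarrow$(d) and shadow associativity are pointwise in $s$, this gives expansion on $(c_j,\tau_j]$. It also dissolves your ``main obstacle''. Testing with $(x-b)_+$ and $(a-x)_+$, convex order forces $\supp\alpha_s\subseteq[a_j(s),b_j(s)]$, which gives the straddle. Gluing at $\tau_j$ is just transitivity, $I_j(s')\subseteq I_j(\tau_j)\subseteq I_j(s'')$. Your route buys a calculus-free treatment of the density regime, at the cost of relying on Beiglb\"ock--Juillet associativity; the paper's route is elementary throughout. The reflection reduction for part~2 is correct.

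For $s>\tau_j$ you follow the paper, but three steps must be supplied.
\begin{itemize}
\item The implicit-function step needs $c_j<a_j(s)<b_j(s)<d_j$, because $f$ is continuous and positive only on the open component. This comes from the strict gap $I_F-I_G>0$ on $(c_j,d_j)$. If $a=c_j$, matching gives $\int(b-x)_+\,d\nu=\int(b-m)\,d\alpha_s\le\int(b-m)_+\,d\mu$, contradicting the gap at $z=b$; the case $b=d_j$ is symmetric.
\item You also need $a_j(s)\le s$, which is immediate because the target contour has conditional mean at most $s$. Conversely, $b_j(s)\ge s$ needs no continuity argument. If $b<s$, positivity of $f$ gives $G(s)-G(c_j)=F(b)-F(a)<F(s)-F(c_j)$, so $G(s)<F(s)\le G(s^-)$, a contradiction outright.
\item The atom jumps must be spelled out rather than attributed to ``the same inequality''. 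Match $\mu|_{(c_j,x_k)}+\tfrac{t}{p_j}\delta_{x_k}$ for $t\in[0,q_k]$. Then $F(b(t))=F(a(t))+G(x_k^-)-G(c_j)+t\ge F(x_k)+t$, so $a(t)\le x_k\le b(t)$. Differentiating the two moment equations in $t$ then shows the interval expands across the atom.
\end{itemize}
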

	\begin{proof}[Proof of Proposition~\ref{Prop: Contour_density_cdf}]
		See Appendix~\ref{App. Proof_Contour_density_cdf}.
	\end{proof}
	
	Proposition~\ref{Prop: Contour_support_gap} rules out a contour direction, while Proposition~\ref{Prop: Contour_density_cdf} gives sufficient conditions for a direction to work. When neither sufficient test applies, the expansion and shrinkage criteria in Propositions~\ref{Prop: Nested_Contours} and~\ref{Prop: Nested_Contours_upper} remain the general test. Section~\ref{Sec: Applications} uses these results to select the contour direction and then solves the boundary equations to describe the implementing maps.

	\section{Applications}\label{Sec: Applications}
	In this section we study deterministic implementation in two ``non-standard'' information design problems: buyer-optimal learning in monopolistic pricing \citep{Roesler2017buyer}, and robust persuasion with ambiguous receiver preferences \citep{SapiroGheiler2024}. 
	In both problems, the objectives are not linear functions of posterior-mean distributions. 
	Consequently, their optimal solutions need not lie in the set of extreme points of $\MPC(F)$, which coincides with the set of all bi-pooling distributions \citep{Kleiner2021extreme,Arieli2023optimal}. 
	In fact, for both problems, the optimal posterior-mean distributions have a continuum of support points within a non-separating component and are therefore not bi-pooling. 
	Nevertheless, these optimal distributions can be implemented by contour nested-intervals as described in Section~\ref{Sec: Contour_nested_interval}. 
	
	\subsection{Buyer-optimal learning}\label{Subsec: RoeslerSzentes2017} 
	In \citet{Roesler2017buyer}, a monopolistic seller with zero production cost sells one object to a buyer whose valuation $X$ has cdf $F$ on $[0,1]$. Before knowing the price, the buyer privately observes a signal about $X$. Let $M$ denote her posterior mean and $G$ its cdf. The seller observes neither $X$ nor the signal realization, but knows $G$, and posts a take-it-or-leave-it price $q$. Because the buyer purchases if and only if $M\geq q$, demand is $1-G(q^-)$ and the seller's profit is $q[1-G(q^-)]$. Bayes plausibility requires $G\in\MPC(F)$. Roesler and Szentes study the information structure that maximizes the buyer's ex-ante surplus subject to Bayes plausibility and the seller’s pricing response:
	\begin{equation}\label{Eq: Problem_Roesler_Szentes}
		\max_{G\in\MPC(F)}{\int^{1}_{q}(s-q)\,dG(s)} \quad \text{subject to} \quad q\in\arg\max_{r\in[0,1]}r[1-G(r^-)].
	\end{equation}
	They show that the buyer-optimal solution to \eqref{Eq: Problem_Roesler_Szentes} takes the truncated equal-revenue form 
	\begin{equation}\label{Eq: RS_cdf}
		G(s)=
		\begin{cases}
			0, & s<p,\\
			1-\dfrac{p}{s}, & p\leq s<B,\\
			1, & s\geq B.
		\end{cases}
	\end{equation}
	It has density $p/s^2$ on $[p,B)$ and an atom of mass $p/B$ at $B$. The demand is $p/q$ for every $q\in[p,B]$, and the seller earns the same revenue $p$ at every price in this interval. The buyer-optimal outcome selects the lowest best response $p$, at which trade occurs with probability one. 
	
	With a uniform prior $F$ on $[0,1]$, \cite{Roesler2017buyer} show that the parameters for the buyer-optimal distribution are given by
	\begin{equation*}
		p=L(1-L) \quad \text{and} \quad B=p\exp\left(\frac{1}{2p}-1\right), 
	\end{equation*}
	where $L\in(0,1)$ is the unique interior solution to equation
	\begin{equation*}
		(1-L)\log\frac{1}{1-L}=\frac{L}{2}\ . 
	\end{equation*}
	Numerically, $L\approx0.72$, $p\approx0.20$, and $B\approx0.87$. 
	
	\begin{figure}[htbp]
		\centering
		\caption{The Roesler--Szentes buyer-optimal distribution under the uniform prior $F$}
		
		\pgfmathsetmacro{\RSparamP}{0.2036321887945369}
		\pgfmathsetmacro{\RSparamL}{0.7153318629591614}
		\pgfmathsetmacro{\RSparamB}{0.8728143298793253}
		\pgfmathsetmacro{\RSparamTau}{0.45125623407831} 
		
		\pgfplotsset{appdistaxis/.style={
				width=0.75\linewidth,
				height=0.15\textheight,
				scale only axis,
				axis lines=left,
				axis line style={line width=1pt,->},
				xmin=0,xmax=1.05,
				xtick={0,1},
				tick label style={font=\footnotesize},
				axis background/.style={fill=none},
				legend style={
					at={(0.03,0.98)},anchor=north west,
					font=\footnotesize,
					legend cell align=left,
					fill=none,draw=none
				},
				clip=false,
				samples=160
		}}
		
		\begin{subfigure}[t]{0.32\textwidth}
			\centering
			\subcaption{Prior and target densities}
			\begin{tikzpicture}
				\begin{axis}[appdistaxis,
					ymin=-0.10,ymax=5.50,
					ytick={0,1,5},
					extra x ticks={\RSparamP,\RSparamTau,\RSparamL,\RSparamB},
					extra x tick labels={$p$,$\tau$,$L$,$B$},
					legend style={
						at={(0.75,0.98)},anchor=north east
					}
					]
					\addplot[black,thick,domain=0:1] {1};
					\addlegendentry{$f$}
					
					\addplot[red,thick,domain=\RSparamP:\RSparamB]
					{\RSparamP/x^2};
					\addlegendentry{\textcolor{red}{$g$}}
					
					\addplot[red,thick,domain=0:\RSparamP,forget plot]
					{0};
					\addplot[red,thick,domain=\RSparamB:1,forget plot]
					{0};
					
					\addplot[gray,dotted,forget plot]
					coordinates {
						(\RSparamP,-0.10)
						(\RSparamP,4.910815)
					};
					
					\addplot[gray,dotted,forget plot]
					coordinates {
						(\RSparamTau,-0.10)
						(\RSparamTau,1)
					};
					
					\draw[red,thick,->]
					(axis cs:\RSparamB,0)
					--(axis cs:\RSparamB,3.20);
					\node[red,above,font=\footnotesize]
					at (axis cs:\RSparamB,3.20) {$p/B$};
				\end{axis}
			\end{tikzpicture}
		\end{subfigure}\hfill
		\begin{subfigure}[t]{0.32\textwidth}
			\centering
			\subcaption{Prior and target cdfs.}
			\begin{tikzpicture}
				\begin{axis}[appdistaxis,
					ymin=-0.02,ymax=1.10,
					ytick={0,1},
					extra x ticks={\RSparamP,\RSparamTau,\RSparamL,\RSparamB},
					extra x tick labels={$p$,$\tau$,$L$,$B$},
					]
					\addplot[gray,dotted,forget plot]
					coordinates {
						(\RSparamB,-0.02)
						(\RSparamB,1)
					};
					
					\addplot[gray,dotted,forget plot]
					coordinates {
						(\RSparamTau,-0.10)
						(\RSparamTau,\RSparamTau)
					};
					
					\addplot[gray,dotted,forget plot]
					coordinates {
						(\RSparamL,-0.02)
						(\RSparamL,\RSparamL)
					};
					
					\addplot[black,thick,domain=0:1] {x};
					\addlegendentry{$F$}
					
					\addplot[red,thick,domain=0:\RSparamP] {0};
					\addlegendentry{\textcolor{red}{$G$}}
					
					\addplot[red,thick,
					domain=\RSparamP:\RSparamB,forget plot]
					{1-\RSparamP/x};
					
					\addplot[red,thick,forget plot]
					coordinates {
						(\RSparamB,{1-\RSparamP/\RSparamB})
						(\RSparamB,1)
					};
					
					\addplot[red,thick,domain=\RSparamB:1,forget plot]
					{1};
				\end{axis}
			\end{tikzpicture}
		\end{subfigure}\hfill
		\begin{subfigure}[t]{0.32\textwidth}
			\centering
			\subcaption{Integrated-cdfs.}
			\begin{tikzpicture}
				\begin{axis}[appdistaxis,
					ymin=-0.01,ymax=0.55,
					ytick={0,0.2,0.4},
					extra x ticks={\RSparamP,\RSparamTau,\RSparamL,\RSparamB},
					extra x tick labels={$p$,$\tau$,$L$,$B$}
					]
					\addplot[gray,dotted,forget plot]
					coordinates {
						(\RSparamL,-0.01)
						(\RSparamL,{\RSparamL^2/2})
					};
					
					\addplot[gray,dotted,forget plot]
					coordinates {
						(1,-0.01)
						(1,0.5)
					};
					
					\addplot[black,thick,domain=0:1] {x^2/2};
					\addlegendentry{$I_F(\cdot)$}
					
					\addplot[red,thick,domain=0:\RSparamP] {0};
					\addlegendentry{\textcolor{red}{$I_G(\cdot)$}}
					
					\addplot[red,thick,
					domain=\RSparamP:\RSparamB,forget plot]
					{x-\RSparamP-\RSparamP*ln(x/\RSparamP)};
					
					\addplot[red,thick,domain=\RSparamB:1,forget plot]
					{x-0.5};
				\end{axis}
			\end{tikzpicture}
		\end{subfigure}
		\label{Fig: RS_distributions}
		\begin{minipage}{\textwidth}
			\footnotesize
			Note: In panel~(a), $g(s)=p/s^2$ on $[p,B)$ and is zero elsewhere. The red arrow at $B$ denotes the atom of mass $p/B$; its height is schematic and is not a density value. 
			Panel~(b) shows the corresponding cdf jump. 
			In panel~(c), the integrated-cdfs meet only at $0$, $L$, and $1$. 
		\end{minipage}
	\end{figure}
	
	Figure~\ref{Fig: RS_distributions} compares the cdfs and integrated-cdfs of the uniform prior $F$ and buyer-optimal target $G$. 
	The majorization inequality $I_G(\cdot)\leq I_F(\cdot)$ binds at $0$, $L$, and $1$ and is strict between them. 
	Hence, by Proposition~\ref{Prop: Decomposition}, any deterministic signal $T$ that implements $G$ from $F$ must precisely reveal whether the buyer's value $X$ is above or below $L$, and we can separately solve for the implementation schemes on the two non-separating intervals $(0,L)$ and $(L,1)$. 
	
	Consider the first non-separating interval $(0,L)$.
	By \eqref{Eq: RS_cdf} and Figure~\ref{Fig: RS_distributions}, we have
	\begin{equation*}
		\inf\supp\left(G|_{(0,L)}\right) = p > 0
		\qquad \text{and} \qquad
		\sup\supp\left(G|_{(0,L)}\right) = L\ . 
	\end{equation*}
	By part 2 of Proposition~\ref{Prop: Contour_support_gap}, an upper contour nested-interval implementation on $(0,L)$ is infeasible. 
	So let us consider lower contour nested-interval implementation. 
	For $\tau=\sqrt{p}\approx0.45$, we have $g(s)=p/s^2>1=f(s)$ for $p<s<\tau$ and $F(s)\leq G(s^-)=G(s)$ for $\tau<s<L$; see panels (a) and (b) of Figure~\ref{Fig: RS_distributions}. 
	Hence, by part 1 of Proposition~\ref{Prop: Contour_density_cdf}, a lower contour nested-interval implementation is indeed feasible. 
	Moreover, since $F$ is uniform, for each $s\in(p,L)$ the lower contour interval $[a_1(s),b_1(s)]=\{x\in(0,L):\ T(x)\leq s\}$ can be uniquely pinned down by the mass and first moment constraints
	\begin{align*}
		F\left(b_1(s)\right) - F(a_1(s)) = b_1(s)-a_1(s) &= G(s) - G(0) = 1-\frac{p}{s},\\
		\int^{b_1(s)}_{a_1(s)}x\,dF(x) = \frac{b_1(s)^2-a_1(s)^2}{2} &= \int^{s}_{p}m\,dG(m) = p\log\frac{s}{p}.
	\end{align*}
	Solving them yields
	\begin{equation*}
		a_1(s) = \frac{sp}{s-p}\log\frac{s}{p}-\frac{s-p}{2s} 
		\qquad \text{and} \qquad 
		b_1(s) = \frac{sp}{s-p}\log\frac{s}{p}+\frac{s-p}{2s} 
	\end{equation*}
	for all $s\in(p,L]$. Set $a_1(p)=b_1(p)=p$ to make both functions continuous at $p$. 
	It can be verified that $0\leq a_1(s)\leq p\leq b_1(s)\leq L$ for all $s\in[p,L]$, $a_1(s)$ is strictly decreasing, and $b_1(s)$ is strictly increasing. 
	Hence, both functions are invertible and, by \eqref{Eq: Lower_Contour_nested_interval}, any lower contour nested-interval implementation $T$ on $(0,L)$ must a.e. coincide with
	\begin{equation}\label{Eq: RS_implementation_1}
		T_1(x) = \inf\{s\in[p,L]:\ x\in[a_1(s),b_1(s)]\} = 
		\begin{cases}
			a_1^{-1}(x), & \text{if } x\in(0,p) \\
			b_1^{-1}(x), & \text{if } x\in[p,L)\ . 
		\end{cases}
	\end{equation} 
	As $s$ rises from $p$ to $L$, the lower contour interval expands from $\{p\}$ to $[0,L]$. 
	
	Now consider the second non-separating interval $(L,1)$. 
	By \eqref{Eq: RS_cdf} and Figure~\ref{Fig: RS_distributions}, we have
	\begin{equation*}
		\inf\supp\left(G|_{(L,1)}\right) = L 
		\qquad \text{and} \qquad
		\sup\supp\left(G|_{(L,1)}\right) = B<1\ . 
	\end{equation*}
	It follows from part 1 of Proposition~\ref{Prop: Contour_support_gap} that a lower contour nested-interval implementation on $(L,1)$ is infeasible. 
	On the other hand, since $F(s)\geq G(s)$ for $L<s<B$, it follows from part 2 of Proposition~\ref{Prop: Contour_density_cdf} that an upper contour nested-interval implementation is feasible instead. 
	Moreover, for each $s\in[L,B]$ the upper contour interval $[a_2(s),b_2(s)]=\{x\in(L,1):\ T(x)\geq s\}$ can be uniquely pinned down by the mass and first moment constraints
	\begin{align*}
		F\left(b_2(s)\right) - F(a_2(s)) = b_2(s)-a_2(s) &= G(1) - G(s^{-}) = \frac{p}{s}, \\
		\int^{b_2(s)}_{a_2(s)}x\,dF(x) = \frac{b_2(s)^2-a_2(s)^2}{2} &= \int_{[s,B]}{m\,dG(m)} = \frac{1}{2}-p\log\frac{s}{p}.
	\end{align*}
	Solving them yields
	\begin{equation*}
		a_2(s)=\frac12\left(\frac{s}{p}-\frac{p}{s}\right)-s\log\frac{s}{p}
		\qquad \text{and} \qquad 
		b_2(s)=\frac12\left(\frac{s}{p}+\frac{p}{s}\right)-s\log\frac{s}{p}
	\end{equation*}
	for all $s\in[L,B]$. 
	These intervals shrink from $[L,1]$ to $\left[\bar{a},\bar{b}\right]$, where 
	\begin{equation*}
		\bar{a} := a_2(B)=B-\frac{p}{2B}
		\qquad \text{and} \qquad 
		\bar{b} := b_2(B)=B+\frac{p}{2B}\ .
	\end{equation*}
	The terminal interval has prior mass $p/B$ and conditional mean $B$, and hence generates the target atom. 
	Hence, by \eqref{Eq: Upper_Contour_nested_interval}, any upper contour nested-interval implementation $T$ on $(L,B)$ must a.e. coincide with
	\begin{equation}\label{Eq: RS_implementation_2}
		T_2(x) = \sup\{s\in[L,B]:\ x\in[a_2(s),b_2(s)]\} = 
		\begin{cases}
			a_2^{-1}(x), & \text{if } x\in[L,\bar{a}) \\
			B, & \text{if } x\in\left[\bar{a},\bar{b}\right] \\
			b_2^{-1}(x), & \text{if } x\in(\bar{b},1]\ . 
		\end{cases}
	\end{equation} 
	Finally, by combining $T_1$ and $T_2$ from \eqref{Eq: RS_implementation_1} and \eqref{Eq: RS_implementation_2}, we obtain a contour nested-interval implementation $T$ that implements $G$ from $F$; it coincides with $T_1$ on $(0,L)$ and coincides with $T_2$ on $(L,1)$. 
	Figure~\ref{Fig: RS_maps} illustrates this implementation.
	
	\begin{figure}[htbp]
		\centering
		\caption{A contour nested-interval implementation of the Roesler--Szentes optimal distribution}
		\begin{tikzpicture}
			\begin{axis}[
				width=0.6\textwidth,
				height=0.25\textheight,
				axis lines=left,
				axis line style={line width=0.8pt},
				xmin=0,xmax=1.025,
				ymin=0,ymax=1.05,
				xtick={0,0.203632,0.715332,1},
				xticklabels={$0$,$p$,$L$,$1$},
				ytick={0,0.203632,0.715332,0.872814,1},
				yticklabels={$0$,$p$,$L$,$B$,$1$},
				xlabel={$x$},
				xlabel style={at={(axis description cs:1.03,-0.05)},anchor=south,rotate=0},
				ylabel={$T(x)$},
				ylabel style={at={(axis description cs:0,1.0)},anchor=south,rotate=-90},
				label style={font=\small},
				tick label style={font=\small},
				clip=false,
				samples=160
				]
				\addplot[black,very thick,domain=0.2037:0.715332,variable=\s]
				({\s*ln(\s/0.203632)/(\s/0.203632-1)-(1-0.203632/\s)/2},{\s});
				\addplot[black,very thick,domain=0.2037:0.715332,variable=\s]
				({\s*ln(\s/0.203632)/(\s/0.203632-1)+(1-0.203632/\s)/2},{\s});
				\addplot[black,very thick,domain=0.715332:0.872814,variable=\s]
				({0.5*(\s/0.203632-0.203632/\s)-\s*ln(\s/0.203632)},{\s});
				\addplot[black,very thick,domain=0.715332:0.872814,variable=\s]
				({0.5*(\s/0.203632+0.203632/\s)-\s*ln(\s/0.203632)},{\s});
				\addplot[black,very thick,domain=0.756162:0.989467]
				{0.872814};
				\addplot[gray,dashed] coordinates {(0,0.715332) (1,0.715332)};
				\addplot[gray,dashed] coordinates {(0,0.872814) (1,0.872814)};
				\node[above right,yshift=2pt,font=\small] at (axis cs:0.756162,0.872814) {$\bar a$};
				\node[above left,yshift=2pt,font=\small] at (axis cs:0.989467,0.872814) {$\bar b$};
			\end{axis}
		\end{tikzpicture}
		\label{Fig: RS_maps}
	\end{figure}

	\subsection{Robust persuasion with ambiguous receiver preferences}\label{Subsec: SapiroGheiler2024}
	In \citet{SapiroGheiler2024}, a sender commits to an experiment about a state $X\in[0,1]$ with prior cdf $F$.
	After observing the signal, a receiver chooses a binary action $a\in\{0,1\}$.
	The receiver has a private type $R\in[0,1]$, which is independent of the state. Given $X$ and $R$, the receiver's payoff is $a(X-R)$ and the sender's payoff is $a$.
	Ties are broken against the sender, so the receiver chooses $a=1$ if and only if the posterior mean $M$ strictly exceeds $R$.
	The sender knows only that the distribution $\rho$ of $R$ is supported on $[0,1]$ and has mean $r^\ast$.
	The sender's max-min persuasion problem is thus
	\begin{equation}\label{Eq: Problem_SapiroGheiler}
		\max_{H\in\MPC(F)}
		\min_{\substack{\rho\in\Delta([0,1])\\
				\int_0^1 r\,d\rho(r)=r^\ast}}
		\int_{[0,1]}\int_{[0,1]}
		\1_{\{m>r\}}\,dH(m)\,d\rho(r).
	\end{equation}
	That is, the sender chooses a feasible posterior-mean distribution given prior $F$, while an adversarial nature chooses the cutoff distribution $\rho$.
	
	Suppose that $F$ is atomless, non-uniform, and concave, with a continuous and strictly positive density function $f$ on $(0,1)$. 
	Let $\pi:=\E_F[X]$, so $0<\pi<1/2$, and suppose $0<r^\ast\leq\pi$.
	\citet[Section~5.1]{SapiroGheiler2024} shows that the sender's unique robustly optimal solution $G$ to \eqref{Eq: Problem_SapiroGheiler} is a uniform distribution on $[0,2\pi]$, under which the sender's worst-case success probability is $1-r^\ast/{2\pi}$. 
	Using results in Sections~\ref{Sec: Construction} and \ref{Sec: Contour_nested_interval}, we show that this robustly optimal solution admits an upper contour nested-interval implementation. 
	
	\begin{proposition}
		\label{Prop: Contour_uniform_target}
		Let $F$ be an atomless, non-uniform, and concave cdf on $[0,1]$, with a continuous density $f>0$ on $(0,1)$.
		Write $\pi:=\E_F[X]$ and let $G$ be a uniform distribution on $[0,2\pi]$.
		Then $G$ admits a unique upper contour nested-interval implementation, up to $F$-null sets.
		Its upper contour intervals \([a(s),b(s)]\) are uniquely determined, for $0<s<2\pi$, by 
		\begin{align}
			F\left(b(s)\right)-F\left(a(s)\right) &= G(2\pi)-G(s)=1-\frac{s}{2\pi}, \label{Eq: SG_contour_mass}\\
			\int_{a(s)}^{b(s)}x\,dF(x) &= \int_s^{2\pi}m\,dG(m)=\pi-\frac{s^2}{4\pi}. \label{Eq: SG_contour_first_moment}
		\end{align}
		As $s$ rises from $0$ to $2\pi$, $a(s)$ increases continuously and strictly from $0$ to $2\pi$, while $b(s)$ decreases continuously and strictly from $1$ to $2\pi$. 
		The implementation is given by 
		\begin{equation}\label{Eq: SG_implementation}
			T(x)=\sup\{s\in[0,2\pi]:\ x\in[a(s),b(s)]\}
			=\begin{cases}
				a^{-1}(x),&\text{if }x\in[0,2\pi],\\
				b^{-1}(x),&\text{if }x\in(2\pi,1].
			\end{cases}
		\end{equation}
	\end{proposition}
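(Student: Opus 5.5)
The plan is to reduce everything to results already stated. First, identify the non-separating structure of $G$. Second, verify the sufficient condition in part 2 of Proposition~\ref{Prop: Contour_density_cdf} to obtain existence and uniqueness. Third, read off the monotonicity and continuity of $a,b$ from the boundary equations.

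\emph{Step 1 (one non-separating interval).} Let $D:=I_F-I_G$. On $[0,2\pi]$ we have $D'(z)=F(z)-z/(2\pi)$, and on $[2\pi,1]$ we have $D'(z)=F(z)-1\le 0$. Since $F$ is concave with $F(0)=0$ and $F(1)=1$, we get $F(z)\ge z$. Non-uniformity then gives $\pi=\int_0^1(1-F)<1/2$, so $2\pi<1$ and $G\in\Delta([0,1])$ has mean $\pi$. The function $D'$ is concave on $[0,2\pi]$ with $D'(0)=0$, so $\{D'>0\}\cap(0,2\pi)$ is an initial interval $(0,t)$. Moreover, $D(0)=D(1)=0$, because the means coincide.

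If $D'\le 0$ held on all of $(0,1)$, then $D\equiv 0$, which would force $F=G$ on $[0,2\pi]$ and contradict $F(1)=1$ with $2\pi<1$. Hence $D$ first strictly increases and then weakly decreases back to $0$. It follows that $D>0$ on $(0,1)$, so $\mathbb S_G\cap(0,1)=\emptyset$ and $(0,1)$ is the unique non-separating interval. This also confirms $G\in\MPC(F)$. By Proposition~\ref{Prop: Decomposition}, it suffices to work on $(0,1)$, with $\ell=0$ and $r=2\pi<1$. The latter is consistent with the support-gap necessary condition in part 2 of Proposition~\ref{Prop: Contour_support_gap}.

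\emph{Step 2 (the sufficient test).} Here $g\equiv 1/(2\pi)$ on $(0,2\pi)$ and there are no atoms. Concavity makes $f$ non-increasing. Set $\tau:=\inf\{s\in[0,2\pi]: f(s)\le 1/(2\pi)\}$, with the convention $\tau=2\pi$ if the set is empty.
\begin{itemize}
\item On $(\tau,2\pi)$ we have $f\le g$.
\item On $(0,\tau)$ we have $(F-G)'=f-1/(2\pi)\ge 0$ and $F(0)=G(0)=0$, hence $F\ge G$.
\end{itemize}
Part 2 of Proposition~\ref{Prop: Contour_density_cdf} then gives that the intervals defined by \eqref{Eq: Upper_contour_mass}--\eqref{Eq: Upper_contour_barycenter} shrink in $s$. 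These equations are exactly \eqref{Eq: SG_contour_mass}--\eqref{Eq: SG_contour_first_moment}. The same proposition gives that $T_R$ is the unique upper contour nested-interval implementation up to null sets. Proposition~\ref{Prop: Nested_Contours_upper}(d) also yields $\alpha_s\le_{cx}\beta_s$, which we use below.

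\emph{Step 3 (properties of $a,b$ and the formula).} Shrinkage gives $a$ non-decreasing and $b$ non-increasing. The limits $a(0)=0$, $b(0)=1$ and $a(2\pi)=b(2\pi)=2\pi$ follow from the mass equation, since the mass tends to $1$ and to $0$ respectively, together with the mean equation.

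Convex order $\alpha_s\le_{cx}\beta_s$ forces the convex hull of $\supp\beta_s$ to contain $[s,2\pi]$, so $a(s)\le s$ and $b(s)\ge 2\pi$. These inequalities are strict for $0<s<2\pi$: equality $a(s)=s$ would make the $\beta_s$-barycenter exceed that of $\alpha_s$, and the case of $b$ is symmetric.

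Implicitly differentiating the two equations, with $f>0$ continuous so that the implicit function theorem applies, gives
\begin{equation*}
f(a)\,a'\,(b-a)=\frac{b-s}{2\pi},\qquad f(b)\,b'\,(b-a)=\frac{a-s}{2\pi}.
\end{equation*}
Hence $a'>0$ and $b'<0$ on $(0,2\pi)$. So $a$ is a continuous strictly increasing bijection $[0,2\pi]\to[0,2\pi]$, and $b$ is a continuous strictly decreasing bijection $[0,2\pi]\to[2\pi,1]$. Substituting these into \eqref{Eq: Upper_Contour_nested_interval} gives \eqref{Eq: SG_implementation}.

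The main obstacle I expect is Step 1, in particular excluding interior contact points of $I_F$ and $I_G$ from concavity alone. Establishing the strict bounds $a<s$ and $b>2\pi$, which are needed to sign the derivatives, is the second delicate point.
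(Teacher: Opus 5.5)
Your Steps 1 and 2 follow the paper's route and are fine. Your $\tau$, where $f$ crosses $1/(2\pi)$, works as well as the paper's crossing point of $F$ and $G$. Note, though, that both your contradiction in Step 1 and the strict positivity of $D$ near $1$ rest on $F<1$ on $(2\pi,1)$, i.e.\ on $f>0$.

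The genuine gap is in Step 3, at the point you flagged. Convex order only gives $a(s)\le s$ and $b(s)\ge 2\pi$. The weak bound $b\ge 2\pi>s$ already yields $a'>0$, but $b'<0$ requires $a(s)<s$, and your reason for it does not work as stated. The data $a(s)=s$, $b(s)\ge 2\pi$, equal mass and equal first moment do \emph{not} by themselves force the barycenter of $\beta_s$ above $(s+2\pi)/2$. A non-increasing density with $f(s)>1/(2\pi)$ loads extra mass just above $s$ and compensates with a thin tail beyond $2\pi$. Such a $\beta_s$ can match both moments and even satisfy $\alpha_s\le_{cx}\beta_s$, since its cdf crosses that of $\alpha_s$ once from above.

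What rules out $a\ge s$ is a global use of concavity together with the mass equation:
\begin{itemize}
\item Since $F(b)<1$, the mass equation gives $F(a)<s/(2\pi)$.
\item Since $f$ is non-increasing and $F(0)=0$, we have $af(a)\le F(a)$.
\item So $a\ge s$ would give $f(x)\le f(a)\le F(a)/a<1/(2\pi)$ for all $x\ge a$.
\item Then $\beta_s$ has density strictly below that of $\alpha_s$ on $[s,2\pi]$ and places the rest of its mass above $2\pi$. Hence $\int(x-2\pi)\,d(\beta_s-\alpha_s)>0$, contradicting the moment equations.
\end{itemize}

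Your claim that the case of $b$ is ``symmetric'' is also incorrect. Concavity makes $f$ non-increasing, so the problem has no reflection symmetry and the argument for $a$ has no mirror image. The paper instead uses that a non-increasing density has conditional mean at most the midpoint of any interval. This gives $(a+b)/2\ge(s+2\pi)/2$, hence $b\ge s+2\pi-a>2\pi$ once $a<s$ is known.

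Finally, to apply the implicit function theorem you need $0<a(s)<b(s)<1$, because $f$ is only known to be continuous and positive on $(0,1)$. This follows from the strict gap $I_F>I_G$ on $(0,1)$ by the argument behind \eqref{Eq: Contour_candidate_interior}, and your proposal should say so.
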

	\begin{proof}[Proof of Proposition~\ref{Prop: Contour_uniform_target}]
		See Appendix~\ref{App. Proof_Contour_uniform_target}. 
	\end{proof}
	
	Figure~\ref{Fig: SG_maps} illustrates the geometry of $T(\cdot)$ given by \eqref{Eq: SG_implementation}: 
	the upper contour interval $[a(s),b(s)]$ shrinks from $[0,1]$ to $\{2\pi\}$ as $s$ rises from $0$ to $2\pi$. 
	
	\begin{figure}[H]
		\centering
		\caption{An upper contour nested-interval implementation under a concave prior $F$}
		\begin{tikzpicture}
			\pgfmathsetmacro{\SGillustrationT}{1/3}
			\pgfmathsetmacro{\SGillustrationD}{sqrt(2*\SGillustrationT^2+4*\SGillustrationT+3)}
			\pgfmathsetmacro{\SGillustrationA}{1+\SGillustrationT+\SGillustrationT^2+\SGillustrationT*\SGillustrationD}
			\pgfmathsetmacro{\SGillustrationLower}{1-(1+\SGillustrationT)/\SGillustrationA}
			\pgfmathsetmacro{\SGillustrationUpper}{1-\SGillustrationT*(1+\SGillustrationT)/\SGillustrationA}
			\pgfmathsetmacro{\SGillustrationMean}{4*\SGillustrationT*\SGillustrationD/(3*\SGillustrationA)}
			\begin{axis}[
				width=0.6\textwidth,
				height=0.25\textheight,
				axis lines=left,
				axis line style={line width=0.8pt},
				xmin=0,xmax=1.025,
				ymin=0,ymax=0.71,
				xtick={0,\SGillustrationLower,2/3,\SGillustrationUpper,1},
				xticklabels={$0$,$a(s)$,$2\pi$,$b(s)$,$1$},
				ytick={0,\SGillustrationMean,2/3},
				yticklabels={$0$,$s$,$2\pi$},
				xlabel={$x$},
				xlabel style={at={(axis description cs:1.03,-0.05)},anchor=south,rotate=0},
				ylabel={$T(x)$},
				ylabel style={at={(axis description cs:0,1.0)},anchor=south,rotate=-90},
				label style={font=\small},
				tick label style={font=\small},
				clip=false,
				samples=220
				]
				\addplot[black,very thick,domain=0:1,variable=\t]
				({
					1-(1+\t)/
					(1+\t+\t^2+\t*sqrt(2*\t^2+4*\t+3))
				},{
					4*\t*sqrt(2*\t^2+4*\t+3)/
					(3*(1+\t+\t^2+\t*sqrt(2*\t^2+4*\t+3)))
				});
				\addplot[black,very thick,domain=0:1,variable=\t]
				({
					1-\t*(1+\t)/
					(1+\t+\t^2+\t*sqrt(2*\t^2+4*\t+3))
				},{
					4*\t*sqrt(2*\t^2+4*\t+3)/
					(3*(1+\t+\t^2+\t*sqrt(2*\t^2+4*\t+3)))
				});
				\draw[gray,dashed]
				(axis cs:2/3,0) -- (axis cs:2/3,2/3);
				\draw[gray,dashed]
				(axis cs:0,\SGillustrationMean) -- (axis cs:1,\SGillustrationMean);
				\draw[gray,dotted]
				(axis cs:\SGillustrationLower,0) -- (axis cs:\SGillustrationLower,\SGillustrationMean);
				\draw[gray,dotted]
				(axis cs:\SGillustrationUpper,0) -- (axis cs:\SGillustrationUpper,\SGillustrationMean);
				\node[circle,fill=black,inner sep=1.2pt]
				at (axis cs:\SGillustrationLower,\SGillustrationMean) {};
				\node[circle,fill=black,inner sep=1.2pt]
				at (axis cs:\SGillustrationUpper,\SGillustrationMean) {};
			\end{axis}
		\end{tikzpicture}
		\label{Fig: SG_maps}
	\end{figure}
	
	To understand why Proposition~\ref{Prop: Contour_uniform_target} holds, consider the triangular prior $F(x)=2x-x^2$ on $[0,1]$, which is concave and admits a strictly decreasing density $f$. 
	In this case, $\E_F[X]=\pi=1/3$ so the robustly optimal $G$ is a uniform distribution on $[0,2\pi]=[0,2/3]$. 
	Figure~\ref{Fig: SG_distributions} depicts the densities, cdfs, and integrated-cdfs of this $F$ and $G$. 
	
	\begin{figure}[htbp]
		\centering
		\caption{The triangular prior $F$ and the robustly optimal $G$}
		
		\pgfmathsetmacro{\SGcutoff}{2/3}
		
		\pgfplotsset{appdistaxis/.style={
				width=0.75\linewidth,
				height=0.15\textheight,
				scale only axis,
				axis lines=left,
				axis line style={line width=1pt,->},
				xmin=0,xmax=1.05,
				xtick={0,0.25,\SGcutoff,1},
				xticklabels={$0$,$\tau$,$2\pi$,$1$},
				tick label style={font=\footnotesize},
				axis background/.style={fill=none},
				legend style={
					at={(0.03,0.98)},anchor=north west,
					font=\footnotesize,
					legend cell align=left,
					fill=none,draw=none
				},
				clip=false,
				samples=160
		}}
		
		\begin{subfigure}[t]{0.32\textwidth}
			\centering
			\subcaption{Prior and target densities}
			\begin{tikzpicture}
				\begin{axis}[appdistaxis,
					ymin=-0.04,ymax=2.20,
					ytick={0,1.5,2},
					yticklabels={$0$,$\frac{1}{2\pi}$,$2$},
					legend style={
						at={(0.97,0.98)},anchor=north east
					}
					]
					\addplot[gray,dotted,forget plot]
					coordinates {
						(\SGcutoff,-0.04)
						(\SGcutoff,1.60)
					};
					
					\addplot[gray,dotted,forget plot]
					coordinates {
						(0.25,-0.04)
						(0.25,1.5)
					};
					
					\addplot[black,thick,domain=0:1] {2*(1-x)};
					\addlegendentry{$f$}
					
					\addplot[red,thick,domain=0:\SGcutoff] {1.5};
					\addlegendentry{\textcolor{red}{$g$}}
					
					\addplot[red,thick,domain=\SGcutoff:1,forget plot]
					{0};
				\end{axis}
			\end{tikzpicture}
		\end{subfigure}\hfill
		\begin{subfigure}[t]{0.32\textwidth}
			\centering
			\subcaption{Prior and target cdfs.}
			\begin{tikzpicture}
				\begin{axis}[appdistaxis,
					ymin=-0.02,ymax=1.10,
					ytick={0,1}
					]
					\addplot[gray,dotted,forget plot]
					coordinates {
						(\SGcutoff,-0.02)
						(\SGcutoff,1.02)
					};
					
					\addplot[gray,dotted,forget plot]
					coordinates {
						(0.25,-0.04)
						(0.25,0.4375)
					};
					
					\addplot[black,thick,domain=0:1] {2*x-x^2};
					\addlegendentry{$F$}
					
					\addplot[red,thick,domain=0:\SGcutoff] {1.5*x};
					\addlegendentry{\textcolor{red}{$G$}}
					
					\addplot[red,thick,domain=\SGcutoff:1,forget plot]
					{1};
				\end{axis}
			\end{tikzpicture}
		\end{subfigure}\hfill
		\begin{subfigure}[t]{0.32\textwidth}
			\centering
			\subcaption{Integrated-cdfs.}
			\begin{tikzpicture}
				\begin{axis}[appdistaxis,
					ymin=-0.013,ymax=0.72,
					ytick={0,\SGcutoff},
					yticklabels={$0$,$1-\pi$}
					]
					\addplot[gray,dotted,forget plot]
					coordinates {
						(1,-0.02)
						(1,0.667)
					};
					
					\addplot[black,thick,domain=0:1] {x^2-x^3/3};
					\addlegendentry{$I_F(\cdot)$}
					
					\addplot[red,thick,domain=0:\SGcutoff]
					{0.75*x^2};
					\addlegendentry{\textcolor{red}{$I_G(\cdot)$}}
					
					\addplot[red,thick,domain=\SGcutoff:1,forget plot]
					{x-1/3};
				\end{axis}
			\end{tikzpicture}
		\end{subfigure}
		\label{Fig: SG_distributions}
		\begin{minipage}{\textwidth}
			\footnotesize
			Note: The prior is triangular with cdf $F(x)=2x-x^2$. The robustly optimal solution $G$ is uniform on $[0,2\pi]=[0,2/3]$. 
			In panel (c), the integrated-cdf inequality is strict throughout $(0,1)$, so this is the unique non-separating interval. 
		\end{minipage}
	\end{figure}
	
	For this pair $(F,G)$, the majorization inequality $I_G(\cdot)\leq I_F(\cdot)$ is strict on $(0,1)$, and hence this is only one non-separating interval on which the target support satisfies
	\begin{equation*}
		\inf\supp(G)=0
		\qquad \text{and} \qquad
		\sup\supp(G)=2\pi<1.
	\end{equation*}
	It follows from part 1 of Proposition~\ref{Prop: Contour_support_gap} that a lower contour nested-interval implementation is infeasible.
	Meanwhile, for $\tau=1/4\in(0,2\pi)$, it holds that $F(x)\geq G(x)$ for all $x\in[0,\tau]$ and $f(x)\leq g(x)$ for all $x\in[\tau,2\pi]$; see panels (a) and (b) of Figure~\ref{Fig: SG_distributions}. 
	Hence, by part 2 of Proposition~\ref{Prop: Contour_density_cdf}, an upper contour nested-interval implementation is feasible, and it is uniquely given by \eqref{Eq: SG_implementation} up to $F$-null sets. 
	These features generalize to every prior $F$ that satisfies the assumptions of Proposition~\ref{Prop: Contour_uniform_target}. 
	
	\section{Conclusion}\label{Sec}
	In this paper we study whether randomization is needed to implement a prescribed distribution of posterior means and how a deterministic implementation can be constructed. For a one-dimensional, atomless prior with a finite first moment, the backward-Monge theorem of \cite{Nutz2024martingale} implies that every mean-preserving contraction admits a deterministic implementation. We describe a construction that decomposes the problem into independent local problems on non-separating intervals and uses barcode transport within each interval. We further characterize when the implementing signal can have nested lower or upper contour intervals and provide conditions on the prior and target that help determine which direction to use.
	
	These results imply that restricting information design to deterministic signals leaves the designer's value unchanged whenever the objective and constraints depend only on the posterior-mean distribution. This result applies beyond linear persuasion, including settings with endogenous pricing or worst-case evaluations, as our applications illustrate. A designer can therefore first choose a desirable posterior-mean distribution and then implement it through a partition of the state space. The resulting partition may nevertheless pool disconnected states and need not be monotone, so additional restrictions on the structure of disclosure can remain costly. Our nested-interval characterization identifies circumstances in which the desired outcome can be achieved through a simpler deterministic rule.

	\begin{appendices}
		\section{All Omitted Proofs}\label{App: Proofs}
		
		\subsection{Proof of Proposition~\ref{Prop: Decomposition}}\label{App. Prop_Decomposition}
		We first introduce a useful auxiliary result. 
		\begin{claim}\label{Claim: Integrated_cdf_identities}
			Let $F$ be atomless and $G\in\MPC(F)$. 
			Suppose that random vector $(X,M)$ satisfies $X\sim F$, $M\sim G$, and $\E[X|M]=M$. 
			Then, for every $z\in\mathbb R$, we have
			\begin{equation*}
				I_F(z) = \E\left[(z-X)\1_{\{X\leq z\}}\right]
				\qquad \text{and} \qquad 
				I_G(z) = \E\left[(z-X)\1_{\{M\leq z\}}\right].
			\end{equation*}
			Consequently,
			\begin{align}\label{Eq: Difference_integrated_cdfs}
				\begin{split}
					I_F(z)-I_G(z) &= \E\left[(z-X)\left(\1_{\{X\le z\}}-\1_{\{M\le z\}}\right)\right]\\
					&= \E\left[(z-X)\1_{\{X\leq z\}}\1_{\{M>z\}}+(X-z)\1_{\{X>z\}}\1_{\{M\le z\}}\right] \geq 0.
				\end{split} 
			\end{align} 
		\end{claim}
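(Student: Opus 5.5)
The plan is to write both integrated cdfs as expectations of the same integrand, $z$ minus the state, and then exploit the martingale property to replace $M$ by $X$ inside that integrand. For the first identity, apply Fubini to $I_F(z)=\int_{-\infty}^z \Prob(X\le y)\,dy$:
\begin{equation*}
I_F(z)=\E\left[\int_{-\infty}^z \1_{\{X\le y\}}\,dy\right]=\E\left[(z-X)\1_{\{X\le z\}}\right].
\end{equation*}
Tonelli justifies the interchange, since the integrand is non-negative and the expectation is finite because $X$ has a finite first moment. The same computation with $M$ in place of $X$ gives $I_G(z)=\E[(z-M)\1_{\{M\le z\}}]$, which is finite because $M\sim G$ has a finite first moment.

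The second identity then comes from the martingale condition. The variable $(z-M)\1_{\{M\le z\}}$ is $\sigma(M)$-measurable, so we can use $\E[X|M]=M$ in the form
\begin{equation*}
\E\left[(z-X)\1_{\{M\le z\}}\right]=\E\left[\E[z-X\mid M]\,\1_{\{M\le z\}}\right]=\E\left[(z-M)\1_{\{M\le z\}}\right]=I_G(z).
\end{equation*}
The tower property applies here because $X$ is integrable and $\1_{\{M\le z\}}$ is bounded. This gives $I_G(z)=\E[(z-X)\1_{\{M\le z\}}]$. Atomlessness of $F$ is not actually needed for this step.

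Subtracting the two identities yields the first line of \eqref{Eq: Difference_integrated_cdfs}. For the second line, split on the four cases of the pair of indicators. When both indicators equal one or both equal zero, the difference $\1_{\{X\le z\}}-\1_{\{M\le z\}}$ vanishes. On $\{X\le z,\ M>z\}$ the difference equals $+1$, and on $\{X>z,\ M\le z\}$ it equals $-1$. Multiplying by $(z-X)$ therefore gives the stated sum of two terms. Each term is non-negative pointwise, since $z-X\ge0$ on the first event and $X-z>0$ on the second, so the expectation is $\ge 0$.

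None of these steps is a real obstacle. The only care required is integrability, namely that $I_F$ and $I_G$ are finite and that the conditional-expectation step is legitimate, and both follow from the finite-first-moment assumptions on $F$ and $G$.
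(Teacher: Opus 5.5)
Your proof is correct and follows essentially the same route as the paper. The paper obtains $I_H(z)=\E[(z-Y)\1_{\{Y\le z\}}]$ by integration by parts where you use Tonelli, which is just a more explicit justification of the same identity. The conditional-expectation step and the indicator case split are identical to the paper's.
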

		\begin{proof}[Proof of Claim~\ref{Claim: Integrated_cdf_identities}]
			For any integrable random variable $Y$ with cdf $H$ and a cutoff $z\in\R$, integration by parts gives
			\begin{equation}\label{Eq: I_z_identity}
				\E[(z-Y)\1_{\{Y\leq z\}}] = \int_{-\infty}^{z}{(z-y)dH(y)} = \int_{-\infty}^{z}{H(y)dy} = I_H(z)\ .
			\end{equation} 
			Applying \eqref{Eq: I_z_identity} to $X$ gives the expression for $I_F(z)$. 
			Applying it to $M$ and using $\E[X|M]=M$ gives
			\begin{equation*}
				I_G(z) = \E\left[(z-M)\1_{\{M\leq z\}}\right]
				=\E\left[\E[z-X|M]\1_{\{M\leq z\}}\right]
				=\E\left[(z-X)\1_{\{M\leq z\}}\right].
			\end{equation*}
			The first line of \eqref{Eq: Difference_integrated_cdfs} then directly follows from the expressions of $I_F(z)$ and $I_G(z)$. 
			The second line follows from the fact that
			\begin{align*}
				(z-X)\left(\1_{\{X\leq z\}}-\1_{\{M\leq z\}}\right) 
				&= (z-X)\1_{\{X\leq z<M\}} + (X-z)\1_{\{M\leq z<X\}} \\
				&= (z-X)\1_{\{X\leq z\}}\1_{\{M>z\}} + (X-z)\1_{\{X>z\}}\1_{\{M\le z\}}.
			\end{align*}
			Since both terms are non-negative for all $z$, we have $I_F(z)-I_G(z)\geq0$. 
		\end{proof}
		
		\noindent Let $X\sim F$ and $M=T(X)$. Since $T$ implements $G$ from $F$, we have $M\sim G$ and $\mathbb{E}[X|M]=M$. 
		
		To show part 1 of Proposition~\ref{Prop: Decomposition}, recall that $I_F(z)-I_G(z)=0$ for all $z\in\mathbb{S}_G$ by definition. 
		Then, by \eqref{Eq: Difference_integrated_cdfs} in Claim~\ref{Claim: Integrated_cdf_identities}, we have 
		\begin{equation}\label{Eq: Difference_integrated_cdfs2}
			I_F(z)-I_G(z) = \E\left[(z-X)\1_{\{X\leq z\}}\1_{\{M>z\}} + (X-z)\1_{\{X>z\}}\1_{\{M\le z\}}\right] = 0 \quad \text{for all } z\in\mathbb{S}_G\ . 
		\end{equation}
		Since $F$ is atomless, \eqref{Eq: Difference_integrated_cdfs2} implies $\1_{\{X\leq z\}}=\1_{\{M\leq z\}}$ and $\mathbb{P}(X=z)=\mathbb{P}(M=z)=0$ for all $z\in\mathbb{S}_G$. 
		Using $M=T(X)$, we then have $\1_{\{X\leq z\}}=\1_{\{T(X)\leq z\}}$ for a.e. $X$ and $F(z)=G(z)=G(z^{-})$ for every $z\in\mathbb{S}_G$. 
		Therefore, for any measurable subset $D$ of $\mathbb{S}_G$, we have $\1_{\{X\in D\}}=\1_{\{T(X)\in D\}}$ for a.e. $X\in D$. Consequently, $T(X)=X$ a.e. on $\mathbb{S}_G$. 
		This proves part 1 of Proposition~\ref{Prop: Decomposition}.  
		
		To show part 2, consider any non-separating interval $(c_j,d_j)$. 
		Since $c_j,d_j\in\mathbb{S}_G$, the argument above implies $F(c_j)=G(c_j)$ and $F(d_j)=G(d_j)$. 
		Hence, $F(d_j)-F(c_j)=G(d_j)-G(c_j)$ and we denote this value by $p_j$. 
		Meanwhile, recall that 
		\begin{equation*}
			F|_{(c_j,d_j)}(x) = 
			\begin{cases}
				0, &\text{if } x<c_j \\
				\frac{F(x)-F(c_j)}{F(d_j)-F(c_j)}, &\text{if } x\in(c_j,d_j] \\
				1, &\text{if } x\geq d_j
			\end{cases}
			\quad \text{and} \quad
			G|_{(c_j,d_j)}(x) = 
			\begin{cases}
				0, &\text{if } x<c_j \\
				\frac{G(x)-G(c_j)}{G(d_j)-G(c_j)}, &\text{if } x\in(c_j,d_j] \\
				1, &\text{if } x\geq d_j
			\end{cases}\ . 
		\end{equation*}
		These together imply 
		\begin{equation*}
			I_{F|_{(c_j,d_j)}}(z) - I_{G|_{(c_j,d_j)}}(z) = \frac{I_F(z) - I_G(z)}{p_j}\cdot\1_{z\in(c_j,d_j)} \geq 0 
		\end{equation*}
		for all $z\in\R$, and strict inequality holds if and only if $z\in(c_j,d_j)$. 
		Therefore, by \cite{Strassen1965existence}, $G|_{(c_j,d_j)}\preceq_{MPS}F|_{(c_j,d_j)}$. 
		This proves part 2 of Proposition~\ref{Prop: Decomposition}.

		\subsection{Proof of Lemma~\ref{Lemma: Moment_matched_interval}}\label{App. Proof_Lemma_Interval}
		Let $\mu$ and $\nu$ be the local target and prior measures from Section~\ref{Sec: Contour_nested_interval}. By Proposition~\ref{Prop: Decomposition}, $\mu\leq_{cx}\nu$, and $\nu$ is atomless. For $0\leq\alpha\leq\mu$, write
		\begin{equation*}
			h:=\alpha(\R)\qquad\text{and}\qquad u:=\int_{\R}m\,d\alpha(m).
		\end{equation*}
		If $h=0$, take $I=\varnothing$. If $h=1$, then $\alpha=\mu$, and the entire component has the required mass and first moment. In either case, every submeasure of $\nu$ with the required mass is the same measure, so uniqueness and minimality are immediate. Assume henceforth that $0<h<1$.
		
		For $y\in(0,1)$, let
		\begin{equation*}
			Q(y):=\inf\{x\in\R:\nu((-\infty,x])\geq y\}
		\end{equation*}
		be the quantile function of $\nu$, and define $Q(0)$ and $Q(1)$ by their one-sided limits. Either endpoint may be infinite. Since $\nu$ is atomless, its restriction to $[Q(z),Q(z+h)]$ has mass $h$ and first moment
		\begin{equation*}
			J(z):=\int_z^{z+h}Q(y)\,dy,\qquad 0\leq z\leq1-h.
		\end{equation*}
		The finite first moment makes $Q$ integrable and $J$ continuous. Atomlessness also makes $Q$ strictly increasing on $(0,1)$, so $J$ is strictly increasing.
		
		We next show that $J(0)\leq u\leq J(1-h)$. For any $q\in\R$, the inequalities $\alpha\leq\mu$ and $\mu\leq_{cx}\nu$ give
		\begin{align*}
			hq-u&\leq\int_{\R}(q-m)_+\,d\alpha(m)
			\leq\int_{\R}(q-m)_+\,d\mu(m)
			\leq\int_{\R}(q-x)_+\,d\nu(x),\\
			u-hq&\leq\int_{\R}(m-q)_+\,d\alpha(m)
			\leq\int_{\R}(m-q)_+\,d\mu(m)
			\leq\int_{\R}(x-q)_+\,d\nu(x),
		\end{align*}
		where $v_+:=\max\{v,0\}$. Taking $q=Q(h)$ in the first line and $q=Q(1-h)$ in the second yields
		\begin{equation*}
			u\geq\int_0^hQ(y)\,dy=J(0),
			\qquad
			u\leq\int_{1-h}^1Q(y)\,dy=J(1-h).
		\end{equation*}
		Consequently, there is a unique $z^*\in[0,1-h]$ such that $J(z^*)=u$. Set
		\begin{equation*}
			a:=Q(z^*),\qquad b:=Q(z^*+h),\qquad
			I:=[a,b]\cap(c_j,d_j).
		\end{equation*}
		Then $\nu|_I$ has mass $h$ and first moment $u$. Every interval restriction of mass $h$ corresponds to a quantile interval of length $h$, so uniqueness of $z^*$ gives uniqueness of the restriction up to $\nu$-null sets.
		
		It remains to prove minimality in convex order. Write $\beta:=\nu|_I$, and let $\eta\leq\nu$ have the same mass and first moment as $\beta$. Suppose first that $a$ and $b$ are finite. For any convex function $\varphi$, let $L$ be the affine function joining $(a,\varphi(a))$ and $(b,\varphi(b))$. Convexity gives $\varphi-L\leq0$ on $[a,b]$ and $\varphi-L\geq0$ outside it. Since $\eta\leq\nu$,
		\begin{equation*}
			\int_{\R}(\varphi-L)\,d\eta
			\geq\int_{[a,b]}(\varphi-L)\,d\nu
			=\int_{\R}(\varphi-L)\,d\beta.
		\end{equation*}
		The affine terms have equal integrals because mass and first moment agree. Hence $\int\varphi\,d\beta\leq\int\varphi\,d\eta$, proving $\beta\leq_{cx}\eta$.
		
		If $a=-\infty$ and $b<\infty$, equal mass and first moment imply
		\begin{equation*}
			\int_{\R}(x-b)\,d(\eta-\beta)(x)=0.
		\end{equation*}
		Below $b$, both $x-b$ and $\eta-\beta=\eta-\nu$ are non-positive; above $b$, both are non-negative. The integral can therefore vanish only if $\eta=\beta$, since $\nu$ has no atom at $b$. The case $b=+\infty$ is symmetric. This proves minimality for unbounded intervals as well.

		\subsection{Proof of Proposition~\ref{Prop: Nested_Contours}}\label{App. Proof_Prop_Nested_Contours}
		We prove (a)$\Leftrightarrow$(b), (c)$\Leftrightarrow$(d), and (b)$\Leftrightarrow$(c) in sequence. Throughout, $\mu$ and $\nu$ denote the probability measures of the local target and prior, respectively, and $\alpha_s=\mu|_{(c_j,s]}$, $\beta_s=\nu|_{I_j(s)}$.
		
		\paragraph{(a)$\Leftrightarrow$(b):}
		Suppose (a) holds. Let $X\sim\nu$, $M=T(X)$, and $A_s:=\{x\in(c_j,d_j):T(x)\leq s\}$. Since $T$ implements $\mu$ from $\nu$, $M\sim\mu$ and $\E[X|M]=M$. Consequently,
		\begin{align*}
			\nu(A_s)&=\mu((c_j,s]),\\
			\int_{A_s}x\,d\nu(x)&=\E[M\1_{\{M\leq s\}}]
			=\int_{(c_j,s]}m\,d\mu(m).
		\end{align*}
		By (a), $A_s$ is an interval up to $\nu$-null sets. The uniqueness in Lemma~\ref{Lemma: Moment_matched_interval} therefore implies $\nu|_{A_s}=\beta_s$. Since $A_s\subseteq A_t$ whenever $s<t$, (b) follows.
		
		Conversely, suppose (b) holds. First replace each $I_j(s)$ by $\bigcap_{q\in\mathbb Q\cap(s,d_j)}I_j(q)$. By (b) and $\mu((c_j,q])\to\mu((c_j,s])$ as $q\downarrow s$, this leaves its $\nu$-restriction unchanged. Continue to denote these intervals by $I_j(s)$. They are nested and satisfy $I_j(s)=\bigcap_{s<t<d_j}I_j(t)$. Define
		\begin{equation*}
			T_L(x):=\inf\{s\in(c_j,d_j):x\in I_j(s)\}.
		\end{equation*}
		The infimum defines a Borel function, since its strict lower level sets are countable unions of $I_j(q)$ at rational levels $q$. The matching masses tend to zero and one as $s$ approaches $c_j$ and $d_j$, respectively. Thus, this infimum lies in $(c_j,d_j)$ for $\nu$-almost every $x$; assign a fixed value in this interval on the remaining null set. The construction gives $\{x:T_L(x)\leq s\}=I_j(s)$ up to $\nu$-null sets. 
		For $X\sim\nu$ and $M=T_L(X)$, the matching mass and first moment give
		\begin{align*}
			\Prob(M\leq s)&=\nu(I_j(s))=\mu((c_j,s]),\\
			\E[X\1_{\{M\leq s\}}]&=\int_{I_j(s)}x\,d\nu(x)
			=\int_{(c_j,s]}m\,d\mu(m).
		\end{align*}
		The first identity gives $M\sim\mu$. The second then implies $\E[(X-M)\1_{\{M\leq s\}}]=0$ for every $s$. This equality extends from lower intervals to all Borel sets of $M$, so $\E[X|M]=M$. 
		Hence, $T_L$ is a lower contour nested-interval implementation, proving (a). Any other such implementation has the same lower contour restrictions. Comparing these at rational levels shows that it agrees with $T_L$, $\nu$-almost surely.
		
		\paragraph{(c)$\Leftrightarrow$(d):}
		Statement (d) implies (c) by the definition of the shadow. Conversely, suppose (c) holds and fix $s$. Then $\alpha_s\leq_{cx}\beta_s\leq\nu$. Every measure $\eta$ satisfying $\alpha_s\leq_{cx}\eta\leq\nu$ has the same mass and first moment as $\beta_s$. Lemma~\ref{Lemma: Moment_matched_interval} therefore gives $\beta_s\leq_{cx}\eta$. Thus $\beta_s$ is the least element in convex order among the measures defining $S^\nu(\alpha_s)$, proving (d).
		
		\paragraph{(b)$\Leftrightarrow$(c):}
		Suppose (b) holds. By the first equivalence, the constructed map $T_L$ implements $\mu$ from $\nu$. Let $X\sim\nu$ and $M=T_L(X)$. For every convex function $\varphi:\R\to\R$,
		\begin{equation*}
			\int_{\R}\varphi(m)\,d\alpha_s(m) = \E\left[\varphi(M)\1_{\{M\leq s\}}\right] 
			\leq \E\left[\varphi(X)\1_{\{M\leq s\}}\right]
			= \int_{\R}\varphi(x)\,d\beta_s(x). 
		\end{equation*}
		The inequality follows from $\E[X|M]=M$ and the convexity of $\varphi$. Hence, $\alpha_s\leq_{cx}\beta_s$, proving (c).
		
		Conversely, suppose (c) holds. By the preceding equivalence, (d) holds. The associativity property of shadows \citep[][Theorem~4.8]{BeiglbockJuillet2016} gives, for $s<t$,
		\begin{equation*}
			\beta_t=\beta_s+S^{\nu-\beta_s}\!\left(\mu|_{(s,t]}\right)\geq\beta_s.
		\end{equation*}
		Therefore, $\nu(I_j(s)\setminus I_j(t))=0$, proving (b).
		
		Finally, under these equivalent conditions, the joint distribution of $(T_L(X),X)$ matches every lower target contour $\alpha_s$ to its shadow $\beta_s$. It is therefore the left-curtain coupling. Thus the unique lower contour nested-interval implementation is obtained by reading this coupling backward.

		\subsection{Proof of Proposition~\ref{Prop: Contour_support_gap}}\label{App. Proof_Contour_support_gap}
		We only prove part 1; the proof of part 2 is similar and omitted.
		Fix a non-separating interval $(c_j,d_j)$, and retain its local target and prior measures $\mu,\nu$ and common mass $p_j$. 
		By the definition of a non-separating interval and Proposition~\ref{Prop: Decomposition},
		\begin{equation}\label{Eq: Local_strict_gap}
			\int_{\R}(z-x)_+\,d\nu(x)
			-\int_{\R}(z-m)_+\,d\mu(m)
			=\frac{I_F(z)-I_G(z)}{p_j}>0,
			\qquad c_j<z<d_j,
		\end{equation}
		where $v_+:=\max\{v,0\}$.
		
		Let $X\sim\nu$ and $T$ implement $\mu$ from $\nu$, so $T(X)\sim\mu$ and $\E[X|T(X)]=T(X)$.
		Suppose $T$ is a lower contour nested-interval implementation but $\ell_j=c_j$.
		Then $\mu((c_j,s])>0$ for every $s\in(c_j,d_j)$, while this probability tends to zero as $s\downarrow c_j$.
		We can therefore choose $s$ such that the lower contour set
		\begin{equation*}
			E_s:=\{x\in(c_j,d_j):T(x)\leq s\}
		\end{equation*}
		satisfies $0<\nu(E_s)<1$.
		Write $E_s=[a_s,b_s]\cap(c_j,d_j)$, up to $\nu$-null sets, with $c_j\leq a_s<b_s\leq d_j$.
		
		On $\{X\in E_s\}$, we have $X\geq a_s$.
		Since this event is determined by $T(X)$, conditioning on $T(X)$ gives $T(X)\geq a_s$ on the same event whenever
		$a_s$ is finite. 
		If $a_s>c_j$, choose $t\in(c_j,\min\{a_s,s\})$.
		The assumption $\ell_j=c_j$ implies $\Prob(T(X)\leq t)>0$.
		But this event is contained in $\{X\in E_s\}$, where $T(X)\geq a_s>t$, a contradiction.
		Thus $a_s=c_j$, including when $c_j=-\infty$. 
		Consequently, $E_s=(c_j,b_s]$ up to null sets, with a finite $b_s\in(c_j,d_j)$. 
		
		On $\{X\in E_s\}$, $X\leq b_s$; on its complement, $X\geq b_s$.
		Conditioning on $T(X)$ therefore gives $T(X)\leq b_s$ on the former event and $T(X)\geq b_s$ on the latter.
		Since $\{X\in E_s\}=\{T(X)\leq s\}$ up to null sets, it follows that
		\begin{align*}
			\int_{\R}(b_s-x)_+\,d\nu(x)
			-\int_{\R}(b_s-m)_+\,d\mu(m)
			&=\E\!\left[(T(X)-X)\1_{\{T(X)\leq s\}}\right]\\
			&=\E\!\left[
			\bigl(T(X)-\E[X|T(X)]\bigr)
			\1_{\{T(X)\leq s\}}
			\right]=0.
		\end{align*}
		This contradicts \eqref{Eq: Local_strict_gap} at $z=b_s$. 
		Hence, $\ell_j>c_j$ must hold. This proves part 1.

		\subsection{Proof of Proposition~\ref{Prop: Contour_density_cdf}}\label{App. Proof_Contour_density_cdf}
		We only prove part 1; the proof of part 2 is similar and omitted.
		Throughout, $I_j(s)=[a_j(s),b_j(s)]$ is the interval matching the mass and first moment of the lower target contour $\alpha_s=\mu|_{(c_j,s]}$, as defined by \eqref{Eq: Lower_contour_mass} and \eqref{Eq: Lower_contour_barycenter}.
		We show that $a_j(s)$ is non-increasing and $b_j(s)$ is non-decreasing.
		
		First, any interval $[a,b]$ matching the mass and first moment of a submeasure $0\leq\alpha\leq\mu$ with $0<\alpha(\R)<1$ must satisfy
		\begin{equation}\label{Eq: Contour_candidate_interior}
			c_j<a<b<d_j.
		\end{equation}
		Indeed, the prior is atomless, so positive mass requires $a<b$.
		If $a=c_j$, matching mass and first moment gives
		\begin{equation*}
			\int_{\R}(b-x)_+\,d\nu(x)
			=\int_{\R}(b-m)\,d\alpha(m)
			\leq\int_{\R}(b-m)_+\,d\mu(m),
		\end{equation*}
		contradicting \eqref{Eq: Local_strict_gap} at $z=b\in(c_j,d_j)$.
		If $b=d_j$, the same moment equalities give
		\begin{equation*}
			\int_{\R}(x-a)_+\,d\nu(x)
			=\int_{\R}(m-a)\,d\alpha(m)
			\leq\int_{\R}(m-a)_+\,d\mu(m).
		\end{equation*}
		Since $\mu$ and $\nu$ have the same mean, the difference between these two positive-part integrals equals the difference in \eqref{Eq: Local_strict_gap} at $z=a$.
		This is again a contradiction, proving \eqref{Eq: Contour_candidate_interior}.
		
		Now fix $s\in(\ell_j,r_j)$ that is not a target atom.
		We first show
		\begin{equation}\label{Eq: Contour_lower_straddle}
			a_j(s)\leq s\leq b_j(s).
		\end{equation}
		The target restricted to $(c_j,s]$ has conditional mean at most $s$.
		The matching prior interval must have the same conditional mean, so $a_j(s)\leq s$.
		If $s>\tau_j$, the mass equation, $F(c_j)=G(c_j)$, and the cdf condition give
		\begin{equation*}
			F(b_j(s))=F(a_j(s))+G(s)-G(c_j)
			\geq G(s)\geq F(s),
		\end{equation*}
		and hence $b_j(s)\geq s$.
		
		If $s\leq\tau_j$, suppose instead that $b_j(s)<s$.
		Below $\ell_j$, only the prior restriction can contribute to the moment difference below, and its contribution is non-positive.
		On $(\ell_j,s)$, the density condition $f\leq g$ and the non-negative target atom masses also make the prior contribution no greater than the target contribution.
		Moreover, connectedness of the target support implies that $G$ assigns positive mass to $(\max\{b_j(s),\ell_j\},s)$, where the prior restriction has no mass.
		Therefore,
		\begin{equation*}
			\int_{a_j(s)}^{b_j(s)}(x-\ell_j)\,dF(x)
			-\int_{(c_j,s]}(m-\ell_j)\,dG(m)<0.
		\end{equation*}
		But the matching mass and first moment require this difference to be zero.
		This contradiction proves \eqref{Eq: Contour_lower_straddle}.
		
		By \eqref{Eq: Contour_candidate_interior}, both boundaries are interior.
		The Jacobian determinant of the two moment equations with respect to their boundaries is
		\begin{equation*}
			[a_j(s)-b_j(s)]f(a_j(s))f(b_j(s))\ne0.
		\end{equation*}
		The implicit function theorem therefore gives continuously differentiable boundaries between target atoms.
		Differentiating \eqref{Eq: Lower_contour_mass} and \eqref{Eq: Lower_contour_barycenter} gives
		\begin{equation}\label{Eq: Contour_lower_derivatives}
			a_j'(s)=\frac{g(s)[s-b_j(s)]}{[b_j(s)-a_j(s)]f(a_j(s))}\leq0
			\quad\text{and}\quad
			b_j'(s)=\frac{g(s)[s-a_j(s)]}{[b_j(s)-a_j(s)]f(b_j(s))}\geq0.
		\end{equation}
		Thus $I_j(s)$ expands between target atoms.
		
		It remains to check expansion when a target atom is included.
		Fix an atom $x_k$ of mass $q_k$ and add it gradually.
		For $t\in[0,q_k]$, let $[a(t),b(t)]$ match the mass and first moment of
		\begin{equation*}
			\alpha_t:=\mu|_{(c_j,x_k)}+\frac{t}{p_j}\delta_{x_k}.
		\end{equation*}
		This is a submeasure of $\mu$, so its matching interval exists by Lemma~\ref{Lemma: Moment_matched_interval}.
		In terms of the original cdfs, its boundaries satisfy
		\begin{align*}
			F(b(t))-F(a(t))&=G(x_k^-)-G(c_j)+t,\\
			\int_{a(t)}^{b(t)}x\,dF(x)&=\int_{(c_j,x_k)}m\,dG(m)+tx_k.
		\end{align*}
		Consider values of $t$ for which $0<\alpha_t(\R)<1$.
		The matching conditional mean is at most $x_k$, so $a(t)\leq x_k$.
		If $\ell_j<x_k\leq\tau_j$, the preceding moment comparison with $s=x_k$ and only mass $t$ of the atom included gives $b(t)\geq x_k$.
		This also covers an atom at $\tau_j$.
		If $x_k>\tau_j$, the cdf condition gives $F(x_k)\leq G(x_k^-)$; at $x_k=r_j$, this follows by taking left limits.
		Consequently,
		\begin{equation*}
			F(b(t))=F(a(t))+G(x_k^-)-G(c_j)+t
			\geq G(x_k^-)+t\geq F(x_k),
		\end{equation*}
		so $b(t)\geq x_k$.
		Finally, if $x_k=\ell_j$, the partial target consists only of mass at $x_k$, and matching its mean directly gives $a(t)\leq x_k\leq b(t)$.
		
		Thus $a(t)\leq x_k\leq b(t)$ in all cases.
		By \eqref{Eq: Contour_candidate_interior}, both boundaries are interior, and differentiating the two moment equations gives
		\begin{equation*}
			\frac{da(t)}{dt}=\frac{x_k-b(t)}{[b(t)-a(t)]f(a(t))}\leq0
			\quad\text{and}\quad
			\frac{db(t)}{dt}=\frac{x_k-a(t)}{[b(t)-a(t)]f(b(t))}\geq0.
		\end{equation*}
		Hence including the atom also expands the interval.
		At $t=0$, the matching restriction is the limit of the lower contour restrictions as $s\uparrow x_k$; at $t=q_k$, it is $\nu|_{I_j(x_k)}$.
		These endpoint identifications follow from boundedness of the component, continuity of the two moment integrals, and uniqueness in Lemma~\ref{Lemma: Moment_matched_interval}.
		Zero-mass and full-mass contours are represented by the empty set and the entire component.
		
		Since there are finitely many target atoms, the expansion between atoms and at each atom proves that $I_j(s)$ expands throughout $(c_j,d_j)$.
		Proposition~\ref{Prop: Nested_Contours} then gives the lower contour nested-interval implementation $T_L$ and its uniqueness up to $\nu$-null sets.
		This proves part 1.

		\subsection{Proof of Proposition~\ref{Prop: Contour_uniform_target}}\label{App. Proof_Contour_uniform_target}
		Let $\mu$ and $\nu$ denote the probability measures of $G$ and $F$, respectively.
		Concavity and non-uniformity imply $F(x)>x$ for every $x\in(0,1)$.
		Indeed, equality at an interior point of the chord joining $(0,0)$ and $(1,1)$ would force the concave cdf to coincide with this chord everywhere.
		Therefore, $\pi<1/2$. 
		On $[0,2\pi]$, define $h(s):=F(s)-s/{2\pi}$.
		This function is concave, with $h(0)=0$ and $h(2\pi)<0$.
		Since $F$ and $G$ have the same mean, we have $\int_0^1\left[F(s)-G(s)\right]ds=0$. 
		Moreover, $F(s)-G(s)<0$ on $(2\pi,1)$.
		Thus, $h$ must be positive somewhere on $(0,2\pi)$.
		Concavity then implies that there is a unique $\tau\in(0,2\pi)$ such that $F(s)>G(s)$ for $0<s<\tau$ and $F(s)<G(s)$ for $\tau<s<1$.
		Consequently,
		\begin{equation*}
			I_F(z)-I_G(z)=\int_0^z\left[F(s)-G(s)\right]ds>0,\qquad 0<z<1.
		\end{equation*}
		Together with equality of means, this proves $G\in\MPC(F)$ and identifies $(0,1)$ as its unique non-separating interval.
		
		We now verify part 2 of Proposition~\ref{Prop: Contour_density_cdf}.
		The cdf condition $F(s)\geq G(s)$ holds for $0<s<\tau$ by construction.
		For $\tau<s<2\pi$, concavity gives
		\begin{equation*}
			f(s)-\frac{1}{2\pi}=h'(s)
			\leq\frac{h(s)-h(\tau)}{s-\tau}<0.
		\end{equation*}
		Hence $f(s)\leq g(s)=1/{2\pi}$ for $\tau<s<{2\pi}$, as required.
		Proposition~\ref{Prop: Contour_density_cdf} therefore gives an upper contour nested-interval implementation, which is unique up to prior-null sets by Proposition~\ref{Prop: Nested_Contours_upper}.
		Integrating the upper target contour gives \eqref{Eq: SG_contour_mass} and \eqref{Eq: SG_contour_first_moment}.
		
		\Needspace{5\baselineskip}
		It remains to verify the strict inequalities and endpoint properties of the boundaries.
		Fix $s\in(0,2\pi)$ and write $[a,b]=[a(s),b(s)]$.
		The matching interval has prior mass strictly between zero and one. The argument for \eqref{Eq: Contour_candidate_interior}, applied to this component, therefore gives $0<a<b<1$.
		
		Next suppose $a\geq s$.
		The mass constraint implies $F(a)<s/{2\pi}$, since $F(b)<1$.
		Since concavity makes $f$ non-increasing,
		\begin{equation*}
			f(x)\leq f(a)\leq\frac{F(a)}a<\frac{1}{2\pi},\qquad x\geq a.
		\end{equation*}
		Let $\beta:=\nu|_{[a,b]}$ and $\alpha:=\mu|_{[s,2\pi]}$.
		Both measures are supported on $[s,1]$.
		On $[s,2\pi]$, $\beta$ has density strictly below $1/{2\pi}$, whereas $\alpha$ has density $1/{2\pi}$; above $2\pi$, only $\beta$ can have mass.
		It follows that
		\begin{equation*}
			\int_{[0,1]}(x-2\pi)\,d(\beta-\alpha)(x)>0,
		\end{equation*}
		contradicting the matching mass and first moment.
		Thus $a(s)<s$.
		
		For a non-increasing density, the conditional mean on an interval is at most its midpoint.
		Indeed, with $z=(a+b)/2$,
		\begin{equation*}
			\int_a^b(x-z)f(x)\,dx
			=\int_0^{(b-a)/2}t\left[f(z+t)-f(z-t)\right]dt\leq0.
		\end{equation*}
		The target contour has conditional mean $(s+2\pi)/2$, so matching gives $b(s)\geq s+2\pi-a(s)>2\pi$.
		This proves $0<a(s)<s<2\pi<b(s)<1$.
		
		Since both endpoints are interior and $f$ is continuous and positive there, the implicit function theorem applied to the two moment constraints gives continuously differentiable boundaries.
		Differentiating these constraints yields
		\begin{equation*}
			a'(s)=\frac{b(s)-s}{2\pi[b(s)-a(s)]f(a(s))}>0 
			\qquad \text{and} \qquad 
			b'(s)=-\frac{s-a(s)}{2\pi[b(s)-a(s)]f(b(s))}<0\ . 
		\end{equation*}
		As $s\downarrow0$, the interval mass tends to one, forcing $(a(s),b(s))\to(0,1)$.
		As $s\uparrow2\pi$, the interval mass tends to zero.
		Since every such interval contains $2\pi$ and $f$ is positive near $2\pi$, both endpoints converge to $2\pi$.
		Thus the two boundary functions are continuous and strictly monotone on their respective ranges, and their inverses give \eqref{Eq: SG_implementation}. 
	\end{appendices}

	\bibliographystyle{apsr}
	\bibliography{References}

@article{kamenica2011bayesian,
  title={Bayesian persuasion},
  author={Kamenica, Emir and Gentzkow, Matthew},
  journal={American Economic Review},
  volume={101},
  number={6},
  pages={2590--2615},
  year={2011},
  publisher={American Economic Association}
}

@Article{Dworczak2019,
  author    = {Dworczak, Piotr and Martini, Giorgio},
  title     = {The simple economics of optimal persuasion},
  journal   = {Journal of Political Economy},
  year      = {2019},
  volume    = {127},
  number    = {5},
  pages     = {1993--2048},
  publisher = {The University of Chicago Press Chicago, IL},
}

@Article{Kolotilin2022censorship,
  author    = {Kolotilin, Anton and Mylovanov, Timofiy and Zapechelnyuk, Andriy},
  title     = {Censorship as optimal persuasion},
  number    = {2},
  pages     = {561--585},
  volume    = {17},
  journal   = {Theoretical Economics},
  publisher = {Wiley Online Library},
  year      = {2022},
}

@Article{gentzkow2016rothschild,
  author  = {Gentzkow, Matthew and Kamenica, Emir},
  title   = {A Rothschild-Stiglitz approach to Bayesian persuasion},
  journal = {American Economic Review},
  year    = {2016},
  volume  = {106},
  number  = {5},
  pages   = {597--601},
}

@Article{Kleiner2021extreme,
  author    = {Kleiner, Andreas and Moldovanu, Benny and Strack, Philipp},
  title     = {Extreme points and majorization: Economic applications},
  journal   = {Econometrica},
  year      = {2021},
  volume    = {89},
  number    = {4},
  pages     = {1557--1593},
  publisher = {Wiley Online Library},
  doi       = {10.3982/ECTA18312},
}

@Article{Guo2019interval,
  author    = {Guo, Yingni and Shmaya, Eran},
  title     = {The interval structure of optimal disclosure},
  number    = {2},
  pages     = {653--675},
  volume    = {87},
  journal   = {Econometrica},
  publisher = {Wiley Online Library},
  year      = {2019},
}

@Article{Strassen1965existence,
  author    = {Strassen, Volker},
  title     = {The existence of probability measures with given marginals},
  number    = {2},
  pages     = {423--439},
  volume    = {36},
  journal   = {The Annals of Mathematical Statistics},
  publisher = {Institute of Mathematical Statistics},
  year      = {1965},
}

@Article{Arieli2023optimal,
  author    = {Arieli, Itai and Babichenko, Yakov and Smorodinsky, Rann and Yamashita, Takuro},
  title     = {Optimal persuasion via bi-pooling},
  number    = {1},
  pages     = {15--36},
  volume    = {18},
  journal   = {Theoretical Economics},
  publisher = {Wiley Online Library},
  doi       = {10.3982/TE4663},
  year      = {2023},
}

@Article{Nutz2024martingale,
  author    = {Nutz, Marcel and Wang, Ruodu and Zhang, Zhenyuan},
  title     = {Martingale transports and {Monge} maps},
  number    = {6},
  pages     = {5556--5577},
  volume    = {34},
  journal   = {The Annals of Applied Probability},
  publisher = {Institute of Mathematical Statistics},
  doi       = {10.1214/24-AAP2099},
  year      = {2024},
}

@Article{Roesler2017buyer,
  author    = {Roesler, Anne-Katrin and Szentes, Bal{\'a}zs},
  title     = {Buyer-optimal learning and monopoly pricing},
  number    = {7},
  pages     = {2072--2080},
  volume    = {107},
  journal   = {American Economic Review},
  publisher = {American Economic Association 2014 Broadway, Suite 305, Nashville, TN 37203},
  doi       = {10.1257/aer.20160145},
  year      = {2017},
}

@Article{Zeng2023derandomization,
  author       = {Zeng, Yishu},
  journal      = {Journal of Economic Theory},
  title        = {Derandomization of persuasion mechanisms},
  pages        = {105690},
  volume       = {212},
  publisher    = {Elsevier},
  doi          = {10.1016/j.jet.2023.105690},
  year         = {2023},
}

@Unpublished{Kolotilin2024monotone,
  author = {Kolotilin, Anton and Li, Hongyi and Zapechelnyuk, Andriy},
  title  = {On Monotone Persuasion},
  note   = {Working paper, available at \url{https://arxiv.org/abs/2412.14400}.},
  year   = {2024},
}

@Unpublished{Sun2026competition,
  author = {Sun, Junze},
  note   = {Working paper, available at \url{https://papers.ssrn.com/sol3/papers.cfm?abstract\_id=5587750}.},
  title  = {On competition in linear persuasion games},
  year   = {2026},
}

@Article{SapiroGheiler2024,
  author  = {Sapiro-Gheiler, Eitan},
  title   = {Persuasion with ambiguous receiver preferences},
  journal = {Economic Theory},
  year    = {2024},
  volume  = {77},
  number  = {4},
  pages   = {1173--1218},
  doi     = {10.1007/s00199-023-01522-z},
}

@Article{BeiglbockJuillet2016,
  author  = {Beiglb{\"o}ck, Mathias and Juillet, Nicolas},
  title   = {On a problem of optimal transport under marginal martingale constraints},
  journal = {The Annals of Probability},
  year    = {2016},
  volume  = {44},
  number  = {1},
  pages   = {42--106},
  doi     = {10.1214/14-AOP966},
}

@Article{Kolotilin2025persuasion,
  author  = {Kolotilin, Anton and Corrao, Roberto and Wolitzky, Alexander},
  title   = {Persuasion and matching: Optimal productive transport},
  journal = {Journal of Political Economy},
  year    = {2025},
  volume  = {133},
  number  = {4},
  pages   = {1334--1381},
  doi     = {10.1086/734095},
}

@Article{Chen2025information,
  author    = {Chen, Yi-Chun and Hu, Gaoji and Yang, Xiangqian},
  journal   = {International Economic Review},
  title     = {Information design in allocation with costly verification},
  year      = {2025},
  number    = {3},
  pages     = {1267--1285},
  volume    = {66},
  publisher = {Wiley Online Library},
}

@Unpublished{Chen2025explainable,
  author = {Chen, Yiling and Lin, Tao and Tang, Wei and Tucker-Foltz, Jamie},
  note   = {Available at arXiv preprint arXiv:2508.14196},
  title  = {Explainable information design},
  year   = {2025},
}

@Unpublished{Arieli2025constructive,
  author = {Arieli, Itai and Babichenko, Yakov and Sandomirskiy, Fedor},
  note   = {Working paper.},
  title  = {{Constructive Blackwell's Theorem}},
  year   = {2025},
}

@Article{Chen2019equivalence,
  author    = {Chen, Yi-Chun and He, Wei and Li, Jiangtao and Sun, Yeneng},
  journal   = {Econometrica},
  title     = {Equivalence of stochastic and deterministic mechanisms},
  year      = {2019},
  number    = {4},
  pages     = {1367--1390},
  volume    = {87},
  publisher = {Wiley Online Library},
}
	
\end{document}